\def         \ad       {\text{adj}}
\numberwithin{equation}{section}
\newtheorem{thm}                    {Theorem}
\newtheorem{lem}[equation]      {Lemma}
\newtheorem{prop}[equation]     {Proposition}
\newtheorem{cor}[equation]          {Corollary}
\newtheorem{dfn}{Definition}
\def\derr#1{ \frac{d\,{#1}}{d\,r}}
\def\gL{{\ifmmode \Lambda \else $\Lambda $ \fi }}
\def\gs{{\ifmmode \sigma \else $\sigma $ \fi }}
\def\gk{{\ifmmode \kappa \else $\kappa $ \fi }}
\def\gl{{\ifmmode \lambda \else $\lambda $ \fi }}
\def\gm{{\ifmmode \mu \else $\mu $ \fi }}
\def\df{\stackrel{\rm def}{=}}
\def\cA{{\ifmmode \cal A \else $\cal A $ \fi }}
\def\cG{{\ifmmode \cal G \else $\cal G $ \fi }}
\def\cC{{\ifmmode \cal C \else $\cal C $ \fi }}
\def\gz{{\ifmmode \zeta \else $\zeta $ \fi }}
\def\cL{{\ifmmode \cal L \else $\cal L $ \fi }}
\def\cF{{\ifmmode \cal F \else $\cal F $ \fi }}
\def\cM{{\ifmmode \cal M \else $\cal M $ \fi }}
\def\fL{{\ifmmode \frak L \else $\frak L$ \fi}}
\def\gth{{\ifmmode \theta \else $\theta $ \fi }}
\def\fS{{\ifmmode \frak S \else $\frak S$ \fi}}
\def\cW{{\ifmmode \cal W \else $\cal W $ \fi }}
\def\gz{{\ifmmode \zeta \else $\zeta $ \fi }}
\def\fz{{\ifmmode \frak z \else $\frak z$ \fi}}
\def\cZ{{\ifmmode \cal Z \else $\cal Z $ \fi }}
\def\fF{{\ifmmode \frak F \else $\frak F$ \fi}}
\def\cH{{\ifmmode \cal H \else $\cal H $ \fi }}
\def\fD{{\ifmmode \frak D \else $\frak D$ \fi}}
\def\BH{{\ifmmode \Bbb H\else $\Bbb H$ \fi }}
\def\gep{{\ifmmode \epsilon \else $\epsilon $ \fi }}
\def\cI{{\ifmmode \cal I \else $\cal I $ \fi }}
\def\cJ{{\ifmmode \cal J \else $\cal J $ \fi }}
\def\fJ{{\ifmmode \frak J \else $\frak J$ \fi}}
\def\gn{{\ifmmode \nu \else $\nu $ \fi }}
\def\fN{{\ifmmode \frak N \else $\frak N$ \fi}}
\def\cP{{\ifmmode \cal P \else $\cal P $ \fi }}
\def\fM{{\ifmmode \frak M \else $\frak M$ \fi}}
\def\fI{{\ifmmode \frak I \else $\frak I$ \fi}}
\def\fX{{\ifmmode \frak X \else $\frak X$ \fi}}
\def\fW{{\ifmmode \frak W \else $\frak W$ \fi}}
\def\gb{{\ifmmode \beta \else $\beta $ \fi }}
\def\fc{{\ifmmode \frak c \else $\frak c$ \fi}}
\def\cX{{\ifmmode \cal X \else $\cal X $ \fi }}
\def\cD{{\ifmmode \cal D \else $\cal D $ \fi }}
\def\gt{{\ifmmode \tau \else $\tau $ \fi }}
\def\ga{{\ifmmode \alpha \else $\alpha$ \fi }}
\def\BR{{\ifmmode \Bbb R\else $\Bbb R$ \fi }}
\def\fV{{\ifmmode \frak V \else $\frak V$ \fi}}
\def\gD{{\ifmmode \Delta \else $\Delta $ \fi }}
\def\fR{{\ifmmode \frak R \else $\frak R$ \fi}}
\def\fC{{\ifmmode \frak C \else $\frak C$ \fi}}
\def\fA{{\ifmmode \frak A \else $\frak A$ \fi}}
\def\fs{{\ifmmode \frak s \else $\frak s$ \fi}}
\def\fw{{\ifmmode \frak w \else $\frak w$ \fi}}
\def\cB{{\ifmmode \cal B \else $\cal B $ \fi }}
\def\fm{{\ifmmode \frak m \else $\frak m$ \fi}}
\def\cQ{{\ifmmode \cal Q \else $\cal Q $ \fi }}
\def\cY{{\ifmmode \cal Y \else $\cal Y $ \fi }}
\def\fQ{{\ifmmode \frak Q \else $\frak Q$ \fi}}
\def\fO{{\ifmmode \frak O \else $\frak O$ \fi}}
\def\fa{{\ifmmode \frak a \else $\frak a$ \fi}}\def\gd{{\ifmmode \delta \else $\delta $ \fi }}
\begin{document}
\begin{titlepage}
\title{On the Cowling Approximation: A Verification of Ansatz via Methods of Functional and Asymptotic Analysis}
\author{ Christopher J. Winfield}
\address{Department of Mathematics and Statistics\
Oakland University\\
Rochester, MI
} \email{} \pagestyle{myheadings}
\markboth{Draft: C. Winfield}{Report}

\end{titlepage}

\begin{abstract}
We study the Cowling approximation by analytical means as applied to a system of linear differential equations arising from models of non-radial stellar pulsation. 
We consider various asymptotic cases, including those of high harmonic degree and high oscillation frequency.
Our methods involve a reformulation of the system in terms of an integro-differential equation 
for which certain Hilbert-space methods apply. By way of a more complete asymptotic study, we extend our results to certain fundamental solution sets,  characterized according to certain multi-point boundary-value problems: Such asymptotics further enable us to produce sharp estimates as confirmation of our general results.

\begin{sloppypar}
  \end{sloppypar}
\end{abstract}
\maketitle
\section{Introduction}\label{introsec}
We study approximation methods applied to the system
\begin{align} \derr{u}&=\frac{g}{c^2}u+\left[\gL_{\ell}-\gs^2\frac{r^2}{c^2}\right]\eta +\frac{r^2}{c^2}\Phi \label{otherfirst}\\
\derr{\eta}&=\frac{1}{r^2}\left(1-\frac{N^2}{\gs^2}\right)u+\frac{N^2}{g}\eta -\frac{N^2}{\gs^2g}\Phi\label{othersecond}\\
\frac{1}{r^2}\derr{\phantom{r}}\left(r^2\derr{\Phi}\right)&-\left[\frac{\gL_{\ell}}{r^2}-\frac{\gk\rho}{c^2}\right]\Phi =\gk\rho\left(\frac{N^2}{r^2g}u+\frac{\gs^2}{c^2}\eta\right)\label{otherthird}
\end{align}
(Eisenfeld and Smeyers - ES) (See \cite{svh} and see \cite{acdk,co,lw,uoas} for a general introduction to the theory of stellar pulsation.)
which has smooth, real-valued coefficients on intervals $0<a\leq r\leq b<\infty$ and parameters $\gL_{\ell},\gs>0.$   
Here dependent variables $\eta$ and $u$ arise from scalar potentials of Lagrangian displacement $\delta_{_L}\vec{r}$ and
$\Phi$ by the Eulerian perturbation of the gravitational potential (say $\delta_{_E}V_{\text{grav}}$)
\footnote{We reserve the notation $\phantom{f}\prime \leftrightarrow \derr{}$ and denote Lagrangian and Eulerian perturbations by $\delta_{_L}$ and $\delta_{_E},$ respectively.}.  In particular, our independent variables amount to
being coefficients derived from spherical-harmonic expansions of the form 
$$f(r,\theta,\phi)=\sum_{\ell=1}^{\infty}\sum_{m=-\ell}^{\ell}f_{\ell,m}(r)Y_{\ell}^m(\theta,\phi)$$ (cf. Section 5.7 \cite{svh}) for
spheroidal normal modes
along with Fourier expansions separating the time
variable:
Here, parameters $\gL_{\ell}$ $\df\ell(\ell+1)$ for given (harmonic) degree $\ell = 1, 2, \hdots,$  and (time) frequency of modes $\gs,$ which we take to be positive. The  remaining physical quantities, depending on $r,$ (also denoting the position function) are as follows:  
 $c$ is the adiabatic speed of sound; $g$ is the acceleration of gravity; and, $N$ is the Brunt-V\"ais\"ail\"a frequency, given by $N^2=-g(\frac{g}{c^2}+\frac{\rho'}{\rho}).$ Here, $N^2$ is taken to be non-negative to assure real $\gs$ and, in turn, dynamic stability;
 moreover, the case $N^2=0$ is called adiabatic (isentropic) equilibrium (cf. Sections 3.5, 13.2 \& 13.3 \cite{svh}).
Finally, we also use an alternative formulation via the change of variable $y_{_{LW}} \df$ $\gs^2 \eta$ $-$ $\Phi$  (Ledoux-Walraven - LW) with $\rho y_{_{LW}} $ $=$ $\delta_{_E}P,$ for pressure $P(r):$
We find this formulation convenient for high-frequency asymptotic estimates, $\gs^2$ $\rightarrow\infty.$
(Yet, our analysis of the resulting LW differential system will be limited since
particular solutions of ES redundantly arise after an application of integration by parts.)

If we  set $\vec{X}$ $=\left(u,\eta,\Phi,\Phi'\right)^{\top},$ the full system (\ref{otherfirst})-(\ref{otherthird}) can be 
written in the form
\begin{equation}\derr{}\vec{X}=A_{\gl,\gm}\vec{X}\label{full}\end{equation}
(introducing parameters $\gl,\gm$ used in Appendix A) for 
$$A_{\gl,\gm} \df\begin{bmatrix}\cA &\gl\vec{\cG}_{\gs} &\begin{bmatrix}0\\0\end{bmatrix}\\
\begin{bmatrix}0&0\end{bmatrix}&0&1\\
\gm\vec{\cC}^{\,\,\top}_\gs&\frac{\gk h+\gL_{\gz}}{r^{2}}&-\frac{2}{r} 
\end{bmatrix}
$$
where, setting $h\df r^2\rho/c^2$ 
 and  
$$\cA\df
\begin{bmatrix}\frac{g}{c^2}&\left[\gL_{\ell}-\gs^2\frac{r^2}{c^2}\right]\\ 
\frac{1}{r^2}\left(1-\frac{N^2}{\gs^2}\right)&\frac{N^2}{g}\ 
\end{bmatrix}
;\,\,\vec{\cG}_{\gs}\df\begin{bmatrix}\frac{r^2}{c^2}\\-\frac{N^2}{\gs^2g}
\end{bmatrix}
;\,\,\vec{\cC}_{\gs}\df\begin{bmatrix}\frac{\gk N^2\rho}{g}\\ \gs^2\gk h
\end{bmatrix}
$$ 
(For simplicity, we set $A\df A_{1,1}$.)
We express solution vectors $\vec{X}$ in terms of {\it component vectors} $\vec{Y}\df (u, \eta)^{\top}$  and $\vec{\Phi}$ $\df (\Phi,\Phi')^{\top}$  whereby $\vec{X}^{\top}$ $=(\vec{Y}^{\top},\vec{\Phi}^{\top})$
in block form. 
Here, the Cowling approximation is the ansatz that, under certain conditions, components $u,\eta$ of solutions to $\vec{X}'=A\vec{X}$ can be approximated by corresponding components of solutions to $\vec{Y}'=\cA\vec{Y},$ one such condition being that $\ell$ is large. As we
will see below, this anzatz amounts to neglecting $\Phi$ and, thus, decoupling equations within the system.

In order to describe our approach, further elaboration is needed:
 We reformulate the (full) system via equations of the form (dropping parameters  $\ell,\gs$)
\begin{align}\derr{}\vec{Y}&=\cA\vec{Y}+\Phi\vec{\cG}\label{formb1}
\\
\cL[\Phi]&\df  \derr{}\left(r^2\derr{\Phi}\right) -(\gL-\kappa h)\Phi=\vec{\cC}\cdot\vec{Y}\label{formb2}
\end{align}
which we combine to form an integro-differential equation 
\begin{equation}\cL[\Phi]=\cF[\Phi]+\vec{\cC}\cdot\vec{Y}_0\label{main}\end{equation}
so that, under appropriate conditions, the following hold:
\begin{equation}\cF[\Phi](r)=\int_a^bF(r,t)\Phi(t)\,dt \label{calF}\,\text{and}\end{equation} 
\begin{equation}F(r,t)=(\vec{\cC}(r))^{\top}\cM(r,t)\vec{\cG}(t)\label{F}\end{equation}
where $\cM$ is a Green's matrix for equation (\ref{formb1}); and,
the component vectors $\vec{Y}$ take the form 
\begin{equation}\label{Ys}
\vec{Y}=\int_a^b \cM(r,t)\vec{\cG}(t)\Phi(t)\,dt +\vec{Y}_0\df \vec{Y}_p+\vec{Y}_0 
\end{equation}
where $\vec{Y}_0$ is a solution to
\begin{equation}\label{residual}
\vec{Y}'=\cA\vec{Y}.
\end{equation}
Here, the associated (homogeneous) equation (\ref{residual}), what remains of (\ref{formb1}) when $\Phi\equiv 0,$  will be called the {\it residual equation} and its solutions $\vec{Y}_0$ will be called {\it residual solutions} whose components comprise the so-called Cowling approximation.
 
This ansatz, whereby $\Phi$ is negligible and $\vec{Y}_0$ closely approximates $\vec{Y},$ will be tested under 
large parameters, $\gL_{\ell},$ and/or $\gs^2.$ To this end, we establish some notation:
We introduce a (continuous) parameter 
$\gz>0$ in establishing some general asymptotic properties and estimates, sometimes replacing a discrete parameter. 
We use standard $O-$notation, on matrices as well as functions, where the
implied bounds are uniform on $[a,b],$ or on certain subintervals, and where the limit at infinity in the parameter is understood, unless otherwise specified. 
We may further choose to insert, drop or reassign parameters when held constant or otherwise redefined, with any alterations made clear, in our various cases. We will also introduce notation for special boundary conditions, as we develop below.

In our formulation of the approximation scheme we will invoke various multi-point boundary conditions, to be imposed on
components of solutions or together with those of residual solutions.
We will consider St\"{u}rm-Liouville operators (SL) in the form
$\fL[w]=$ $(p(r)w')'+q(r)w$ (cf.\cite{cl,birkrota}) with self-adjoint (SA) boundary conditions (BC) imposed on $w.$
For parameters $0\leq\gth_1<\pi,$ $0<\gth_2\leq \pi$ and $\vec{\gth}\df$ $(\gth_1,\gth_2)$  we denote
\begin{equation}\fS_{\fL}(\vec{\gth}):\,  \begin{bmatrix}-\cos\theta_1\\ p(a)\sin\theta_1\end{bmatrix} \cdot
\begin{bmatrix}w(a)\\w'(a)\end{bmatrix}
= \begin{bmatrix}-\cos\theta_2\\ p(b)\sin\theta_2\end{bmatrix} \cdot 
\begin{bmatrix}w(b)\\w'(b)\end{bmatrix}=0
\label{SLBC}
\end{equation}
We will also introduce  boundary conditions on component vectors $\vec{Y},\vec{Y}_0$ and associated notation:
Let $E_1,E_2$ denote standard $\Bbb R^2$ basis vectors and with $j,k=1,2$  denote
\begin{equation}\Pi_{j,k}: E_j \cdot \vec{Y}(a) =A;\, E_k\cdot \vec{Y}(b) = B
\label{Pi} \end{equation}
where such BCs (\ref{Pi}) will be said to be of $\Pi_{j,k}$ {\it type}.  Moreover, 
the associated vectors $\vec{Y}$ and $\vec{Y}_{0}$ of (\ref{Ys}) will be said to have {\em boundary agreement} $\Pi_{j,k}$ if 
(\ref{Pi}) holds with $A=B=0$ as applied to $\vec{Y}_p=$ $\vec{Y}-\vec{Y_{0}}.$ Moreover, we will apply the same
notation for such boundary conditions on component vectors $\vec{\Phi}.$ We note finally that any such boundary conditions or boundary agreement may depend on parameter $\gz$ by
varying $\theta_j$'s or by varying the interval $[a,b]$.

We will introduce a sufficient condition for which various two-point boundary conditions hold. 
Supposing $\vec{Y}_{0,1},$  $\vec{Y}_{0,2}$ form a basis of solutions to the residual equation,
we denote the associated determinant
\begin{equation}\notag
\cW_{j,k}(\gz)\df\left|\begin{matrix}u_{0,1}(r_j,\gz)&u_{0,2}(r_j,\gz)\\\eta_{0,1}(r_k,\gz)&\eta_{0,2}(r_k,\gz)
\end{matrix}
\right|
\end{equation} for $r_1=a,r_2=b$ (unless otherwise specified)
and are ready to introduce
\begin{prop}\label{cW} If $\cW_{j,k}\neq 0,$ then (\ref{Pi}) holds for any $A,B$ satisfied by unique $\vec{Y}$ and $\vec{Y}_0$ of (\ref{Ys}) so that 
$\vec{Y}$ and $\vec{Y}_0$ have $\Pi_{j,k}$ boundary agreement.
\end{prop}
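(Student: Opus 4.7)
The plan is to collapse the two-point conditions on the full vector $\vec{Y}$ into two-point conditions on the residual part $\vec{Y}_{0}$ alone, and then solve a $2\times 2$ linear system whose coefficient determinant is exactly $\cW_{j,k}$.

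First, I would invoke the boundary agreement hypothesis, which by definition amounts to $E_{j}\cdot\vec{Y}_{p}(a)=0$ and $E_{k}\cdot\vec{Y}_{p}(b)=0$ for $\vec{Y}_{p}=\vec{Y}-\vec{Y}_{0}$ the integral part of (\ref{Ys}). Substituting into (\ref{Pi}), the conditions on $\vec{Y}$ convert one-for-one into
\[ E_{j}\cdot\vec{Y}_{0}(a)=A,\qquad E_{k}\cdot\vec{Y}_{0}(b)=B, \]
so the integral contribution drops out at the endpoints and the two-point problem reduces to one purely on the residual solution.

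Next, I would expand $\vec{Y}_{0}=c_{1}\vec{Y}_{0,1}+c_{2}\vec{Y}_{0,2}$ in the given basis of solutions to (\ref{residual}) and substitute into these two scalar conditions. This yields a $2\times 2$ linear system in the unknowns $(c_{1},c_{2})$ whose coefficient matrix, up to at most one row interchange, is exactly the matrix whose determinant appears in the definition of $\cW_{j,k}(\gz)$ with $r_{1}=a,r_{2}=b$. The hypothesis $\cW_{j,k}\neq 0$ is precisely its invertibility, so Cramer's rule gives unique $(c_{1},c_{2})$ and hence a unique residual solution $\vec{Y}_{0}$. Once $\vec{Y}_{0}$ is fixed, the integro-differential equation (\ref{main}) determines $\Phi$, and then (\ref{Ys}) recovers $\vec{Y}_{p}$ and $\vec{Y}$ uniquely.

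The only real subtlety, and essentially the only obstacle, is bookkeeping to confirm that the coefficient matrix of the reduced system matches (up to sign) the determinant defining $\cW_{j,k}$. This amounts to tracking which component of $\vec{Y}_{0,m}$ (either $u_{0,m}$ or $\eta_{0,m}$) is selected by $E_{j}$ at $a$ and by $E_{k}$ at $b$, and reconciling with the convention used in the definition of $\cW_{j,k}$. Once aligned, the argument is pure linear algebra and the conclusion is immediate.
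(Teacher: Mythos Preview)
Your argument is essentially the paper's: both recognize that $\cW_{j,k}\neq 0$ is exactly the admissibility condition for the two-point problem $\Pi_{j,k}$ on the residual equation, so standard linear BVP theory yields the unique $\vec{Y}_0$ and the associated Green's matrix defining $\vec{Y}_p$; the paper compresses your explicit $2\times 2$ linear algebra into a citation of \cite{col}. One small overreach: the proposition concerns only the decomposition $\vec{Y}=\vec{Y}_p+\vec{Y}_0$ with $\Phi$ regarded as given, so your final step invoking (\ref{main}) to determine $\Phi$ is extraneous here (and in fact relies on the later Theorem~\ref{thm1}).
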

\begin{proof}
The boundary conditions in each case are clearly admissible for the associated residual equation (\ref{residual}); and, standard arguments 
for two-point boundary problems (cf. \cite{col}) show that $\vec{Y}_p$ and $\vec{Y}_0$ as in (\ref{Ys}) are thereby uniquely determined.
\end{proof}
The following holds immediately from the construct (\ref{Ys}):
\begin{prop} Under the hypothesis of Proposition \ref{cW} any solution of $\Phi\in$ $C^2([a,b])$ of (\ref{main}) yields a unique
 solution to the system (\ref{formb1}),  (\ref{formb2}) via the correspondence $\vec{X}$ $\rightarrow(u,\eta,\Phi,\Phi')^{\top}$
for those boundary conditions of type $\Pi_{j,k}$ that are admissible to  (\ref{formb1}).
\end{prop}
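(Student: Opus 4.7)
The plan is to take a given $\Phi\in C^2([a,b])$ solving (\ref{main}) and explicitly construct the candidate $\vec{X}=(u,\eta,\Phi,\Phi')^{\top}$ by reading off $\vec{Y}=(u,\eta)^{\top}$ from formula (\ref{Ys}), then verify the two differential identities (\ref{formb1}) and (\ref{formb2}) in turn, and finally conclude uniqueness from Proposition \ref{cW}. Specifically, given $\Phi$, I would choose a basis $\vec{Y}_{0,1},\vec{Y}_{0,2}$ of the residual equation (\ref{residual}) so that the prescribed $\Pi_{j,k}$ boundary conditions on $\vec{Y}$ fix $\vec{Y}_0$ uniquely (this is exactly where admissibility enters, and where $\cW_{j,k}\neq 0$ from Proposition \ref{cW} is used). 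Then set $\vec{Y}=\vec{Y}_p+\vec{Y}_0$.

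To verify (\ref{formb1}), I would differentiate $\vec{Y}_p(r)=\int_a^b \cM(r,t)\vec{\cG}(t)\Phi(t)\,dt$ using the defining property of a Green's matrix $\cM$ for the operator $\partial_r-\cA$: one gets the expected inhomogeneous term $\Phi(r)\vec{\cG}(r)$ together with $\cA\vec{Y}_p$, and since $\vec{Y}_0$ satisfies the residual equation the sum $\vec{Y}$ satisfies $\vec{Y}'=\cA\vec{Y}+\Phi\vec{\cG}$. To verify (\ref{formb2}), I would substitute the definition (\ref{F}) of the kernel into (\ref{calF}) and interchange the $\vec{\cC}(r)^{\top}$ with the integral (a trivial step since $\vec{\cC}$ is independent of $t$), obtaining
\[
\cF[\Phi](r)=\vec{\cC}(r)^{\top}\!\int_a^b \cM(r,t)\vec{\cG}(t)\Phi(t)\,dt=\vec{\cC}(r)\cdot\vec{Y}_p(r).
\]
Plugging this into (\ref{main}) then gives $\cL[\Phi]=\vec{\cC}\cdot\vec{Y}_p+\vec{\cC}\cdot\vec{Y}_0=\vec{\cC}\cdot\vec{Y}$, which is precisely (\ref{formb2}). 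Regularity of $\vec{Y}$ follows from $\Phi\in C^2$ and smoothness of the coefficients and of $\cM$ off the diagonal, combined with the continuity of $\cM(r,t)$ across $t=r$.

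For uniqueness, suppose two solutions $\vec{X}^{(1)},\vec{X}^{(2)}$ of the full system (\ref{formb1})--(\ref{formb2}) share the same $\Phi$-component and the same admissible $\Pi_{j,k}$ boundary data. Then the $\vec{Y}$-difference $\vec{Y}^{(1)}-\vec{Y}^{(2)}$ solves the residual equation (\ref{residual}) and satisfies the homogeneous $\Pi_{j,k}$ conditions $E_j\cdot(\cdot)(a)=E_k\cdot(\cdot)(b)=0$. By the determinant condition $\cW_{j,k}\neq 0$ underlying Proposition \ref{cW}, the only residual solution satisfying those conditions is the zero solution, so the two $\vec{Y}$'s coincide, and hence so do $\vec{X}^{(1)}$ and $\vec{X}^{(2)}$.

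No step here is genuinely hard — the proposition is essentially a corollary of Proposition \ref{cW} together with the defining properties of $\cM$ and the kernel $F$. The only point requiring mild care is the bookkeeping of which $\Pi_{j,k}$ conditions are ``admissible to (\ref{formb1}),'' i.e., which pairings $(j,k)$ yield a well-posed two-point problem for the residual equation; this is exactly the non-vanishing of $\cW_{j,k}$, so the admissibility hypothesis and the hypothesis of Proposition \ref{cW} dovetail and make the correspondence $\Phi\leftrightarrow\vec{X}$ bijective on the relevant solution spaces.
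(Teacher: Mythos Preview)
Your proposal is correct and is precisely the unpacking of what the paper asserts: the paper offers no proof beyond the single remark that the result ``holds immediately from the construct (\ref{Ys}),'' and your verification of (\ref{formb1}) via the Green's-matrix property of $\cM$, of (\ref{formb2}) via $\cF[\Phi]=\vec{\cC}\cdot\vec{Y}_p$, and of uniqueness via $\cW_{j,k}\neq 0$ is exactly what that remark means when spelled out. There is no alternative route to compare --- you have simply made explicit what the paper leaves implicit.
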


We will let $\cC_{\gs}$ $\df$ $\text{diag}(\vec{\cC}^{\,\top}_{\gs})$ and, under appropriate conditions, we will find estimates of the form 
\begin{align}\|\Phi\|&\leq \fz\|\vec{\cC_{\gs}}\cdot\vec{Y}_0\|
\label{weakCA}\\
\|\cC_{\gs}(\vec{Y}-\vec{Y}_0)\|_{\infty}&\leq \cZ\|\cC_{\gs}\vec{Y}_0\|_{\infty}\label{strongCA}
\end{align}
under the standard $L^2$ and supremum norms, respectively, on $[a,b]$ - extended also to matrices: Here, the respective modulii $\fz,\cZ$ take positive values
and depend (perhaps) only on a large parameter.
\begin{dfn}
We will say that {\it weak-CA} ({\it strong- CA}) holds if, for a parameter $\gz>0$ (say), the modulii $\fz$ (resp. $\cZ$) tends to $0$ in the limit as $\gz$ $\rightarrow$ $\infty.$ 
\end{dfn}
Finally, we remark that the inequality $\|\vec{\cC}_{\gs}\cdot\vec{Y}_0\|\leq $ $\sqrt{b-a}\|\cC_{\gs} \vec{Y}_0\|_{\infty}$ 
suggests, considering right-hand sides, that
(\ref{strongCA}) is not a stronger condition than (\ref{weakCA}); yet, (\ref{strongCA}) conveniently leads directly to a supremum estimate (rather than $L^2$) that directly compares  $\eta-\eta_0$ to 
 $\eta_0$ and, if $N^2>0,$ compares $u-u_0$ to $u_0.$ Moreover, we find these formulations to be convenient in explicit calculations. 

By imposing appropriate two-point boundary conditions, our main approach in this work will be to restrict the problem to that where $\cL$ is self-adjoint and where $\cF$ is, or is closely approximated by, a symmetric operator.  We will then be using a theorem from \cite{a} which applies to integro-differential equations of the general form
 \begin{equation}\fL[y]=\fF[y] +f_0\label{basiceqn}\end{equation}  
 for equations involving linear operators on a Hilbert space $\cH$ (in our case $L^2([a,b])$  and on subspaces thereof: 
 \begin{thm} \label{thm1}

Suppose $\fD$ is a closed subspace of a Hilbert space $\BH$ with $\fL:$ $\fD\rightarrow \BH$ and $\fF:$ $\BH\rightarrow$ $\BH$ as in (\ref{basiceqn})
where $\fL$ is self-adjoint.
\begin{itemize}
\item[i)]
Suppose further that $\fF$ is symmetric and that there is a constant $\gamma>0$ so that
 \begin{equation}
\left<\fL[y]-\fF[y],y \right><-\gamma \|y\|^2\notag
\end{equation}
 $\forall y\in \fD.$ Then,  (\ref{basiceqn}) has a unique solution $y_*\in\fD:$ Moreover, the solution satisfies 
 \begin{equation}\|y_*\|\leq \frac{1}{\gamma}\|f_0\|\label{mainresult}\end{equation} 
\item[ii)]
 Suppose instead that there is a constant $\gep>0$ and a symmetric operator $\fF_s$ so that 
$\|\fF-\fF_s\|\leq\gep$ and that 
\begin{equation}
\left<\fL[y]-\fF_s[y]+\gep y ,y \right><-\gamma \|y\|^2\label{mainineq2}
\end{equation}
$\forall y\in \fD.$ Then, (\ref{mainresult}) holds uniquely for some $y_*$ as in i).
\end{itemize}
 \end{thm}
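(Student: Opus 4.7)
The plan is to handle parts (i) and (ii) in turn, reducing (i) to the inversion of a self-adjoint operator and then bootstrapping to (ii) via a Neumann-series argument. For part (i), I first note that $\fF\colon\BH\to\BH$ is everywhere defined and symmetric, so the Hellinger--Toeplitz theorem forces $\fF$ to be bounded and hence self-adjoint. Consequently $\fA:=\fL-\fF$ is self-adjoint on $\fD$ as a bounded self-adjoint perturbation of the self-adjoint operator $\fL$. The hypothesis $\langle\fA y,y\rangle<-\gamma\|y\|^2$, combined with Cauchy--Schwarz, gives the lower bound $\|\fA y\|\geq\gamma\|y\|$ on $\fD$; this yields injectivity together with closed range, while self-adjointness identifies $\operatorname{ran}(\fA)^{\perp}=\ker(\fA)=\{0\}$. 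Thus $\fA$ is a bijection of $\fD$ onto $\BH$ with $\|\fA^{-1}\|\leq 1/\gamma$, and setting $y_{*}:=\fA^{-1}f_{0}$ delivers both unique solvability and the norm bound (\ref{mainresult}).

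For part (ii), write $\fN:=\fF-\fF_{s}$, so $\|\fN\|\leq\gep$ and $|\langle\fN y,y\rangle|\leq\gep\|y\|^2$ by Cauchy--Schwarz. The operator $\fA_{s}:=\fL-\fF_{s}$ is self-adjoint by the same Hellinger--Toeplitz argument as above, and (\ref{mainineq2}) combined with the preceding bound yields
\[
\langle\fA_{s}y,y\rangle<-(\gamma+\gep)\|y\|^2.
\]
Part (i) applied to $\fA_{s}$ gives invertibility with $\|\fA_{s}^{-1}\|\leq 1/(\gamma+\gep)$. Rewriting (\ref{basiceqn}) as $\fA_{s}y=\fN y+f_{0}$ and inverting via $y=\fA_{s}^{-1}(I-\fN\fA_{s}^{-1})^{-1}f_{0}$---noting that $\|\fN\fA_{s}^{-1}\|\leq\gep/(\gamma+\gep)<1$ so the Neumann series for $(I-\fN\fA_{s}^{-1})^{-1}$ converges on $\BH$---produces a unique $y_{*}\in\fD$. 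The product estimate
\[
\|y_{*}\|\,\leq\,\frac{1}{\gamma+\gep}\cdot\frac{1}{1-\gep/(\gamma+\gep)}\,\|f_{0}\|\,=\,\frac{1}{\gamma}\|f_{0}\|
\]
recovers (\ref{mainresult}), which is the same bound as in (i).

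The principal technical point, and the only one requiring care, is justifying self-adjointness of $\fA$ and $\fA_{s}$ on the given domain $\fD$: this reduces to boundedness of $\fF$ (resp.\ $\fF_{s}$), supplied cost-free by Hellinger--Toeplitz since both are symmetric operators defined on all of $\BH$. A secondary but routine point is that the pointwise strict inequality $\langle\fA y,y\rangle<-\gamma\|y\|^2$ must be read as furnishing the weak estimate $\langle\fA y,y\rangle\leq -\gamma\|y\|^2$ (by continuity), which is what feeds the coercivity-type lower bound $\|\fA y\|\geq\gamma\|y\|$. Everything else is then a standard application of self-adjoint-operator inversion together with a contraction-mapping argument, so no genuine obstacle is expected beyond the clean bookkeeping of the constants $\gamma$ and $\gep$ in (ii).
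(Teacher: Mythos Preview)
Your proof is correct and is genuinely different from the paper's. The paper does not argue directly: it simply identifies the equation (\ref{basiceqn}) as a special instance of the abstract framework of Amann \cite{a}, taking the Gateaux derivative of $y\mapsto\fF[y]+f_0$ to be $\fF$ itself, setting $B^{-}=B^{+}=\fF$ (respectively $B^{\pm}=\fF_s\pm\gep I$ in case ii)), and choosing the splitting $X^{+}=\{0\}$, $X^{-}=\BH$; the conclusion is then read off from Theorem~2.6 and Lemma~2.5 of \cite{a}. Your argument is instead fully self-contained: Hellinger--Toeplitz gives boundedness of $\fF$ (and $\fF_s$), so $\fA=\fL-\fF$ is self-adjoint by bounded perturbation, coercivity yields the lower bound $\|\fA y\|\ge\gamma\|y\|$, and self-adjointness turns this into bijectivity with $\|\fA^{-1}\|\le 1/\gamma$; part (ii) is then a clean Neumann-series perturbation whose constants telescope exactly to $1/\gamma$. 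The paper's route is shorter on the page but imports substantial external machinery designed for genuinely nonlinear problems; your route is more elementary, uses only standard linear operator theory, and makes transparent why the bound in (ii) matches that in (i). Either approach is adequate here since the nonlinearity in Amann's setting plays no role for the linear $\fF$ at hand.
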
 This is a  slight modificationof Theorem 2.6 of \cite{a} (see also section 5.2 of \cite{c}) to accommodate our applications.
 \begin{proof}
 Here, the Gateaux derivative of the operator $y\rightarrow$ $\fF[y]+f_0$ yields the operator $\fF[y]$ to which
we may equate the operators $B^-$ and $B^+.$ We may further set the spaces $X^+=\{0\}$ and $X^-=\BH,$ whereby
a clear modifiction of Lemma (2.5)  \cite{a} for $Q^+=0$ leads to the conclusion of case i). The result of case ii) likewise follows by setting
 $B^{\pm}=$ $\fF_s\pm\gep I,$ respectively.
 \end{proof}
We note that Theorem \ref{thm1}, the main inspiration of our approach,
may not be the only means by which one establishes the existence and uniqueness of solution components $\Phi$ and that estimates
(\ref{mainineq2}) 
we derive may be obtained by standard eigenvalue and norm estimates. For instance, as we show in Section \ref{s6} (see Appendix B), more specific boundary 
conditions on the components $\vec{\Phi}$ or on both $\vec{Y}$ and $\vec{\Phi}$  may be ascertained via large-parameter asymptotics.

Our main goals will be to verify the following properties
which we describe by 
the following:
\begin{dfn}  A solution $\vec{X}(\cdot,\gz)$ to (\ref{otherfirst})-(\ref{otherthird}) is $\cC$-approximable with respect to $\gz$ (under given BC) if for the corresponding $\vec{Y},\Phi,\vec{Y}_0$
both weak- and strong- CA hold on interval(s) $\cI.$
Also, a basis  $\{\vec{X}_i\}_{i=1}^4$ of said system will be called $\cC$-approximable if each 
basis vector $\vec{X}_i$ is $\cC$-approximable.
\end{dfn}
The article is organized as follows: In Section \ref{s2} we treat the case $N^2= 0$ on $[a,b],$ including various subcases of high degree and/or high frequency, each of which apply to
methods used in the rest the article.
In Section \ref{s3} we then treat the cases where $N^2> 0$ which involve the appearance of terms of various order in the 
parameter and various symmetric operators in the associated integral equation. In Section \ref{s4}, we compute example asymptotic
estimates, to leading order in the parameter, as checks against necessary conditions of some main results.
In Sections \ref{s5} and \ref{s6} (Appendices A and B), we establish 
asymptotic estimates which apply throughout the article, involving those of both full-system and the residual equations.
 \section{Adiabatic Equilibrium}\label{s2}
We will first consider the special case where $N(r)$ identically vanishes on $[a,b],$ whereby $g/c^2 =-\rho'/\rho.$  We may substitute accordingly and apply an integrative factor as we combine (\ref{otherfirst}) and (\ref{othersecond}) to arrive at the following system, with $u=r^2\eta':$ 
 \begin{align}
\cJ[\eta]&\df\derr{\hspace{0in}}\left(r^2\rho\derr{\eta}\right) - \left[\gL_{\ell}-\gs^2\frac{r^2}{c^2}\right]\rho\eta =\frac{r^2\rho}{c^2}\Phi
\label{rcEC1}\\
\cL[\Phi]&\df\derr{\phantom{}}\left(r^2\derr{\Phi}\right)-\left[\gL_{\ell}-\frac{\gk r^2\rho}{c^2}\right]\Phi =\gs^2\frac{\gk r^2\rho}{c^2}\eta\label{rcEC2}
\end{align}
We note that both operators $\cJ$ and $\cL$ are of SL form where self-adjoint boundary conditions will be expressed via (\ref{SLBC}) .
We set $\fJ$ to be the associated Green's function for the operator $\cJ$ so that $\eta$ satisfies $\eta=\eta_p+\eta_0$ where
$\eta_p=\fJ[h\Phi]$ and
\begin{equation}\label{adiabaticinteqn}\cL[\Phi]=\gs^2\kappa h(\fJ[h\Phi]+\eta_0)\df\cF[\Phi]+\gs^2\kappa h\eta_0,
\end{equation}
whereby the components of $\vec{Y}_0$ have the simple relation 
$u_0$ $=\derr{(r^2\eta_0)}$ for $\eta_0$ $\in$ ker$\cJ.$
 We will study this system under various parametrizations which will be indicated in the notation accordingly.  
\subsection{High  Degree and Fixed Frequency}
We insert the parametrization $\gL_{\ell}$ $\rightarrow$ $\gz^2$ for large $\gz>0$ in  (\ref{rcEC1}) and (\ref{rcEC2}) on which both the operators $\cJ_{\gz}$ and  $\cL_{\gz}$
now depend. As an immediate result, we establish a weak-CA estimate involving two-point boundary conditions in a slightly general setting in the following
\begin{prop} \label{frmrthm2}
Given $\vec{\theta},\vec{\vartheta}$ (fixed) and any $\eta_0$ $\in$ ker$\cJ_{\gz}$ 
 there is a solution $\Phi$ to (\ref{adiabaticinteqn}) whereby 
\begin{itemize}
\item[i)]  $\Phi,\eta_p$ satisfy $\fS_{\cL}(\vec{\theta}),$ $\fS_{\cJ}(\vec{\vartheta}),$ respectively; and, 
\item[ii)]
$\|\Phi(\cdot,\gz)\|\leq\fz({\gz})\| h\eta_0(\cdot,\gz)\|
$
for $h=r^2\rho/c^2$ where \newline
$\fz({\gz})$ $= \gz^{-2}$ $(1+o(1))$ as $\gz\rightarrow \infty.$
\end{itemize}
\end{prop}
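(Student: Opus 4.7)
The plan is to apply Theorem \ref{thm1}(i) to equation (\ref{adiabaticinteqn}) on $\BH = L^2([a,b])$, taking $\fL = \cL_{\gz}$ on the closed subspace $\fD \subset L^2$ of functions satisfying $\fS_{\cL}(\vec{\theta})$, $\fF = \cF$, and $f_0 = \gs^2 \kappa h \eta_0$.  First I would verify the operator hypotheses.  Standard Sturm--Liouville theory gives self-adjointness of $\cL_{\gz}$ in $L^2$ under $\fS_{\cL}(\vec{\theta})$.  For the symmetry of $\cF$, observe that $\cF[\Phi](r) = \gs^2 \kappa h(r)\, \fJ_{\gz}[h\Phi](r)$ is an integral operator with kernel $\gs^2 \kappa h(r)\, \fJ_{\gz}(r,t)\, h(t)$; since $\cJ_{\gz}$, written in SL form with coefficient $p = r^2\rho$, is itself self-adjoint under $\fS_{\cJ}(\vec{\vartheta})$, its Green's kernel satisfies $\fJ_{\gz}(r,t) = \fJ_{\gz}(t,r)$, so the kernel of $\cF$ is likewise symmetric in $(r,t)$.

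Next, I would establish the coercivity estimate.  Integration by parts, using vanishing of the boundary terms under $\fS_{\cL}(\vec{\theta})$, yields
\[
\langle \cL_{\gz}[\Phi], \Phi\rangle = -\int_a^b r^2 (\Phi')^2\, dr - \int_a^b (\gz^2 - \kappa h)\Phi^2\, dr \leq -\bigl(\gz^2 - \|\kappa h\|_{\infty}\bigr)\|\Phi\|^2.
\]
For the $\cF$-term, a direct energy estimate on $\cJ_{\gz}$---multiply $\cJ_{\gz}[\eta] = g$ by $\eta$ and integrate, killing boundary contributions by $\fS_{\cJ}(\vec{\vartheta})$---gives $(\gz^2 - C)\rho_{\min}\|\eta\|^2 \leq \|g\|\,\|\eta\|$, so that $\|\fJ_{\gz}\|_{L^2 \to L^2} = O(\gz^{-2})$ and hence $|\langle \cF[\Phi], \Phi\rangle| \leq C'\gz^{-2}\|\Phi\|^2$.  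Combining,
\[
\langle (\cL_{\gz} - \cF)[\Phi], \Phi\rangle \leq -\gz^2\bigl(1 + o(1)\bigr)\|\Phi\|^2,
\]
so Theorem \ref{thm1}(i) applies with $\gamma = \gz^2(1+o(1))$, producing a unique $\Phi \in \fD$ with $\|\Phi\| \leq \gamma^{-1}\|\gs^2 \kappa h \eta_0\| = \fz(\gz)\|h\eta_0\|$ for $\fz(\gz) = \gz^{-2}(1+o(1))$ after absorbing the fixed constants $\gs^2\kappa$.  Part (i) of the proposition is then automatic: membership $\Phi \in \fD$ means $\Phi$ satisfies $\fS_{\cL}(\vec{\theta})$, while $\eta_p = \fJ_{\gz}[h\Phi]$ satisfies $\fS_{\cJ}(\vec{\vartheta})$ by the very definition of the Green's operator.

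The main obstacle is securing the uniform bound $\|\fJ_{\gz}\|_{L^2 \to L^2} = O(\gz^{-2})$.  This requires, first, that $0$ lie outside the spectrum of $\cJ_{\gz}$ for all sufficiently large $\gz$ so that $\fJ_{\gz}$ is defined on all of $L^2$, and second, that the resolvent norm be controlled uniformly in $\gz$.  Both rest on the same coercivity mechanism driving the proof: once $\gz$ is large enough that the $\gz^2\rho$ potential dominates the bounded term $\gs^2(r^2/c^2)\rho$ on $[a,b]$, the quadratic form $\langle -\cJ_{\gz}[\eta], \eta\rangle$ majorizes $(\gz^2 - C)\rho_{\min}\|\eta\|^2$ on the SA domain.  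Once this resolvent bound is in hand, the application of Theorem \ref{thm1} is mechanical.
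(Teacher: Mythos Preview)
Your strategy is exactly the paper's: apply Theorem~\ref{thm1}(i) with $\fL=\cL_\gz$ on the $\fS_{\cL}(\vec\theta)$ domain, $\fF=\cF_\gz=\gs^2\kappa\,h\,\fJ_\gz[h\,\cdot\,]$, and $f_0=\gs^2\kappa h\eta_0$; symmetry of $\cF_\gz$ comes from symmetry of the Green kernel of the self-adjoint $\cJ_\gz$, and the $O(\gz^{-2})$ bound on $\fJ_\gz$ feeds directly into $\gamma=\gz^2(1+o(1))$.

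There is, however, a genuine gap in your coercivity step. You write that integration by parts ``kills the boundary contributions'' under $\fS_{\cL}(\vec\theta)$ (and likewise under $\fS_{\cJ}(\vec\vartheta)$), but for general Robin-type conditions this is false: with $p=r^2$ one has
\[
\langle\cL_\gz[\Phi],\Phi\rangle=\bigl[r^2\Phi'\Phi\bigr]_a^b-\int_a^b r^2(\Phi')^2-\int_a^b(\gz^2-\kappa h)\Phi^2,
\]
and when $\sin\theta_j\neq0$ the bracket equals $\cot\theta_2\,\Phi(b)^2-\cot\theta_1\,\Phi(a)^2$, which need not vanish or even be nonpositive. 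The same issue recurs in your energy estimate for $\cJ_\gz$. The paper sidesteps this by using spectral bounds rather than raw integration by parts: writing $\cJ_\gz=\cJ_0-\gz^2\rho$ and letting $\mu_0$ be the largest eigenvalue of $\cJ_0$ (with weight $\rho$) under $\fS_{\cJ}(\vec\vartheta)$ gives $\langle\cJ_\gz y,y\rangle\le(-\gz^2+|\mu_0|)\rho_{-}\|y\|^2$, and the analogous bound for $\cL_\gz$ via its largest eigenvalue $\nu_0$ yields $\gamma(\gz)=\gz^2-|\nu_0|-\|\cF_\gz\|$. This absorbs the boundary data into the single constant $\mu_0$ (resp.\ $\nu_0$), which is finite precisely because the boundary conditions are self-adjoint. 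Once you replace the two ``boundary terms vanish'' claims by these eigenvalue bounds, your argument goes through and matches the paper's.
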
 
\begin{proof} From (\ref{rcEC1}) and (\ref{rcEC2}) we find that of $\cF_{\gz}$ takes the form 
$$\cF_{\gz}[y]=\gs^2\kappa h \tilde{\fJ}_{\gz}[hy]$$
where $\tilde{\fJ}_{\gz}$ is the symmetric Green's operator
associated with $\cJ_{\gz}$ under $\fS_{\cJ}(\vec{\theta})$. 
We may set $\gm_0$ to be the greatest eigenvalue (depending on $\vec{\theta}$) of the operator $\cJ_{0}$ 
with weight $\rho$ whereby the following holds for all $y\in \cC^2([a,b])$ satisfying $\fS_{\cJ}(\vec{\vartheta}):$
\begin{align}\notag
\langle \cJ_{\gz}[y],y\rangle& \leq \langle \gm_0 \rho y,y\rangle - \langle \gz^2 \rho y,y\rangle\\\notag
&\leq (-\gz^2 +|\gm_0|)\rho_{-} \|y\|^2
\end{align}
for $\rho_{-}\df\min_{r\in[a,b]}\rho(r).$
Then, given $\gz> \sqrt{|\gm_0|},$ the operator $\tilde{\fJ}_{\gz}$ 
extends to a bounded linear operator on $L^2([a,b])$ and
that $\cF_{\gz}$ extends to a symmetric operator whereby
\begin{equation}\|\cF_{\gz}\|\leq \frac{\gs^2\kappa \|h\|^2/\rho_{-}}{\gz^2 - |\gm_0|}\label{cFEst}\end{equation}
Letting $\gn_0$ be the greatest eigenvalue of $\cL_{\gz}$ associated with $\fS_{\cL}(\vec{\theta}),$ 
result ii) is now clear from Theorem \ref{thm1} with choice of $\gamma$ given by
\begin{equation}\gamma(\gz) = \gz^2 - |\gn_0|- \|\cF_{\gz}\|\label{gammaest}\end{equation}
which is positive for all  sufficiently large $\gz.$ Thus, result  ii) thereby holds for $\fz({\gz})$ $=1/\gamma(\gz).$
Finally, i) holds for $\eta_p=\eta-\eta_0=$ $\cJ_{\gz}[h\Phi].$
\end{proof}

We now address
the integral operator of (\ref{Ys}), involving here the construction of a Green's matrix of (\ref{formb2}) to develop associated strong-CA estimates:
We state (see Appendix A) that
equation (\ref{formb1}) has fundamental matrices $\fN_{j,k}:$ $j,k =1, 2$ which satisfy
\begin{equation}\fN_{j,k}=\cP\left(I+O(\gz^{-1})\right)\fM_{j,k}\label{fN}
\end{equation}
for $\fM_{j,k}\df\left[\vec{\fM}_{j,1},\vec{\fM}_{k,2}\right]$ and
\begin{equation}\notag\vec{\fM}_{1,k}(r,\gz) \df \begin{bmatrix}\sinh(\zeta \Theta_{r_k}(r))\\ \cosh(\zeta \Theta_{r_k}(r))\end{bmatrix},\,
 \vec{\fM}_{2,k}(r,\gz) \df \begin{bmatrix}\cosh(\zeta \Theta_{r_k}(r))\\ \sinh(\zeta \Theta_{r_k}(r))\end{bmatrix};
\end{equation} \begin{equation}\Theta_{r_j}(r)\df \int_{r_j}^r\frac{dt}{t};\,  \cP = \text{diag}\left(\sqrt{r/\rho};\,\frac{1}{\gz\sqrt{r\rho}}\right)\label{cP}
\end{equation}
(with $\Theta$ to denote a general antiderivative). 
 
We further develop our estimates as follows: The $\fM_{i,j}$ satisfy
 \begin{equation}\text{adj}(\fM) =\frak{I} \fM\frak{I}^{\top}\label{symmprop}\end{equation}
 for $\fI\df[E_2,-E_1]$  whose determinants 
 $\Delta_{i,j}(\gz)$ $\df$ $\det(\fM_{i,j})$ are each independent of $r,t$ and satisfy $|\Delta(\gz)|$ $\geq$ $(e^{\zeta\Theta_a(b)}-1)/2.$
Furthermore, denoting by $\chi(r,t)$ the characteristic function of the subinterval $\{r|r\leq t\}$ of $[a,b]$ and $\fX(r,t)$ $\df$ $\text{diag}\left(-\chi(r,t),\,\chi(t,r)\right),$
the associated kernel as in (\ref{Ys}) is given by $\cM_{\gz}(r,t)=$
\begin{equation}\label{Mformula}
 \cP(r)\fM(r,\gz)\fX(r,t)\text{adj}(\fM)(t,\gz)\cP^{-1}(t)\vec{\cG}(t)(I+O(\gz^{-1})),
\end{equation}
to yield $F_{\gz}$ as in (\ref{F}).  
We now introduce the following
\begin{prop}\label{prop1}
 The associated $\vec{Y}_p$ of Proposition \ref{frmrthm2} may be chosen to satisfy $\Pi_{j,k}$ in which case, for $f$ $=\frac{h}{\sqrt{r\rho}},$
the associated kernel $F_{\gz}$ of operator
 $\cF_{\gz}$ of (\ref{calF} ) satisfies
$$F_{\gz}(r,t)=\frac{\kappa \gs^2}{\gz\Delta(\zeta)} f(r)f(t)J^{(s)}_{\gz}(r,t) \left(1+O(\zeta^{-1})\right)$$
 where $J^{(s)}_{\gz}(r,t)$ is a symmetric function and $\frac{J^{(s)}_{\gz}}{\Delta}$
is uniformly bounded for $(r,t)\in$ $[a,b]\times[a,b]$ for all sufficiently large $\gz$.\end{prop}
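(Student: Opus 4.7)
My approach is to exploit the scalar collapse of the residual system under $N^2\equiv 0$: since $u=r^2\eta'$, equation (\ref{formb1}) reduces to the single SL equation $\cJ_\gz[\eta]=h\Phi$ of (\ref{rcEC1}), and the $\Pi_{j,k}$ boundary agreement translates directly into a separated SL condition $\fS_{\cJ_\gz}(\vec\vartheta)$ on $\eta_p$, with $\vartheta_1\in\{0,\pi/2\}$ distinguishing $\eta_p(a)=0$ from $u_p(a)=r^2\eta_p'(a)=0$, and similarly at $b$. Proposition \ref{frmrthm2} applied with this $\vec\vartheta$ yields such a $\vec{Y}_p$, while Proposition \ref{cW} secures uniqueness of the splitting (its hypothesis $\cW_{j,k}\neq 0$ follows from the leading-order hyperbolic entries of $\fN_{j,k}$ in (\ref{fN}) for large $\gz$).

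Given the agreement, I would then compute $F_\gz(r,t)=\vec{\cC}(r)^\top\cM_\gz(r,t)\vec{\cG}(t)$ using (\ref{Mformula}). Since $\vec{\cC}_\gs=(0,\gs^2\kappa h)^\top$ and $\vec{\cG}_\gs=(r^2/c^2,0)^\top$ in the adiabatic case, the matrix sandwich collapses to a single matrix entry. Conjugation by the diagonal $\cP^{\pm 1}$ of (\ref{cP}) converts the outer vectors into $\vec{\cC}^\top\cP=(0,\gs^2\kappa f/\gz)$ and $\cP^{-1}\vec{\cG}=(f,0)^\top$, which extracts the prescribed $f(r)f(t)/\gz$ prefactor and leaves behind the scalar $J^{(s)}_\gz(r,t)\df[\fM(r,\gz)\fX(r,t)\,\text{adj}(\fM)(t,\gz)]_{2,1}$. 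The factor $\Delta^{-1}$ arises from $\fM^{-1}=\text{adj}(\fM)/\Delta$ in the matrix Green's function construction, which when combined with the $(I+O(\gz^{-1}))$ error inherited from (\ref{fN}) produces the leading-order expression in the claim. To establish symmetry of $J^{(s)}_\gz$, I would cross-check against the scalar SL Green's function $\tilde J_\gz$ of $\cJ_\gz$ under $\fS_{\cJ_\gz}(\vec\vartheta)$: from the proof of Proposition \ref{frmrthm2}, $\cF_\gz[y]=\gs^2\kappa h\tilde\fJ_\gz[hy]$, so $F_\gz(r,t)=\gs^2\kappa h(r)\tilde J_\gz(r,t)h(t)$ with $\tilde J_\gz$ symmetric by standard SL theory, and matching this representation with the matrix expression transfers symmetry to $J^{(s)}_\gz$.

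The main technical obstacle is the uniform bound $J^{(s)}_\gz/\Delta=O(1)$. The entries of $\fM$ are $\sinh(\gz\Theta_{r_k})$ and $\cosh(\gz\Theta_{r_k})$, each bounded above by $\cosh(\gz\Theta_a(b))$; against the stated lower bound $|\Delta(\gz)|\geq\tfrac12(e^{\gz\Theta_a(b)}-1)$ the exponential growth cancels, so after splitting on the two regions $\{r<t\}$ and $\{r>t\}$ (where $\fX$ selects a single rank-one contribution) the quotient $J^{(s)}_\gz/\Delta$ is dominated by a fixed product of bounded hyperbolic ratios. The subtle point, and the reason for combining the matrix and scalar representations above, is to secure symmetry of $J^{(s)}_\gz$ exactly rather than merely to leading order in $\gz^{-1}$: the adjoint property (\ref{symmprop}), together with the SL symmetry of $\tilde J_\gz$, is precisely what furnishes this.
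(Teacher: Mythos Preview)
Your computation of the prefactor and the identification $J^{(s)}_\gz(r,t)=E_2^\top\fM(r,\gz)\fX(r,t)\,\text{adj}(\fM)(t,\gz)E_1$ match the paper exactly, as does the idea of bounding $J^{(s)}_\gz/\Delta$ by splitting on $\{r<t\}$ and $\{r>t\}$ so that $\fX$ picks out a rank-one piece whose exponential growth is controlled by the region constraint (your preliminary sentence ``each entry of $\fM$ is bounded by $\cosh(\gz\Theta_a(b))$, hence the growth cancels against $\Delta$'' is not quite enough on its own, since two such entries multiply; the subsequent region-splitting is what actually does the work, exactly as in the paper).

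Where you diverge is on symmetry. The paper proves $J^{(s)}_\gz(r,t)=J^{(s)}_\gz(t,r)$ by a short, purely algebraic chain using (\ref{symmprop}) together with the identity $\fX(r,t)\fI=\fI^\top\fX(t,r)$ and $\fI^\top E_1=E_2$, $\fI E_2=E_1$; no appeal to SL theory is made. Your route via the scalar SL Green's function correctly shows that the \emph{exact} kernel $F_\gz(r,t)=\gs^2\kappa\,h(r)\tilde J_\gz(r,t)h(t)$ is symmetric, which is a perfectly good (and in a sense stronger) fact, but it does not by itself force the specific leading-order piece $J^{(s)}_\gz$ to be symmetric: the $(1+O(\gz^{-1}))$ factor inherited from (\ref{fN}) is $(r,t)$-dependent, so matching only yields symmetry of $J^{(s)}_\gz$ modulo $O(\gz^{-1})$. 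You correctly flag this as the subtle point and name (\ref{symmprop}) as the remedy, but you do not carry out the algebra; once you do, the SL detour becomes redundant, since (\ref{symmprop}) alone (with the $\fX$--$\fI$ commutation) already gives exact symmetry of $J^{(s)}_\gz$. In short: your argument is sound for the proposition as stated provided you actually execute the two-line manipulation with (\ref{symmprop}); the SL comparison is a pleasant consistency check but is not the mechanism the paper uses.
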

\begin{proof}
From (\ref{F}) we compute, noting that $\fX(r,t)\fI$ $=\fI^{\top}\fX(t,r),$
\begin{align}
J^{(s)}_{\gz}(r,t)&=E_2^{\top}\fM(r,\zeta) \fX(r,t)\text{adj}(\fM)(t,\zeta) E_1
\label{Jtil1}\\
 &=E_2^{\top}\fM(r,\zeta) \fX(r,t)\fI\left(\fM(t,\zeta)\right)^{\top}\fI^{\top} E_1\notag\\
&=
\left(E_1^{\top}\fI\fM(r,\zeta)\right) \fI^{\top}\fX(t,r)\left(\fM(t,\zeta)\right)^{\top}E_2\notag\\
&=
E_1^{\top}\text{adj}(\fM)(r,\zeta)\fX(t,r)\left(\fM(t,\zeta)\right)^{\top}E_2\notag\\
&=\left(J^{(s)}_{\gz}(t,r)\right)^{\top}=J^{(s)}_{\gz}(t,r)\notag
\end{align}

Since  $\exp(\zeta\Theta_a(r))\chi(r,t)\exp(\zeta\Theta_b(t))$ $\leq$ $\exp(\zeta\Theta_b(a))$ on the square $(r,t)$ $\in$ $[a,b]\times [a,b],$ it follows from (\ref{Jtil1}) that
$J^{(s)}(r,t)$ $=O(\Delta(\gz))$ holds on this domain; and, hence, the proof is complete. 
\end{proof}

We summarize the main results along with some specific estimates involving components of $\vec{Y},\vec{Y}_0$ in  
\begin{thm}\label{thm3} In the case $N^2=0$ on $[a,b]$ the following hold for the system (\ref{otherfirst})-(\ref{otherthird})
on the interval
for any fixed $\vec{\theta},j,k$ 
for all sufficiently large $\ell$ $\rightarrow \infty:$
\begin{itemize}
\item[i)] Solutions $\vec{X}$ may be chosen so that $\Phi$ satisfies $\fS_{\cL}(\vec{\theta})$
and $\vec{Y}$ has BCs of type $\Pi_{j,k}$ whereby,
\item[ii)] $\vec{X}$ is $\cC$ approximable with $\vec{Y}.\vec{Y}_0$ having
 $\Pi_{j,k}$ boundary agreement;
\item[iii)] the system has a $\cC$-approximable basis; and,
\item[iv)] 
 the implied strong-CA estimates of ii) in particular satisfy
\begin{equation}
\|\eta-\eta_0\|_{\infty}\leq \frac{C_1}{\gz^3}\|\eta_0\|_{\infty},\,\|u-u_0\|_{\infty}\leq \frac{C_2}{\gz^2}\|\eta_0\|_{\infty}
\label{difffest}\end{equation}
for some positive constants $C_1,C_2.$  
\end{itemize} 
\end{thm}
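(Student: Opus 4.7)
The plan is to assemble the four claims from the machinery already in place: the weak-CA bound of Proposition \ref{frmrthm2} on $\Phi$, the boundary-agreement criterion of Proposition \ref{cW}, the symmetric kernel structure of Proposition \ref{prop1}, and the fundamental-matrix representation in \eqref{fN}--\eqref{Mformula}. Throughout I use the reparameterization $\gL_{\ell}\to\gz^2$ so that ``$\ell\to\infty$'' is realized as $\gz\to\infty$.

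For (i) and (ii), I would first verify that $\cW_{j,k}(\gz)\neq 0$ for all sufficiently large $\gz$. This is where the Appendix A asymptotics enter: a residual solution basis $\vec{Y}_{0,1},\vec{Y}_{0,2}$ can be chosen so that, to leading order, its columns agree with those of $\cP(r)\fM_{j,k}(r,\gz)$, whose determinant $\Delta(\gz)$ satisfies $|\Delta(\gz)|\geq (e^{\gz\Theta_a(b)}-1)/2$ and so grows exponentially; hence $\cW_{j,k}$ inherits this growth and cannot vanish asymptotically. Proposition \ref{cW} then yields $\Pi_{j,k}$ boundary agreement for $\vec{Y}$ and $\vec{Y}_0$ of the form \eqref{Ys}. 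Invoking Proposition \ref{frmrthm2} with a matching choice of $\vec\vartheta$ supplies the companion $\Phi$ satisfying $\fS_{\cL}(\vec\theta)$ together with the weak-CA bound $\|\Phi\|\leq \fz(\gz)\|h\eta_0\|$ where $\fz(\gz)=\gz^{-2}(1+o(1))$.

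For the strong-CA rates in (iv), I would start from $\vec{Y}-\vec{Y}_0=\vec{Y}_p=\int_a^b \cM_\gz(r,t)\vec{\cG}(t)\Phi(t)\,dt$ and read off component scalings of the vector kernel $\vec{K}(r,t)\df \cM_\gz(r,t)\vec{\cG}(t)$ from \eqref{Mformula}: because $\cP(r)=\text{diag}(\sqrt{r/\rho},\,(\gz\sqrt{r\rho})^{-1})$ in \eqref{cP}, the $\eta$-slot of $\vec{K}$ carries one extra factor of $\gz^{-1}$ that the $u$-slot does not. Using the boundedness of $J^{(s)}_\gz/\Delta$ from Proposition \ref{prop1} to control the mixed $\cosh/\sinh$ products by exponential majorization, I would show that $\|K_1(r,\cdot)\|_2$ is bounded uniformly in $\gz$ while $\|K_2(r,\cdot)\|_2=O(\gz^{-1})$, uniformly in $r\in[a,b]$. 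Substituting the weak-CA bound, $\|\Phi\|_2\lesssim\gz^{-2}\|\eta_0\|_\infty$ (using $\|h\eta_0\|_2\leq\|h\|_\infty\sqrt{b-a}\,\|\eta_0\|_\infty$), and estimating $|\vec{Y}_p(r)|$ componentwise by Cauchy--Schwarz then delivers $\|u-u_0\|_\infty\leq C_2\gz^{-2}\|\eta_0\|_\infty$ and $\|\eta-\eta_0\|_\infty\leq C_1\gz^{-3}\|\eta_0\|_\infty$, i.e.\ \eqref{difffest}, giving strong-CA and completing (ii).

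For (iii), a $\cC$-approximable basis is obtained by running the above construction with four linearly independent choices of inhomogeneous data in \eqref{Pi} and \eqref{SLBC}, spanning the $4$-dimensional solution space of \eqref{full}; each basis vector inherits $\cC$-approximability from the uniform estimates above, with linear independence secured by nonvanishing of $\cW_{j,k}$ together with the uniqueness clause of Proposition \ref{cW}. The main obstacle I anticipate lies not in the weak-CA step but in the bookkeeping that propagates the $\gz^{-1}$ asymmetry in $\cP$ through the integral representation to yield the sharper $\gz^{-3}$ rate for $\|\eta-\eta_0\|_\infty$, while simultaneously verifying that the chosen $\vec\theta,\vec\vartheta,(j,k)$ keep both Propositions \ref{cW} and \ref{frmrthm2} admissible.
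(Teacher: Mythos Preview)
Your treatment of (i), (ii), and (iv) is essentially the paper's own argument: Proposition \ref{frmrthm2} gives the weak-CA bound $\|\Phi\|\lesssim\gz^{-2}\|h\eta_0\|$, and then the extra $\gz^{-1}$ carried by the second diagonal entry of $\cP$ in \eqref{cP} is precisely what distinguishes the $\gz^{-3}$ rate for $\eta-\eta_0$ from the $\gz^{-2}$ rate for $u-u_0$. The paper phrases the kernel bound in sup-norm over the square and then uses \eqref{Ypest}, while you use $L^2$ in $t$ and Cauchy--Schwarz; these are interchangeable here.

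The gap is in (iii). The construction you describe---varying inhomogeneous data in \eqref{Pi} and \eqref{SLBC}---cannot by itself produce four linearly independent full-system solutions. The conditions $\fS_{\cL}(\vec\theta)$ are \emph{homogeneous}, so for a fixed $\vec\theta$ the map $\eta_0\mapsto(\vec Y,\Phi)$ coming out of Proposition \ref{frmrthm2} yields only a two-dimensional family, parameterized by $\ker\cJ_\gz$ (equivalently by $(A,B)$ in \eqref{Pi}). Nonvanishing of the $2\times 2$ determinant $\cW_{j,k}$ and the uniqueness clause of Proposition \ref{cW} concern only the residual equation and do not lift to independence in the four-dimensional solution space of \eqref{full}. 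If you try to gain the missing two dimensions by also varying $\vec\theta$, you must still prove that solutions attached to different $\vec\theta$ are independent, and that is exactly the content of a $4\times 4$ nondegeneracy statement for the full system, not the residual one.

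The paper closes this gap via Proposition \ref{lastprop1} in Appendix B: using the full-system asymptotics of Proposition \ref{secondasymp}, it shows that the $4\times4$ matrix $\fW_{a,b}(\gz)$ built from $(\Phi,\Phi')$ at both endpoints is invertible for large $\gz$. The columns of $X\fW_{a,b}^{-1}$ then form a genuine basis, each column having prescribed values of $\Phi(a),\Phi(b),\Phi'(a),\Phi'(b)$ and therefore satisfying some fixed $\fS_{\cL}(\vec\theta)$; part (i) applies to each column and yields (iii). Your argument for (iii) should be replaced by an appeal to that full-system determinant.
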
 
\begin{proof}
The weak CA under BC of i) follows from i) of Proposition \ref{frmrthm2} and from Propositon \ref{prop1}.
For each $j,k$ the corresponding $\vec{Y}_p$ satisfy (\ref{Ys}) but for
$\cM(\cdot,{\gz})$ $=$ $\fN_{j,k}\fX\fN_{j,k}^{-1}$
and, hence, $\Pi_{j,k}$  (\ref{Pi}).
Furthermore, 
$$\text{adj} (\fN(\cdot,{\zeta})) \vec{\cG}=(v_1, v_2)^{\top}$$ for functions $v_1$ and $v_2$ which are majorized by $\chi(r,t) \exp(\zeta \Theta_b(t))$ and $\chi(t,r)\exp(\zeta \Theta_a(t)),$ respectively.  
So, each element of $M(\cdot,{\gz})\vec{\cG}$ is majorized by 
\begin{equation}\label{kerest} \exp(\zeta(\Theta_a(r)+\Theta_b(t)))\chi(r,t) +\exp(\zeta(\Theta_b(r)+\Theta_a(t)))\chi(t,r)
\end{equation}
and hence by $\Delta(\zeta).$ Thus, the supremum over the square satisfies $\|\cM(\cdot,{\gz})\cG\|_{\infty}$ $=O(1)$ as $\gz\rightarrow \infty$ so that
\begin{equation}\label{Ypest}
\|\vec{Y}_p\|_{\infty}\leq \sqrt{b-a}\|\cM(\cdot,{\gz})\cG\|_{\infty}\|\Phi\|
\end{equation}
and that the estimates (\ref{difffest}) follow as in Proposition \ref{prop1}, confirming ii) and iv).
Finally,
iii) follows from Propositon \ref{lastprop1}, with the fundamental solution $X\fW$ whose column vectors correspond to various (fixed) $\fS_{\cL}$ and, thus, by i).
\end{proof}

\subsection{High Frequency and Fixed Degree}
We follow procedures similar to those above but with the presence of the parameter $\gs$ in the operator of (\ref{rcEC1}), now denoted by $\cJ_{\gs}$
(dropping $\gz$),
whereby the estimates 
are now more complicated: Indeed, with $0$ as a possible eigenvalue, invertibility and Green's-function estimates cannot be so readily assured;
moreover, in light of Corollary \ref{CAcor} (see Appendix A), we cannot expect both weak and strong estimates to hold over a fixed interval 
$[a,b].$
 However, we find that we may establish estimates in similar manner by adjusting boundary agreement $\Pi,$ depending on $\gs,$
and by allowing the size of the interval on which estimates hold to also vary accordingly: We therefore introduce the notation
$\fS(\gs),\Pi(\gs)$ to indicate the dependence of the respective boundary conditions on the parameter $\gs$ and the
notation $\cI_{\gb}$ $\df$ $[a,\gb]$ for $a<\gb\leq b$ to denote subintervals of $\cI\df[a,b]$ with like subscript 
to indicate related quantities their dependence on $\gb$.
We begin with
\begin{prop}\label{innerprodest} Boundary conditions $\fS_{\cJ}(\gs)$ as in (\ref{SLBC}),  depending on $\gs,$ can be chosen so that
\begin{equation}\notag\left|\left<\cJ_{\gs}[y],y\right>\right| \geq \fc \gs \left<hy,y\right>\end{equation}
holds for $y\in \cC^2([a,b])$ satisfying $\fS_{\cJ}(\gs)$ for
some  constant $\fc>0$ independent of $\gs.$
\end{prop}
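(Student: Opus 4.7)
The plan is to reduce the quadratic-form inequality to a spectral-gap statement for the weighted Sturm--Liouville eigenproblem canonically attached to $\cJ_\gs,$ then choose the boundary parameters $\vec\gth=(\gth_1,\gth_2)$ as a function of $\gs$ so that $\gs^2$ lies a distance of order $\gs$ from this spectrum.

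First, integration by parts together with $p(a)y'(a)=\cot\gth_1\,y(a),$ $p(b)y'(b)=\cot\gth_2\,y(b)$ (from $\fS_\cJ(\vec\gth)$) gives
$$\langle \cJ_\gs y,y\rangle=\cot\gth_2\,y(b)^2-\cot\gth_1\,y(a)^2-\int_a^b r^2\rho(y')^2\,dr-\gL_\ell\int_a^b \rho y^2\,dr+\gs^2\int_a^b h y^2\,dr.$$
Writing $\cJ_\gs=L_0+\gs^2 h$ with $L_0 y\df(r^2\rho y')'-\gL_\ell\rho y,$ the eigenproblem $-L_0 y=\mu h y$ under $\fS_\cJ(\vec\gth)$ admits an $h$-orthonormal basis $\{y_n\}$ with eigenvalues $\mu_n(\vec\gth)\uparrow+\infty,$ whence $\cJ_\gs y_n=(\gs^2-\mu_n) h y_n.$ Expanding any admissible $y=\sum c_n y_n$ yields $\langle \cJ_\gs y,y\rangle=\sum(\gs^2-\mu_n)c_n^2$ and $\langle h y,y\rangle=\sum c_n^2,$ so the claim reduces to producing $\vec\gth(\gs)$ for which $\min_n|\gs^2-\mu_n(\vec\gth(\gs))|\geq \fc\gs.$

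Second, I invoke the classical Pr\"ufer/WKB eigenvalue asymptotics. A Liouville substitution reduces $-L_0 y=\mu h y$ to $-Y''+QY=\mu Y$ on an interval of length $\Delta\df\int_a^b dr/c(r),$ yielding
$$\sqrt{\mu_n(\vec\gth)}=\frac{n\pi+\phi(\vec\gth)}{\Delta}+O(n^{-1}),$$
where $\phi(\vec\gth)$ is a continuous phase encoding the endpoint contributions and running through a full period $[0,\pi)$ as $\vec\gth$ varies over its admissible range. For each sufficiently large $\gs$ I then solve $\gs\Delta-\phi(\vec\gth(\gs))\equiv \pi/2\pmod\pi$ for $\vec\gth(\gs),$ which places $\gs$ at the midpoint of two consecutive asymptotic eigenvalues. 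This forces $\min_n|\gs-\sqrt{\mu_n(\vec\gth(\gs))}|\geq \pi/(3\Delta)$ for $\gs$ large, and factoring $|\gs^2-\mu_n|=(\gs+\sqrt{\mu_n})\,|\gs-\sqrt{\mu_n}|\geq \pi\gs/(3\Delta)$ delivers the estimate with $\fc\df \pi/(3\Delta).$

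The principal technical difficulty is uniform control of the $O(n^{-1})$ Pr\"ufer remainder as both $n\to\infty$ and the sliding parameter $\vec\gth(\gs)$ varies with $\gs;$ for this I appeal to the WKB machinery developed in Appendix A. A secondary point to verify is that $\vec\gth\mapsto \phi(\vec\gth)\pmod\pi$ is genuinely surjective onto $[0,\pi),$ so that the midpoint equation for $\vec\gth(\gs)$ always admits an admissible solution; this follows from the strict monotonicity of $\phi$ in $\gth_1$ at fixed $\gth_2$ guaranteed by Sturm oscillation theory. Once the spectral gap is secured, the stated inequality (read equivalently as the operator bound $\|\cJ_\gs^{-1}\|_{L^2([a,b],h\,dr)}\leq (\fc\gs)^{-1}$ when applied in the downstream construction of the Green's operator for $\cJ_\gs$) follows directly from the $h$-orthonormal diagonalization above.
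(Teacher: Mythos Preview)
Your reduction to a spectral-gap statement via the $h$-orthonormal diagonalization of $\cJ_\gs$ is correct and is exactly how the paper begins. The gap is in your second step: the leading-order phase $\phi(\vec\gth)$ in the asymptotic $\sqrt{\mu_n(\vec\gth)}=\bigl(n\pi+\phi(\vec\gth)\bigr)/\Delta+O(n^{-1})$ does \emph{not} range continuously over $[0,\pi)$ as $\vec\gth$ varies. For separated boundary conditions the leading-order phase depends only on whether each endpoint is of Dirichlet type ($\sin\gth_i=0$) or not: with neither endpoint Dirichlet one has $\sqrt{\mu_n}\sim n\pi/\Delta$, with exactly one Dirichlet endpoint $\sqrt{\mu_n}\sim(n-\tfrac12)\pi/\Delta$, and the specific Robin parameter enters only at order $O(n^{-1})$. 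So $\phi$ is piecewise constant in $\vec\gth$, taking only the values $0$ and $\pi/2$ modulo $\pi$, and your midpoint equation $\gs\Delta-\phi(\vec\gth(\gs))\equiv\pi/2\pmod\pi$ cannot be solved by continuously sliding $\vec\gth$. The Sturm-monotonicity you invoke governs the exact eigenvalues $\mu_n(\vec\gth)$, not the asymptotic phase; while each $\mu_n$ does sweep continuously between consecutive Dirichlet values as $\gth_1$ varies, exploiting this would require a gap estimate uniform in $n$ at the level of the $O(n^{-1})$ remainder for a $\gs$-dependent family of boundary conditions---precisely the difficulty you flag but do not resolve.

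The paper avoids this by dropping continuous variation altogether. It fixes \emph{two} boundary conditions---one with $\sin\gth_1\sin\gth_2\neq0$ and one with $\gth_1=0$---whose asymptotic spectra have phases $0$ and $\pi/2$ respectively and therefore interlace with spacing $\pi/(2L)$ (the paper writes $L$ for your $\Delta$). For any large $\gs$, at least one of the two fixed spectra keeps distance $\geq\pi/(3L)$ from $\gs$ in the square-root scale, so a discrete switch $\fS_\cJ(\gs)\in\{\fS_\cJ(\gth_1,\gth_2),\fS_\cJ(0,\gth_2)\}$ already gives $|\gs^2-\gl_k(\gs)|\geq\gs\pi/(12L)$ for all $k$. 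Since only two fixed boundary problems are in play, the $O(n^{-1})$ remainders are automatically uniform, and the $\gL_\ell$ term is then absorbed as an $O(1)$ correction (your absorption of it into $L_0$ works equally well). Replacing your continuous selection by this two-choice switch repairs the argument with no other changes.
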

\begin{proof}
Consider the eigenvalue problem
$$\cJ_0[y]= -\gl h y$$ with BCs of the form $\fS_{\cJ}$ as follows:
If we fix $\theta_1,\theta_2$ so that $\sin(\theta_1)\cdot\sin(\theta_2)$ $\neq 0,$ 
then, as $n\rightarrow \infty$ the eigenvalues $\gn_n$ asymptotically satisfy 
\begin{equation}\sqrt{\gn_n}=\frac{(n-1/2)\pi}{L} +O(n^{-1})
\label{gns}
\end{equation} 
where $L=\int_a^b 1/c(t)dt.$ 
For $\theta_2$ as above and $\theta_1$ $=0$ (say), the eigenvalues $\gm_n$ satisfy
\begin{equation}\sqrt{\gm_n}=\frac{n\pi}{L} +O(n^{-1})
\label{gms}
\end{equation} It follows that 
$\exists n_0\in {\Bbb N}^+$ such that
$$|\sqrt{\gn_k}-\sqrt{\gm_j}|> \frac{\pi}{3L}
$$
and $\gn_k,\gm_j>(\frac{\pi}{2L})^2$ for all $j,k\geq n_0.$ These results likewise hold for $\theta_1,\theta_2$ chosen vise-versa. 
In the case $\sin(\theta_1)$ $=\sin(\theta_2)$ $=0$ the corresponding eigenvalues satisfy (\ref{gns}).

So, we may take $\fS_{\cJ}(\gs)$ to be $\fS_{\cJ}(\theta_1,\theta_2),$ as in (\ref{gns}), if $|\gs -\sqrt{\gm_k}|$ $<\frac{\pi}{4L}$
for some $k$ and $\fS_{\cJ}(\gs)$ to be $\fS_{\cJ}(0,\theta_2),$ as in (\ref{gms}), otherwise; and, let $\gl_k(\gs)$ be the corresponding
eigenvalue $\gm_k$ or $\gn_k$ now depending on $\gs.$
Accordingly, the eigenvalues $\gl_k(\gs)$ satisfy
\begin{equation}\notag
|\gs^2-\gl_k(\gs)|\geq \frac{\pi}{12L} (\gs+\sqrt{\gl_k(\gs)})>\frac{\gs\pi}{12L} 
\end{equation} holds $\forall k$ for all sufficiently large $\gs.$ Thus, there is a constant $C>0$ independent of sufficiently large $\gs$ whereby 
\begin{align}
\left|\left<\cJ_{\gs}[y],y\right>\right|& \geq\left|\left<\cJ_0[y],y\right>+\left<\gs^2hy,y\right>\right|-\left|\left<\gL yc^2h/r^2,y\right>\right|\label{propeqn1}\\
&\geq (\frac{\gs\pi}{12L}-C)\left<hy,y\right>\label{propeqn2}
\end{align} so that the desired result holds.
\end{proof}
We note, as it will be important later, that the constant $C$ of (\ref{propeqn2}) can be chosen to be independent of $\gb:$ $a<\gb\leq b$ on intervals $\cI_{\gb}.$ 

We proceed further to develop more detailed constructions.  From 
(\ref{gms}), (\ref{gns}) we may chose $n_0$ is so large that $\forall n\geq n_0$
$$|\sqrt{\gn_n}-\frac{(n-1/2)\pi}{L}|, |\sqrt{\gm_n}-\frac{n\pi}{L}|<\frac{\pi}{12L} 
$$
and introduce fundamental matrices $\fN_{j,k}$ as in (\ref{fN}) but with
\begin{equation}\label{nextcP}
\Theta_{r_j}(r)\df\int_{r_j}^r\frac{dt}{c(t)},\,\cP=\left(r/\sqrt{\rho c},\,\sqrt{c}/(\gs r\sqrt{\rho })\right),
\end{equation}
 (so that $L=\Theta_a(b)$), and $\fM_{j,k}\df$ $\left[\vec{\fM} _{j,1},\vec{\fM}_{k,2}\right]$  
where
\begin{equation}\label{os}\vec{\fM}_{1,k}(\cdot,\gs) \df \begin{bmatrix}\sin(\gs \Theta_{r_k})\\ \cos(\gs \Theta_{r_k})\end{bmatrix} ,\,
 \vec{\fM}_{2,k}(\cdot,\gs) \df \begin{bmatrix}-\cos(\gs \Theta_{r_k})\\ \sin(\gs \Theta_{r_k})\end{bmatrix}
\end{equation}
(See Propositon \ref{LWest}, Appendix A, with $\gl=0$.)
In turn, the $\fM_{j,k}$ satisfy (\ref{symmprop}) and the respective determinants satisfy
\begin{equation}\label{dets}
|\Delta_{j,k}(\gs)|= \begin{cases}|\cos(\gs L) |&: j\neq k \\|\sin(\gs L)|&: j=k
\end{cases}
\end{equation}
Now, indices $j_m, k_m:$ $m=1,2,\hdots$ can be chosen so that $|\Delta_{j_m,k_m}(\gs)|$ $\geq \frac{1}{2}$ when
for $\gs\in I_m:m\geq n_0$ where
 \begin{equation}\notag
I_{2m}\df(\sqrt{\gm_{m}}-\frac{\pi}{3L},\sqrt{\gm_{m}}+\frac{\pi}{3L});\, I_{2m-1}\df (\sqrt{\gn_{m}}-\frac{\pi}{3L},\sqrt{\gn_{m}}+\frac{\pi}{3L})
\end{equation}
 So, we can form (dropping indices) a fundamental matrix $\fN(\cdot,\gs),$ piecewise defined with respect to $\gs,$ with columns $\vec{\fN}_1,\vec{\fN}_2$ satisfying conditions  $\Pi(\gs)$ in accord with (\ref{Pi}) and whose determinant satisfies $|\Delta(\gs)|$ $\geq\frac{1}{2}$ for all sufficiently large $\gs.$ 
In turn, we find via (\ref{Mformula}) that the following norms over the square satisfy (as $\gs\rightarrow\infty$)
\begin{align}
\notag\|\fN(r,\gs)\cX(r,t)\fN^{-1}(t,\gs)\vec{\cG}_{\gs}(t)\|_{\infty}&=O(\gs^{-1})\\
\notag\|\cC_{\gs}(r)\fN(r,\gs)\cX(r,t)\fN^{-1}(t,\gs)\vec{\cG}_{\gs}(t)\|_{\infty}&=O(\gs)
\end{align}

From the above results, we conclude
\begin{prop}\label{prop2}
 The associated $\vec{Y}_p(\cdot,\gs)$ of (\ref{Ys}) may be chosen to satisfy boundary condition of the form $\Pi(\gs)$
 whereby the associated kernel $F$ of (\ref{calF}) for $f$ $=\frac{h}{r}\sqrt{\frac{c}{\rho}}$ satisfies
\begin{equation}\notag
F_{\gs}(r,t)=\frac{\gs\kappa}{\Delta(\gs)} f(r)f(t)J^{(s)}_{\gs}(r,t) \left(1+O(\gs^{-1})\right)\,\, ({\text as}\,  \gs\rightarrow \infty)\end{equation}
where $J^{(s)}_{\gs}(r,t)$ is symmetric and $J^{(s)}_{\gs}/\Delta$ is uniformly bounded for $(r,t)\in$ $\cI\times\cI$ and  $\gs$ sufficiently large. 
\end{prop}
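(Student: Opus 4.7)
The proof parallels that of Proposition \ref{prop1}, with the oscillatory fundamental matrices (\ref{os}) replacing the hyperbolic ones and $\cP$ from (\ref{nextcP}) replacing the earlier $\cP$; the only genuinely new ingredient is the piecewise-in-$\gs$ construction of $\fN(\cdot,\gs)$ needed to keep $\Delta(\gs)$ bounded away from zero.

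The plan is to take the Green's matrix for (\ref{formb1}) as $\cM(r,t) = \fN(r,\gs)\fX(r,t)\fN^{-1}(t,\gs)$, where $\fN$ is assembled piecewise according to the prescription just before (\ref{dets}): for $\gs\in I_m$ the column combination $\fN_{j_m,k_m}$ is chosen so that $|\Delta(\gs)|\ge 1/2$ throughout the regime. The particular solution $\vec{Y}_p=\int_a^b \cM(r,t)\vec{\cG}(t)\Phi(t)\,dt$ then inherits the $\Pi(\gs)$ boundary conditions from the columns of $\fN$, exactly as in the proof of Theorem \ref{thm3}.

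For the kernel formula, substitute (\ref{fN}) into (\ref{Mformula}). In the adiabatic case $\vec{\cC}_{\gs}^{\top}=(0,\gs^2\kappa h)$ and $\vec{\cG}=(r^2/c^2,\,0)^{\top}$, so a direct computation using the diagonal entries of $\cP$ in (\ref{nextcP}) yields
\[
\vec{\cC}_{\gs}^{\top}(r)\cP(r)=\gs\kappa f(r)\,E_2^{\top},\qquad \cP^{-1}(t)\vec{\cG}(t)=f(t)\,E_1
\]
with $f=(h/r)\sqrt{c/\rho}$ as in the statement. Pulling these scalars past the inner matrix product and extracting the $1/\Delta$ arising from $\fN^{-1}=\text{adj}(\fN)/\det(\fN)$ yields the claimed form with
\[
J^{(s)}_{\gs}(r,t)=E_2^{\top}\fM(r,\gs)\fX(r,t)\,\text{adj}(\fM)(t,\gs)\,E_1,
\]
the $O(\gs^{-1})$ error being absorbed from the $(I+O(\gs^{-1}))$ correction in (\ref{fN}).

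The symmetry $J^{(s)}_{\gs}(r,t)=J^{(s)}_{\gs}(t,r)$ is a verbatim replay of the calculation (\ref{Jtil1}) used in Proposition \ref{prop1}: it relies only on (\ref{symmprop}) and on the matrix identity $\fX(r,t)\fI=\fI^{\top}\fX(t,r)$, both of which continue to hold here. Uniform boundedness of $J^{(s)}_{\gs}/\Delta$ is then immediate, since the entries of $\fM$ are sines and cosines bounded by $1$ uniformly on $\cI\times\cI$, while $|\Delta(\gs)|\ge 1/2$ by construction. The main obstacle is in fact already resolved by Proposition \ref{innerprodest} together with the eigenvalue separation estimates (\ref{gns})--(\ref{gms}): those provide the interval covering $\{I_m\}$ that makes the piecewise choice of $\fN$ possible and keeps $\Delta$ uniformly bounded below; everything else is routine transcription of the arguments of Section \ref{s2}.
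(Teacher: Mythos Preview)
Your proof is correct and follows essentially the same approach as the paper: the paper states Proposition~\ref{prop2} as a direct consequence of the preceding discussion (the piecewise-in-$\gs$ construction of $\fN$ with $|\Delta(\gs)|\ge 1/2$, together with the formula~(\ref{Mformula})), and you have simply made that discussion explicit, including the verification that $\vec{\cC}_{\gs}^{\top}\cP=\gs\kappa f\,E_2^{\top}$ and $\cP^{-1}\vec{\cG}=f\,E_1$ in the adiabatic case. One minor remark: the piecewise choice of $\fN$ rests on the determinant formulas~(\ref{dets}) and the interval covering $\{I_m\}$, not on Proposition~\ref{innerprodest} itself (which concerns the operator $\cJ_{\gs}$ rather than $\Delta$), so your closing attribution is slightly misplaced even though the argument is sound.
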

Now, concerning the corresponding residual solutions, we state
\begin{prop}
Given sufficiently large $\gs>0,$ $\vec{Y}_0$ may be uniquely chosen to have boundary agreement $\Pi(\gs)$ with $\vec{Y}_p$ of Proposition \ref{prop2}.
\end{prop}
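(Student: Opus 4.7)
The plan is to reduce the assertion directly to Proposition \ref{cW} applied in the high-frequency setting. Since $\vec{Y}_p(\cdot,\gs)$ has been constructed in Proposition \ref{prop2} with boundary conditions of type $\Pi(\gs)=\Pi_{j_m,k_m}$ on each $\gs$-interval $I_m$, it suffices to show that the associated Wronskian-like determinant $\cW_{j_m,k_m}(\gs)$, formed from a basis $\vec{Y}_{0,1},\vec{Y}_{0,2}$ of solutions of the residual equation $\vec{Y}'=\cA\vec{Y}$, is nonzero for all sufficiently large $\gs$; the uniqueness and existence of $\vec{Y}_0$ achieving $\Pi(\gs)$ boundary agreement with $\vec{Y}_p$ then follows from Proposition \ref{cW}.

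To carry this out, I would take as the basis the two columns of the fundamental matrix $\fN(\cdot,\gs)$ constructed in the paragraph preceding Proposition \ref{prop2}, which by (\ref{fN}) satisfies
\begin{equation}\notag
\fN(r,\gs)=\cP(r)\bigl(I+O(\gs^{-1})\bigr)\fM_{j_m,k_m}(r,\gs),
\end{equation}
with $\cP$ and $\fM_{j_m,k_m}$ as given by (\ref{nextcP}) and (\ref{os}). The components $u_{0,i}$ and $\eta_{0,i}$ are therefore, up to multiplicative factors $r/\sqrt{\rho c}$ and $\sqrt{c}/(\gs r\sqrt{\rho})$ respectively, given by the corresponding rows of $\fM_{j_m,k_m}$ plus an $O(\gs^{-1})$ correction. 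Evaluating at $r_1=a$ and $r_2=b$ and expanding the $2\times 2$ determinant, the leading contribution to $\cW_{j_m,k_m}(\gs)$ is proportional to $\Delta_{j_m,k_m}(\gs)/\gs$ with a strictly positive factor depending only on $a,b,\rho,c$ at the endpoints.

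Invoking the construction of the piecewise-defined $\fN$, we have $|\Delta_{j_m,k_m}(\gs)|\geq \tfrac12$ for $\gs\in I_m$ and all $m\geq n_0$, by (\ref{dets}) and the separation estimates (\ref{gns}), (\ref{gms}). Since the $O(\gs^{-1})$ perturbation in (\ref{fN}) contributes only a relative error that tends to $0$, we conclude that $\cW_{j_m,k_m}(\gs)$ is bounded below in absolute value by a positive multiple of $\gs^{-1}$ on each $I_m$, uniformly in $m$, and is in particular nonzero for all sufficiently large $\gs$. Proposition \ref{cW} then yields the claimed unique $\vec{Y}_0$.

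The main obstacle I anticipate is purely bookkeeping: one must match the index pair $(j_m,k_m)$ appearing in the piecewise definition of $\fN$ with the corresponding $\Pi_{j_m,k_m}$ of Proposition \ref{cW} on each $\gs$-interval, and confirm that the $\cP$-prefactor does not degrade the lower bound on $\cW$ below the $O(\gs^{-1})$ error coming from (\ref{fN}). Once the explicit form of $\fM_{j,k}$ from (\ref{os}) is used, both points reduce to the already-established estimate $|\Delta_{j_m,k_m}(\gs)|\geq \tfrac12$, so no genuinely new analytic input is required.
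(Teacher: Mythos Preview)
Your proposal is correct and follows essentially the same route as the paper: the paper's one-line proof simply cites Proposition~\ref{LWest} with $\gl=0$ (which is exactly the source of the fundamental matrix $\fN$ you invoke via (\ref{fN}), (\ref{nextcP}), (\ref{os})) and then appeals to Proposition~\ref{cW}. Your explicit computation that $\cW_{j_m,k_m}(\gs)$ is comparable to $\Delta_{j_m,k_m}(\gs)/\gs$ with $|\Delta_{j_m,k_m}(\gs)|\geq\tfrac12$ is precisely the verification the paper leaves implicit.
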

\begin{proof} From Proposition \ref{LWest} (see Appendix A) in the case $\gl=0$ one finds that the result follows from Proposition \ref{cW}.
\end{proof}

We now impose boundary conditions on the operator $\cL$ which will also vary with $\gs$ in order to assure large eigenvalues.  It will suffice to study a simpler operator: 
Let the differential operator $\tilde{\cL}$ be given by $\tilde{\cL}[y]\df (r^2y')'$
and let $\Pi_{j,k;\gb}$ denote boundary conditions $\fS_{\cL}$ that are homogeneous
on endpoints of intervals $\cI_{\gb}$ and assigned in a manner similar to those of Proposition \ref{innerprodest} as we state
\begin{lem}\label{FSLem}
Corresponding to boundary conditions of the form $\Pi_{j,k;\gb}:$ $(j, k)\neq (2,2),$ subspaces
 $\cD_{\gb}$ of $\cC^2(\cI_{\gb})$ exist such that  the following hold for any $a<\gb<b:$
\begin{itemize}
\item[i)]  $\cD_{\gb}$ is a dense subspace of $\cC^2(\cI_{\gb})$ on which $\tilde{\cL}$ is SA
\item[ii)] $\langle \tilde{\cL}[y],y\rangle\leq -\gt_{\gb} \|y\|$ $\forall y\in \cD_{\gb}$ where
\item[iii)] $\gt_{\gb}$ is strictly increasing s.t. $\gt_{\gb}\rightarrow \infty$ as $\gb\rightarrow a^+.$
\end{itemize}
\end{lem}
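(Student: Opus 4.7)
The plan is to take $\cD_\gb$ to consist of all $y \in \cC^2(\cI_\gb)$ satisfying the homogeneous two-point boundary conditions $\Pi_{j,k;\gb}$, and then to establish the coercivity estimate ii) by integrating by parts and applying a Poincaré-type inequality.

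I begin with the structural claims. Density of $\cD_\gb$ in $L^2(\cI_\gb)$ is immediate because the boundary conditions amount to finitely many linear constraints and $\cC^\infty_c$ interior functions already lie in $\cD_\gb$. Self-adjointness of $\tilde{\cL}$ on $\cD_\gb$ is a standard Sturm-Liouville fact: $\tilde{\cL}$ is of the form $(p(r) y')'$ with $p(r)=r^2$ bounded away from $0$ and $\infty$ on $\cI_\gb$, and each condition $\Pi_{j,k;\gb}$ translates into a separated self-adjoint endpoint condition of the form (\ref{SLBC}) by choosing $\theta_i\in\{0,\pi/2\}$ according to whether $y$ or $y'$ is required to vanish at that endpoint.

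For the coercivity, integration by parts gives
\[
\langle \tilde{\cL}[y],y\rangle = \bigl[r^2 y' y\bigr]_a^\gb - \int_a^\gb r^2 (y')^2\,dr = -\int_a^\gb r^2 (y')^2\,dr,
\]
the boundary term vanishing since $\Pi_{j,k;\gb}$ forces either $y$ or $y'$ to be zero at each endpoint. Using $r^2 \geq a^2$ on $\cI_\gb$ yields $\langle \tilde{\cL}[y],y\rangle \leq -a^2 \|y'\|^2$. The remaining task is a Poincaré inequality $\|y'\|^2 \geq C_\star (\gb - a)^{-2} \|y\|^2$, which is exactly where the exclusion $(j,k)\neq(2,2)$ enters: in each of the three remaining cases at least one endpoint carries a Dirichlet condition, excluding nonzero constants from $\cD_\gb$ and guaranteeing positivity of the first eigenvalue $\lambda_1(\gb)$ of $-y''$ with the corresponding BCs on $[a,\gb]$. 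An explicit diagonalization of that eigenvalue problem gives $\lambda_1(\gb) = C_\star/(\gb-a)^2$ with $C_\star\in\{\pi^2,\pi^2/4\}$ depending on whether both endpoints or only one are Dirichlet. Setting $\gt_\gb := a^2 C_0/(\gb-a)^2$ with $C_0$ the minimum over the three admissible values of $C_\star$ then yields a strictly monotone function of $\gb$ tending to $\infty$ as $\gb\to a^+$, giving iii).

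The main obstacle I expect is bookkeeping rather than analysis: one must verify the Poincaré estimate for each of the three admissible combinations and check that the resulting $\gt_\gb$ is independent of which $(j,k)$ the downstream construction ultimately selects, since the $\gs$-dependent choice of $\Pi(\gs)$ elsewhere in the section may switch between configurations. Substantively, the estimate is sharp in its dependence on $\gb-a$, so the lemma cannot be strengthened beyond the $(\gb-a)^{-2}$ rate without altering $\tilde{\cL}$ itself, and the monotonicity claim follows automatically from the explicit formula for $\gt_\gb$.
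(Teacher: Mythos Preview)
Your argument is correct, but it proceeds differently from the paper. The paper diagonalizes $\tilde{\cL}$ explicitly: using the Liouville substitution $\Upsilon(r)=\int_a^r t^{-2}\,dt$ it takes $\cD_\gb$ to be the span of the eigenfunctions $\phi_{\gb,j}(r)=\sin\bigl(j\pi\,\Upsilon(r)/\Upsilon(\gb)\bigr)$, which solve $(r^2\phi')'=-\lambda_{\gb,j}\,r^{-2}\phi$, and reads off the coercivity constant from $\lambda_{\gb,1}=(\pi/\Upsilon(\gb))^2$ together with $1/\Upsilon(\gb)>a^2/(\gb-a)$; density and self-adjointness then come from having an explicit orthogonal basis. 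You instead take $\cD_\gb$ to be all $C^2$ functions satisfying the boundary conditions, integrate by parts to obtain $-\int_a^\gb r^2(y')^2\,dr$, replace the weight $r^2$ by its minimum $a^2$, and invoke the flat Poincar\'e inequality for $-y''$ on $[a,\gb]$ with at least one Dirichlet endpoint. Both routes produce the same rate $\gt_\gb\sim a^2/(\gb-a)^2$. Your approach is more elementary and avoids constructing the spectral basis; the paper's approach makes the spectral structure of $\tilde{\cL}$ explicit, which aligns with how the operator is used downstream but is not strictly needed for the lemma as stated.
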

\begin{proof}
Consider as special case the BVs $y(a)=y(\gb)=0$ and the space $\cD_{\gb}$ given by the span of $\{\phi_{\gb,j}\}_{j=1}^{\infty}$ for
$
\phi_{\gb,j}(r)=\sin(\sqrt{\gl_{\gb,j}}\Upsilon(r))
$ 
with $\Upsilon(r)$ $\df\int_a^r \frac{dt}{t^2}$
and  $\sqrt{\gl_{\gb,j}}$ $\df$ $j\pi/\Upsilon(\gb).$ By the density of trigonometric polynomials and diffeomorphic properties of $\Upsilon,$
property i) follows.
Here, we find that the $\phi_{\gb,j}$ satisfy
\begin{equation}\notag\int_{\cI_{\gb}}\tilde{\cL}[\phi_{\gb,j}](r)\phi_{\gb,j}(r)dr =-\gl_{\gb,j}\int_{\cI_{\gb}}\phi_{\gb,j}^2(r)\Upsilon'(r)dr
\end{equation}
and that ii) follows.
Statement iii) also follows whereby
$1/\Upsilon(\gb)> \frac{a^2}{\gb-a}$ holds.
The arguments for all such conditions $\Pi$ are essentially the same, mutatis-mutandis, for homogeneous Dirichlet BCs in at least one endpoint. 
\end{proof} 

Following Lemma \ref{FSLem}, we will set $\gb$ to depend on $\gs$ in 
\begin{equation}\label{beta}\gb_{\gs}\df a+z\gs^{-\ga}
\end{equation}
for some positive constants $z,\ga.$ Moreover, we will simply denote by $\Pi_{\gb}(\gs)$ those boundary conditions/agreements of the form $\Pi(\gb)$ but applied to endpoints of $\cI_{\gb};$ and, we likewise denote by
by $\Pi_{j,k;\gb}$ those of the form $\Pi_{j,k},$ but imposed on $\vec{\Phi}$ at the endpoints
of $\cI_{\gb}$ - those of spaces $\cD_{\gb}$ as above.

\begin{thm}\label{limitwca}
Under the adiabatic condition, the system (\ref{otherfirst})-(\ref{otherthird}) satisfies the following for sufficiently large
$\gs\rightarrow \infty$ and small $z>0$ for some residual solutions $\vec{Y}_0$ and for $\gb_{\gs}$ as in (\ref{beta});
\begin{itemize}
\item[i)] (Three-point BC) solutions $\vec{X}$ are $\cC$-approximable for those whose components $\vec{\Phi}$ satisfy $\Pi_{j,k;\gb_{\gs}}$ ($j,k$ fixed)
where $\vec{Y},\vec{Y}_0$ have boundary agreement of the form $\Pi(\gs);$
\item[ii)] (Two-point BC) solutions $\vec{X}$ are $\cC$-approximable for $\vec{\Phi}$ as in i) but  
where $\vec{Y},\vec{Y}_0$ have boundary agreement of type $\Pi_{\gb_{\gs}}(\gs);$ 
\item[iii)] the system has a $\cC$-approximable basis in either sense i), ii); 
\item[iv)] the implied strong-CA estimates in cases i), ii) satisfy
\begin{equation}\label{diffest}
\|\eta-\eta_0\|_{\infty;\gb_{\gs}}\leq \frac{C_1}{\gs^{3\ga}}\|\eta_0\|_{\infty;\gb_{\gs}},\,\|u-u_0\|_{\infty;\gb_{\gs}}\leq \frac{C_2}{\gs^{3\ga+2}}\|\eta_0\|_{\infty;\gb_{\gs}},
\end{equation} for $\ga=1/2, 1,$ respectively, for  
some positive constants $C_1,C_2,$ whereby 
\item[v)] the following hold uniformly for $\gs$ on compact subsets of $\BR^+$
$$\|\eta-\eta_0\|_{\infty;\gb_{\gs}}, \|u-u_0\|_{\infty;\gb_{\gs}}=O(z)\,({\text as}\, z\rightarrow 0^+)
$$
\end{itemize}
\end{thm}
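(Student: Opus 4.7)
The plan is to apply Theorem \ref{thm1} ii) to equation (\ref{adiabaticinteqn}), with the self-adjoint structure coming from Lemma \ref{FSLem}, the operator norm bound coming from Proposition \ref{innerprodest}, and the near-symmetry of $\cF_\gs$ coming from Proposition \ref{prop2}. The key point is that all three ingredients must be rebalanced on the shrinking interval $\cI_{\gb_\gs}$; the exponent $\ga$ and the small constant $z$ serve precisely to arrange this balance.

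I would treat case (ii) first, since there everything lives on $\cI_{\gb_\gs}$. Take $\BH = L^2(\cI_{\gb_\gs})$ and $\fD = \cD_{\gb_\gs}$ from Lemma \ref{FSLem}, whose choice is dictated by the $\Pi_{j,k;\gb_\gs}$ BCs on $\vec{\Phi}$. Absorbing the bounded coefficient term $(\kappa h - \gL)\Phi$ of $\cL$ into the principal part $\tilde{\cL}$, item ii) of Lemma \ref{FSLem} gives $\langle \cL[\Phi],\Phi\rangle \leq -\gt_{\gb_\gs}\|\Phi\|^2$ with $\gt_{\gb_\gs}$ growing like $a^2/(\gb_\gs-a)^2 \sim \gs^{2\ga}/z^2$. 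Next, Proposition \ref{innerprodest} (whose constant is independent of $\gb$, as remarked) gives $\|\tilde{\fJ}_\gs\| = O(\gs^{-1})$, hence $\|\cF_\gs\| = O(\gs)$; and Proposition \ref{prop2} supplies a symmetric $\cF_s$ (with kernel proportional to $f(r)f(t)J^{(s)}_\gs(r,t)/\Delta$) for which $\|\cF_\gs - \cF_s\| = O(1)$. Choosing $\ga = 1$ and then $z$ small makes $\gamma(\gs) \df \gt_{\gb_\gs} - \|\cF_s\| - \gep > 0$, so that Theorem \ref{thm1} ii) yields the weak-CA bound $\|\Phi\| \leq \gamma^{-1}\|\vec{\cC}_\gs\cdot\vec{Y}_0\|$ on $\cI_{\gb_\gs}$.

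Case (i) is handled by the same mechanism, but now the Green's matrix $\cM$ and the fundamental matrix $\fN$ of Proposition \ref{prop2} are constructed on the whole of $[a,b]$ under $\Pi(\gs)$, so that $\vec{Y},\vec{Y}_0$ have boundary agreement on $[a,b]$ while $\vec{\Phi}$ still inherits homogeneous $\Pi_{j,k;\gb_\gs}$ BCs from $\cD_{\gb_\gs}$. The operator $\cF_\gs$ now comes from the global kernel $F_\gs$, whose norm on $\cI_{\gb_\gs}$ picks up the extra factor from the interval length, allowing the milder choice $\ga = 1/2$. Existence and uniqueness of $\vec{Y}_0$ with the prescribed boundary agreement follows from Proposition \ref{cW} applied to the basis constructed in Proposition \ref{LWest} (Appendix A) for $\gl = 0$. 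The strong-CA estimates in (iv) are then obtained by inserting the weak-CA bound into $\vec{Y}_p = \int_{\cI_{\gb_\gs}}\cM\vec{\cG}\Phi\,dt$, using $\|\cC_\gs\cM\vec{\cG}\|_\infty = O(\gs)$ and $\|\cM\vec{\cG}\|_\infty = O(\gs^{-1})$ over the square, together with the Cauchy-Schwarz passage analogous to (\ref{Ypest}); the factor $\sqrt{\gb_\gs - a} = \sqrt{z}\,\gs^{-\ga/2}$ from the shrinking $L^2$-norm produces the $\gs^{-3\ga}$ and $\gs^{-3\ga-2}$ rates. Item (iii) follows by applying (i) or (ii) to four linearly independent residual solutions $\vec{Y}_0$ and four admissible choices of BCs on $\vec{\Phi}$, as in Theorem \ref{thm3}. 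Item (v) follows from tracking the explicit $z$-dependence: each estimate carries a prefactor $z$ (from $\sqrt{\gb_\gs-a}$ and $\gamma^{-1}$ uniformly in compact $\gs$-ranges), so the differences vanish linearly in $z$.

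The main obstacle is the bookkeeping in the strong-CA step: we must simultaneously (a) keep $\gt_{\gb_\gs}$ large enough to dominate $\|\cF_s\|$, which pushes $\ga$ upward, and (b) control $\|\cC_\gs\cM\vec{\cG}\|_\infty\sqrt{\gb_\gs-a}/\gamma$ so that $\cZ \to 0$, which benefits from $\ga$ small. The two admissible exponents $\ga=1/2$ and $\ga=1$ correspond to the two distinct settings for $\cM$ (global vs.\ local), and verifying the strict inequality in (\ref{mainineq2}) against these competing requirements is the technical heart of the proof; once the inequality is secured with explicit tracking of $z$, items (iv) and (v) drop out by arithmetic on the resulting bounds.
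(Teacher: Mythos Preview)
Your overall plan coincides with the paper's: on $\cD_{\gb_\gs}$, bound $\langle\cL[\Phi],\Phi\rangle$ from below via Lemma \ref{FSLem} (yielding $\gt_{\gb_\gs}\sim \gs^{2\ga}/z^2$), bound $\langle\cF_\gs[\Phi],\Phi\rangle$ from above via Proposition \ref{innerprodest}, apply Theorem \ref{thm1}, and then lift weak-CA to strong-CA through the Cauchy--Schwarz step analogous to (\ref{Ypest}), picking up the factor $\sqrt{\gb_\gs-a}$. The resulting moduli $\fz,\cZ$ and the exponents $\ga=1/2,1$ for the two cases are exactly what the paper records in (\ref{themods}).

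Two points need tightening. First, your account of \emph{why} the two cases earn different $\ga$ is not the paper's. In the paper, case (i) invokes Proposition \ref{innerprodest} as stated on $[a,b]$ to get $|\langle\cF_\gs[y],y\rangle_\gb|\leq K_2\gs\|y\|_\gb^2$, and balancing this against $\gt_{\gb_\gs}\sim\gs^{2\ga}/z^2$ forces $\ga=1/2$; case (ii) instead reworks (\ref{propeqn2}) on $\cI_{\gb_\gs}$ through the shrinking $L_{\gb_\gs}$, producing a bound of order $z^{-1}\gs^{\ga+1}$ that forces $\ga=1$. Your version (the global kernel ``picking up the extra factor from the interval length'') is not how the paper separates the cases, though the target exponents agree. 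Second, your item (iii) misfires: the residual equation (\ref{residual}) is $2\times2$, so there are only two linearly independent $\vec{Y}_0$, not four. The $\cC$-approximable basis of the full $4\times4$ system is not assembled from residual solutions but from columns of $X\fW^{-1}$ (resp.\ $X\fV^{-1}$), whose admissibility under the stated boundary conditions is precisely the content of Propositions \ref{lrgsig} and \ref{lastprop} in Appendix B; these are the high-frequency analogues of Proposition \ref{lastprop1} used for (iii) in Theorem \ref{thm3}, and you should cite them rather than appeal to independent $\vec{Y}_0$'s.
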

\begin{proof}
We will show that in cases i) and ii) the moduli may take the form
\begin{equation}\label{themods}
\fz(\gs) = \frac{K_1z}{\gs^{2\ga}};\cZ(\gs)=\frac{K_2z}{\gs^{3\ga}}
\end{equation}
to hold for all sufficiently large $\gs$ and sufficiently small $z>0$ over 
various $\cI_{\gb},$ starting with $\fz:$
In each of these cases it follows from 
Lemma \ref{FSLem} there is a positive constant $K_1$ so that
$\gt_{\gb_{\gs}}> K_1\frac{\gs^{2\ga}}{z^2}$ when $a<\gb\leq b.$
In case i) from Proposition \ref{innerprodest} it follows that, for appropriate BC, there is a positive constant $K_2$ so that 
$$|\langle\cF_{\gs}[y],y\rangle_{\gb}|\leq K_2\gs\|y\|^2_{\gb}$$ holds for $y\in \cD_{\gb},$  
in which case the form of $\fz$ applies for $\ga=1/2$ for all sufficiently large $\gs$
uniformly for some sufficiently small $z.$ 
In case ii) we note that there is a positive constant $C$ so that 
$L>$ $C(a-\gb)$ for $0<\gb\leq b$ and that we may modify 
Proposition \ref{innerprodest} via
(\ref{propeqn2}) 
to show that there is a positive constant $K_3$ so that 
$$|\langle\cF_{\gs}[y],y\rangle_{\gb}|\leq z^{-1}K_3\gs^{\ga+1}\|y\|^2_{\gb}$$  holds for $y\in \cD_{\gb_{\gs} }$  
for all sufficiently small z and large $\gs,$ in which case the form $\fz$ applies for $\ga=1$ for all sufficiently large $\gs.$
Indeed, as seen from (\ref{dets}), $z$ may be further chosen to be so small that $\Pi$ need not even need to vary with $\gs.$

We now proceed with the estimates of $\cZ:$
For case i) we apply Proposition \ref{prop2} 
whereby the strong-CA estimates follow.
For case ii)
we may follow Proposition \ref{prop2} whereby it is clear that
$\|J_{\gs}\|_{\infty}$ is bounded independent of $r_2=\gb$ and $\gs$ so that boundary conditions in case ii)
imposed on $\vec{Y}_p$ yield, as in (\ref{Ypest}),
 $$\|\cC_{\gs}\vec{Y}_p\|_{\infty;\gb}\leq K_4\sqrt{\gb-a}\|\Phi(\cdot,\gs)\|_{\gb}\leq (\gb-a)\fz(\gs)\|\cC_{\gs}\cdot\vec{Y}_0\|_{\infty;\gb}
$$ uniformly for $a<\gb\leq b,$
whereby the results follow.

Finally, result iii) follows from Propositions \ref{lrgsig} and \ref{lastprop} (see Appendix B) and the fact that each column of
$X\fW^{-1}$ or of $X\fV^{-1}$ belongs to a subspace $\cD_{\gb}$ as in Lemma \ref{FSLem}.
The results iv) and v) follow from (\ref{themods}) in each of the cases i) and ii).
 \end{proof}
Several remarks are in order:
The spaces $\cD_{\gb}$ of Lemma \ref{FSLem}, if extended to the interval $\cI,$ can be considered to be projections of
components $\Phi$ onto subspaces of functions involving large numbers of zeros on $[a,b]$ in the sense of the St\"urm Oscillation Theorem.
Such a condition, termed as high
{\it radial order} (see cited works following (\ref{otherthird}) in Section \ref{introsec}),
is also one whereby the Cowling approximation is expected to be accurate.
Although tangentially related, such investigation is beyond the scope of the present work.

\subsection{High Degree and High Frequency} 
We now let both $\gL_{\ell}$ and $\gs$  depend on a single parameter $\gz$ as $\gz\rightarrow$ $\infty$ to address results analogous to those of Theorems \ref{thm3} and \ref{limitwca} in the following cases:
If we replace the parameters $\gL_{\ell},\gs^2$ by $\gz^2,  z\gz,$ respectively, then by procedures similar to those of (\ref{Beqn})  (see Appendix A)
applied to (\ref{residual}), we find estimates for matrices $\fN$ as in (\ref{fN}) but with
$\Theta_{r_k}$  replaced by
 $$\Theta_{r_k}(r,\gz)\df\int_{r_k}^r\frac{1}{t} -\frac{zt}{2\gz c^2(t)}\,dt;$$
and for $\cP$ as in (\ref{cP}) .
If we instead replace parameters $\gL_{\ell},\gs^2$ by $ z\gz,\gz^{3/2},$ respectively, we have such estimates for the matrices $\fN$, but with
$\Theta_{r_k}$ replaced by
$$\Theta_{r_k}(r,\gs)\df\int_{r_k}^r\frac{1}{c(t)} -\frac{z c(t)}{2\sqrt{\gz} t^2}\,dt$$
and for $\cP$ as in (\ref{nextcP}).  We are ready to state
\begin{thm}
Solutions $\vec{X}$ whose components $\Phi$ satisfy $\fS_{\cL}(\vec{\theta})$ (fixed) under 
the following parametrizations for sufficiently large $\gs$ and/or sufficiently large $z$
are $\cC$-approximable with $\vec{Y},\vec{Y}_0$ having the following type of boundary agreements on $[a,b]:$
\begin{itemize}
\item[i)] Parametrization
$\gL_{\ell}\rightarrow$ $\gz^2,$ $\gs^2\rightarrow$ $z\gz$
boundary agreement type $\Pi_{j,k};$
 or, 
\item[ii)] Parametrization $\gL_{\ell}\rightarrow$ $z\gz^{\ga},$  $\gs^2\rightarrow$ $\gz^2$ for 
some $1<\ga<2$ and boundary agreement of type $\Pi(\gs).$
\end{itemize}
\end{thm}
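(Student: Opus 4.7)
The plan is to treat each of the two cases as a direct extension of the arguments of Sections \ref{s2}.1 and \ref{s2}.2 respectively. The given parametrizations, combined with $1<\ga<2$, place each case cleanly into one regime of the residual equation: case i) is hyperbolic (since $\gL_{\ell}=\gz^2$ dominates $\gs^2=z\gz$, making the effective potential $\gL_{\ell}/r^2-\gs^2/c^2$ positive and of order $\gz^2$), while case ii) is oscillatory (since $\gs^2=\gz^2$ dominates $\gL_{\ell}=z\gz^{\ga}$, making it negative of order $\gz^2$, with no turning points on $[a,b]$ for large $\gz$). In both cases the scheme is identical: apply Proposition \ref{cW} to obtain boundary agreement between $\vec{Y}$ and $\vec{Y}_0$, verify the hypotheses of Theorem \ref{thm1} with the appropriate constant $\gamma$, and then derive strong-CA from the kernel estimates on $\cM\vec{\cG}$ built from the two forms of $\fN$ displayed just above the theorem. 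The $\Theta_{r_k}$ perturbations ($O(\gz^{-1})$ and $O(\gz^{-1/2})$ respectively) do not disturb the leading hyperbolic or oscillatory behavior, so the Green's-matrix estimates of Propositions \ref{prop1} and \ref{prop2} transfer essentially verbatim.

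For case i), Proposition \ref{cW} applied to the $\fN_{j,k}$ yields $\vec{Y},\vec{Y}_0$ with boundary agreement $\Pi_{j,k}$ for any fixed $j,k$. The kernel estimate (\ref{kerest}) carries over with $\Delta\sim e^{\gz L}$, so $\|\cM\vec{\cG}\|_{\infty}=O(1)$ and the analogue of (\ref{cFEst}) gives $\|\cF_{\gz}\|=O(\gs^2/\gz^2)=O(z/\gz)$. Under fixed $\fS_{\cL}(\vec{\theta})$ the greatest eigenvalue of $\cL_{\gz}$ is asymptotic to $-\gz^2$, so Theorem \ref{thm1}(i) applies with $\gamma(\gz)=\gz^2(1+o(1))$ positive for large $\gz$, giving weak-CA with modulus $\fz=O(\gz^{-2})$. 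Strong-CA then follows as in Theorem \ref{thm3}, since the scaling of $\cC_{\gs}\cM\vec{\cG}$ yields $\cZ=O(z\gz^{-2})\to 0$.

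For case ii), the fundamental matrix $\fN$ is built from the oscillatory form of $\Theta_{r_k}(r,\gs)$, and the piecewise construction of $\Pi(\gs)$ following Proposition \ref{innerprodest} produces $\Delta(\gs)$ bounded below by $1/2$ on unions of frequency windows covering all sufficiently large $\gz$; Proposition \ref{cW} then gives boundary agreement of type $\Pi(\gs)$ on the full interval $[a,b]$ (no shrinking needed). The kernel scaling from Proposition \ref{prop2} gives $\|\cF_{\gz}\|=O(\gs)=O(\gz)$, while $\|\cC_{\gs}\cM\vec{\cG}\|_{\infty}=O(\gz)$ (the $\gs^2$ in $\cC_{\gs}$ against the $\gs^{-1}$ in the second row of $\cP$). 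Under fixed $\fS_{\cL}(\vec{\theta})$ the greatest eigenvalue of $\cL$ is now asymptotic to $-\gL_{\ell}=-z\gz^{\ga}$, so $\gamma(\gz)=z\gz^{\ga}(1+o(1))$; the hypothesis $\ga>1$ is precisely what makes $\gamma/\|\cF\|\to\infty$. Weak-CA then holds with $\fz=O(z^{-1}\gz^{-\ga})$, and strong-CA with $\cZ=O(\gz\,\fz)=O(z^{-1}\gz^{1-\ga})\to 0$.

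The main obstacle I anticipate is verifying that Proposition \ref{innerprodest} extends to case ii), where $\gL_{\ell}=z\gz^{\ga}$ itself grows. In its original proof the constant $C$ of (\ref{propeqn2}) bounds $\gL_{\ell}\langle c^2 h/r^2\cdot y,y\rangle$ and is therefore of order $\gL_{\ell}$; naively substituting $\gL_{\ell}=z\gz^{\ga}$ with $\ga>1$ would overwhelm the $\gs\pi/(12L)$ term. What must be checked is that a careful WKB analysis of the $h$-weighted eigenvalue problem $(r^2\rho y')'-\gL_{\ell}\rho y=-\gl h y$ shows that at the relevant $n\sim\gz L/\pi$ the eigenvalues satisfy $\sqrt{\gl_n}=n\pi/L+O(\gL_{\ell}/n)=n\pi/L+O(\gz^{\ga-1})$, so that the spectral shifts caused by the growing $\gL_{\ell}$ are $o(\gz)$, smaller than the natural spacing of order $\gz$ between neighboring eigenvalues. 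The constraint $\ga<2$ guarantees both the absence of turning points in $[a,b]$ (validating the WKB) and the smallness of these shifts, so the piecewise choice of $\fS_{\cJ}(\gs)$ still yields $|\gs^2-\gl_n|\geq c\gz$ uniformly in large $\gz$. Once this extension is in hand, the rest of the argument proceeds routinely along the lines of Theorems \ref{thm3} and \ref{limitwca}.
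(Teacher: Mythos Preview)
Your proposal is correct and follows essentially the same route as the paper: case i) recycles the machinery of Proposition~\ref{frmrthm2}/Theorem~\ref{thm3} to get $\gamma\sim\gz^2$ and $\|\cF_\gz\|=O(\gz^{-1})$, and case ii) adapts the oscillatory construction of Section~\ref{s2}.2 with $\gamma\sim z\gz^{\ga}$ dominating $\|\cF\|=O(\gz)$ precisely because $\ga>1$. Your final paragraph on extending Proposition~\ref{innerprodest}---noting that the growing $\gL_\ell$ makes the constant $C$ in (\ref{propeqn2}) of order $\gz^{\ga}$, and that a WKB eigenvalue analysis with $\ga<2$ keeps the shifts $o(\gz)$ relative to the $O(\gz)$ spacing---is exactly the content the paper compresses into the phrase ``replace $\frac{\gs\pi}{4L}$ of (\ref{propeqn2}) by a multiple $\epsilon\frac{\gs\pi}{4L}$''; you have made the mechanism explicit where the paper leaves it implicit.

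One small remark: in your case ii) you bound $\|\cF_\gz\|$ via the kernel estimate of Proposition~\ref{prop2}, which already suffices for Theorem~\ref{thm1} once $|\Delta(\gs)|\ge 1/2$ is arranged by the piecewise $\Pi(\gs)$ choice. Strictly speaking, your own argument therefore does not require the full strength of Proposition~\ref{innerprodest}; what you need from that passage is only the determinant construction following (\ref{dets}), and in the adiabatic case $\cF$ is exactly symmetric under each $\Pi_{j,k}$ (since these correspond to self-adjoint boundary conditions for $\cJ$). The WKB discussion you give is nonetheless the right justification if one follows the paper's route literally, and the constraint $\ga<2$ is used exactly as you say: to prevent turning points and to keep eigenvalue shifts below the spacing.
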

\begin{proof} We need only to modify the proofs of the corresponding theorems along with the supporting propositions.
For part i) it follows, mutatis-mutandis, that $\gamma$ $=O(\gz^2)$ and that, as in Proposition \ref{prop1}, the associated $F_{\gz}$ satisfies
 $\|F\|_{\infty}$ $=O(\gz^{-1}),$ with supremum norm over the square $[a,b]\times[a,b].$ It follows likewise that, under the given boundary conditions, 
 $\fz(\gz)$ $=o(1)$ and, hence, $\cZ(\gz)=$ $o(\gz^{-1}).$

For part ii) we first show that Proposition \ref{innerprodest} if, via (\ref{propeqn1}) , we replace
$\frac{\gs\pi}{4L}$ of (\ref{propeqn2}) by a multiple  $\epsilon \frac{\gs\pi}{4L}$ for some constant $0<\epsilon<1$ for sufficiently large $\gz.$
From there the rest of the proof also follows by directly applying Theorem \ref{thm1} to modify the proof of Theorem \ref{thm3} 
and thereby show that $\gamma$ $>Cz\gs^{\ga}$:  The desired estimate for $\fz$ follows and,  via (\ref{Ypest}), so does that of $\cZ$.
\end{proof}
In more general cases in which one of $\gL_{\ell}$ or $\gs^2$ dominates upon such reparametrization 
one can resort to WKB approximations (cf \cite{o}) rather than  formal Laurent series expansions whereas only leading-order asymptotic estimates are required.
However, a complete treatment of such cases and related the full-system asymptotics are beyond the intended scope of the present work, whereby we now
depart from the adiabatic-equilibrium case.

\section{Non--Adiabatic Equilibrium}\label{s3}

With $N^2>0$ we now proceed with methods parallel to those of Theorems \ref{thm3} and \ref{limitwca} under the parametrization $\gL_{\ell}$ $\rightarrow$ $\gz^2$ (as $\gz$ $\rightarrow$ $\infty$). In the fixed-frequency case, however,  
we now have to consider two cases, $0< N^2<\gs^2$ and $0<\gs^2< N^2,$ which are distinguished by exponential or oscillatory behavior,  respectively, of the
associated residual solutions $\vec{Y}_0$.

\subsection{High Degree and Fixed Frequency in the Exponential Case} 
We now introduce the following notation:
\begin{equation}\label{newerP}
\cP=\text{diag}(1/\sqrt{\rho \cH},\, \gz^{-1}\sqrt{\cH/\rho })
\end{equation}
\begin{equation}\label{newerTheta}
\Theta_{r_i}(r,\gs)\df\int_{r_i}^r\cH(t,\gs)\,dt\,\text{for}\,\cH(r,\gs)\df r^{-1}\sqrt{1- \gs^{-2}N^2(r)}
\end{equation}
We treat various asymptotic cases where $N^2$ $\neq \gs^2$ on $[a,b]$ whereby our asymptotic estimates hold: Indeed,
such asymptotic properties are already known and classified, albeit, under the presumption that the Cowling approximation is accurate (see \cite{acdk} Section 3.4).

\begin{prop}\label{loexpcase}
Under the aforementioned conditions, the residual equation has fundamental solutions $\fN_{j,k}$ as in (\ref{fN}) but with $\cP$ and $\Theta$ as in (\ref{newerP}) and (\ref{newerTheta}).
\end{prop}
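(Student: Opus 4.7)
The plan is to derive the asymptotic fundamental matrices by a two-step gauge transformation that reduces $\vec{Y}' = \cA \vec{Y}$ to a diagonalized leading-order form, paralleling the treatment used for Proposition \ref{LWest} in Appendix~A. The key observation is that, under the parametrization $\gL_{\ell} \to \gz^2$, the dominant entries of $\cA$ are the off-diagonal ones: the $(1,2)$ entry contributes $\gz^2$ and the $(2,1)$ entry contributes $r^{-2}(1 - N^2/\gs^2)$. Their product is $\gz^2 \cH^2(r,\gs)$, so the formal characteristic roots are $\pm\gz\cH(r,\gs)$. In the present exponential subcase ($0 < N^2 < \gs^2$ on $[a,b]$), $\cH$ is real, smooth and bounded away from zero, which is what makes the asymptotic analysis go through.

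First, I would perform the change of variable $\vec{Y} = \cP \vec{Z}$ with $\cP$ as in (\ref{newerP}); this rescaling is chosen precisely so that both off-diagonal entries of $\cP^{-1}\cA\cP$ become of the same order $\gz\cH$, while the additional term $-\cP^{-1}\cP'$ and the surviving diagonal contributions of $\cA$ remain of order $O(1)$, uniformly on $[a,b]$. Thus the system takes the form
\begin{equation}
\derr{\vec{Z}} \;=\; \bigl(\gz\,\cH(r,\gs)\,\fI_0 \;+\; B(r,\gs)\bigr)\vec{Z},
\notag
\end{equation}
with $\fI_0$ the antidiagonal matrix whose nonzero entries are $1$ and $B$ uniformly bounded as $\gz\to\infty$.

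Next, a constant orthogonal rotation that diagonalizes $\fI_0$ into $\mathrm{diag}(1,-1)$ puts the system in the standard form to which the asymptotic perturbation argument of Coddington--Levinson (invoked in the analogous Propositions of Appendix~A) applies directly. This yields a fundamental system whose columns are of the shape $\exp(\pm\gz\Theta_{r_k}(r,\gs))\bigl(\vec e_{\pm} + O(\gz^{-1})\bigr)$ with $\Theta_{r_k}$ as in (\ref{newerTheta}), the $O(\gz^{-1})$ being uniform in $r\in[a,b]$. Undoing the constant rotation recombines the exponentials into the hyperbolic pairs $\sinh(\gz\Theta_{r_k}),\cosh(\gz\Theta_{r_k})$ displayed in the definition of $\fM_{j,k}$ in (\ref{fN}), and premultiplying by $\cP$ yields the asserted fundamental matrices $\fN_{j,k}$.

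The main obstacle is purely technical: establishing uniformity of the $O(\gz^{-1})$ remainder. This relies on $\cH(\cdot,\gs)$ being bounded away from zero on $[a,b]$, which is guaranteed by the standing hypothesis $N^2 < \gs^2$ together with continuity and compactness, and on controlling the lower-order term $B(r,\gs)$ (in particular the contribution of $\cP^{-1}\cP'$, where the $\gz^{-1}$ factor in the second diagonal entry of $\cP$ is crucial). Once these uniform bounds are in place, the rest is a verbatim adaptation of the Appendix~A argument.
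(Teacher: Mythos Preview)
Your argument is correct. The paper states this proposition without proof, since it is a direct consequence of Proposition~\ref{firstasymp} in Appendix~A: the full $4\times 4$ asymptotics there are obtained by exactly the Coddington--Levinson diagonalization you describe, and with $\gl=0$ the $(u,\eta)$-block of columns $3,4$ of (\ref{firstmtx}) gives precisely the residual fundamental matrix $\cP(I+O(\gz^{-1}))\fM_{j,k}$ with $\cP,\Theta$ as in (\ref{newerP})--(\ref{newerTheta}). Your direct $2\times 2$ reduction is the same computation restricted to the residual block, and is arguably cleaner for this statement.
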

We now proceed as in the corresponding adiabatic equilibrium case, but prepare to apply (\ref{mainineq2}) instead. We show that the integral operator $\cF$ differs from a symmetric operator by
operators of small $L^2$ norm: 
\begin{prop}\label{loexpcaseprop} Corresponding to a basis $\fN$ as in (\ref{fN}) the operator $\cF$ as in (\ref{main})  satisfies 
\begin{equation}\|\cF_{\gz}-\cF_{\gz}^{(s)}\|=O(\gz^{-1})\label{opdiff}\end{equation}
where the operator $\cF_{\gz}^{(s)}$ is symmetric and $L^2-$norm bounded uniformly in $\gz,$ whose kernel $F^{(s)}$ satisfies
\begin{equation}\label{Fs}F_{\gz}^{(s)}(r,t)=O(\gz)\end{equation}
($\text{as}\, \gz\rightarrow\infty$)
on the square $[a,b]\times[a,b].$
\end{prop}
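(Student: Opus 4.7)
The natural choice is to take $\cF_{\gz}^{(s)}\df\tfrac{1}{2}(\cF_{\gz}+\cF_{\gz}^{*})$ ($L^{2}$-adjoint), which is tautologically symmetric; then (\ref{opdiff}) reduces to $\|\cF_{\gz}-\cF_{\gz}^{*}\|=O(\gz^{-1})$. Since the coefficients are real, the kernel of $\cF_{\gz}^{*}$ is $F_{\gz}(t,r)$, so the task becomes to control the antisymmetric kernel $F_{\gz}(r,t)-F_{\gz}(t,r)$ in a norm majorizing the $L^{2}$ operator norm (a Schur-type estimate against the hyperbolic kernel decay will suffice).

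First I would substitute $\fN=\cP(I+O(\gz^{-1}))\fM$ from Proposition \ref{loexpcase} into (\ref{Mformula}), decomposing $F_{\gz}=F_{\gz}^{(0)}+F_{\gz}^{(\mathrm{err})}$, where $F_{\gz}^{(0)}$ uses the unperturbed sandwich $\cP(r)\fM(r)\fX(r,t)\,\text{adj}(\fM)(t)\cP^{-1}(t)$ and $F_{\gz}^{(\mathrm{err})}$ collects the $O(\gz^{-1})$ multiplicative corrections. The error kernel has the same concentration structure as the leading one but is scaled by an additional factor of $O(\gz^{-1})$, so its $L^{2}$ operator norm is $O(\gz^{-1})$ by the same Schur-type bound that will control $\|\cF_{\gz}^{(s)}\|$.

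Next I would follow the calculation of Proposition \ref{prop1}: using (\ref{symmprop}) together with $\fX(r,t)\fI=\fI^{\top}\fX(t,r)$, each scalar entry of $\fM(r)\fX(r,t)\,\text{adj}(\fM)(t)$ is symmetric or skew under $(r,t)\leftrightarrow(t,r)$. The diagonal conjugators $\cP(r),\cP^{-1}(t)$ merely rescale entries and interact with $\fI$ only by swapping the two diagonal components. Writing $F_{\gz}^{(0)}(r,t)=\sum_{i,j}\cC_{i}(r)P_{ij}(r,t)\cG_{j}(t)$ and isolating the antisymmetric part, the key algebraic input is the pointwise identity
\begin{equation}\notag
\cC_{1}(r)\cG_{1}(r)+\cC_{2}(r)\cG_{2}(r)\equiv 0,
\end{equation}
a direct consequence of $\vec{\cC}_{\gs}=(\kappa N^{2}\rho/g,\,\gs^{2}\kappa h)^{\top}$, $\vec{\cG}_{\gs}=(r^{2}/c^{2},\,-N^{2}/(\gs^{2}g))^{\top}$ with $h=r^{2}\rho/c^{2}$. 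Propagating this diagonal cancellation through the off-diagonal entries of the sandwich should collapse the $O(\gz)$-portion of $F_{\gz}^{(0)}(r,t)-F_{\gz}^{(0)}(t,r)$, leaving a residual of size $O(\gz^{-1})$.

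The chief obstacle will be this last step: the identity $\cC_{1}\cG_{1}+\cC_{2}\cG_{2}\equiv 0$ holds only along $r=t$, whereas the antisymmetric kernel we must bound lives off the diagonal. Converting the diagonal identity into the needed off-diagonal cancellation requires careful bookkeeping of the $\sinh/\cosh$ combinations in the sandwich, exploiting the fact that $\fX(r,t)$ restricts the relevant exponential factors of $\fM(r)$ and $\text{adj}(\fM)(t)$ to products of the form $\exp(\gz(\Theta_{r_{k}}(r)-\Theta_{r_{k}}(t)))$, which remain controllable across the square. Once this is in place, the pointwise bound $F_{\gz}^{(s)}(r,t)=O(\gz)$ follows from the same sandwich estimate, and the uniform $L^{2}$-operator-norm boundedness of $\cF_{\gz}^{(s)}$ follows because the ratio $J^{(s)}/\Delta$ of Proposition \ref{prop1} remains uniformly bounded and confines the $O(\gz)$ peak to an $O(\gz^{-1})$-wide strip about the Stokes diagonal.
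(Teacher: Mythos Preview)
Your plan diverges from the paper's route and, at the point you yourself flag as ``the chief obstacle,'' it has a genuine gap.

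The paper does \emph{not} take the abstract symmetric part. It constructs $F_{\gz}^{(s)}$ explicitly as the leading-order piece of $F_{\gz}$, expanded as
\[
F_{\gz}^{(s)}(r,t)=\frac{\gk}{\Delta(\gz)}\sum_{j,k=1}^{2}\Bigl(\tfrac{\gz}{\gs^{2}}\Bigr)^{k-j}A_{j,k}(r,t)\,F^{(j,k)}_{\gz}(r,t),
\qquad
F^{(j,k)}_{\gz}=E_{j}^{\top}\fM(r)\,\fX(r,t)\,\text{adj}(\fM)(t)\,E_{k},
\]
and proves symmetry term by term: the sandwich factors satisfy $F^{(j,k)}(r,t)=F^{(k,j)}(t,r)$ for $j\neq k$ and $F^{(j,j)}(r,t)=-F^{(k,k)}(t,r)$, while the scalar prefactors $A_{j,k}(r,t)$ (the products of entries of $\vec{\cC}^{\top}\cP$ at $r$ with those of $\cP^{-1}\vec{\cG}$ at $t$) obey $A_{1,2}(r,t)=-w(r)w(t)$, $A_{2,1}(r,t)=f(r)f(t)$, and $A_{1,1}(r,t)=-A_{2,2}(t,r)$. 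These product/swap identities, not a diagonal cancellation, are what make the whole sum symmetric at each power of $\gz$.

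Your key input $\cC_{1}\cG_{1}+\cC_{2}\cG_{2}\equiv 0$ is correct but too weak: it is a relation at a single point and says only that $\vec{\cC}(r)=\lambda(r)\,\fI\vec{\cG}(r)$ for \emph{some} scalar $\lambda(r)$. What actually drives the off-diagonal cancellation is the stronger fact that, with the specific $\cP=\mathrm{diag}\bigl(1/\sqrt{\rho\cH},\,\gz^{-1}\sqrt{\cH/\rho}\bigr)$, the entries of $\vec{\cC}^{\top}\cP$ are, up to constants, the \emph{same functions of $r$} as the entries of $\cP^{-1}\vec{\cG}$ in swapped order (equivalently, $\lambda(r)=-\gs^{2}\gk\,\rho(r)$ together with $P_{1}P_{2}=(\gz\rho)^{-1}$). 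That is what forces $A_{1,2}$ and $A_{2,1}$ to be symmetric products $g(r)g(t)$ and makes $A_{1,1}(r,t)=-A_{2,2}(t,r)$. Without isolating and using this pointwise proportionality, your ``propagating the diagonal cancellation through the off-diagonal entries'' remains a hope rather than an argument; the antisymmetric part of your $F^{(0)}$ has, a priori, terms like $S_{12}(r,t)\bigl[a_{1}(r)b_{2}(t)-a_{2}(t)b_{1}(r)\bigr]$ which do not vanish from $\vec{\cC}\cdot\vec{\cG}=0$ alone. Identify the product structure of the $A_{j,k}$ explicitly (as the paper does) and the rest of your outline---the $O(\gz^{-1})$ error kernel, the Schur/Hilbert--Schmidt bound giving uniform $L^{2}$ boundedness, and the pointwise $O(\gz)$ from the $(j,k)=(1,2)$ term---goes through.
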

\begin{proof}
We follow the construction (\ref{F})  to find $F_{\gz}=F_{\gz}^{(s)}(1+O(\gz^{-1}))$ for
\begin{equation}\label{Jsbnd}
F_{\gz}^{(s)}(r,t)=\frac{\gk}{\Delta(\gz)}\sum_{j,k=1}^2(\frac{\gz}{\gs^2})^{k-j}A_{j,k}(r,t) F^{(j,k)}_{\gz}(r,t)
\end{equation}
where
\begin{equation}\label{Jjk}
{F}^{(j,k)}_{\gz}(r,t)\df E_j^{\top}\fM(r,\zeta) \fX(r,t)\text{adj}(\fM)(t,\zeta) E_k
\end{equation} and
where $A_{j,k}$ are products of various elements of $(\vec{\cC}^{\top}\cP)(r)$ and  $(\cP^{-1}\vec{\cG})(t),$ 
given by
\begin{align} \label{Ajk} A_{1,2}(r,t)&=-w(r)w(t)\,\text{for} \,w=\frac{N^2r}{g}\sqrt{\frac{\rho}{\cH}}\\
A_{1,1}(r,t)&=u(r)v(t)=-A_{2,2}(t,r)\,\text{for}\, v= h\sqrt{\frac{\cH}{\rho}}\,\text{and}\, u=\frac{N^2}{g}\sqrt{\frac{\rho}{\cH}}\notag\\
A_{2,1}(r,t)&=f(r)f(t)\,\text{for}\,f=h\sqrt{\frac{\cH}{\rho}} \notag
\end{align}
That $F_{\gz}^{(s)}$ is symmetric results from the following: For $j\neq k,$
\begin{align}\notag
{F}^{(j,k)}_{\gz}(r,t)&\df E_j^{\top}\fM(r,\zeta) \fX(r,t)\text{adj}(\fM)(t,\zeta) E_k
\\
 &=E_j^{\top}\fM(r,\zeta) \fX(r,t)\fI\left(\fM(t,\zeta)\right)^{\top}\left(\fI^{\top} E_k\right)\notag\\
&=
\left(E_j^{\top}\fM(r,\zeta)\right) \fI^{\top}\fX(t,r)\left(\fM(t,\zeta)\right)^{\top}\left(\fI^{\top} E_k\right)\notag\\
&=
\left(\fI E_j\right)^{\top}\text{adj}(\fM)(r,\zeta)\fX(t,r)\left(\fM(t,\zeta)\right)^{\top}\left(\fI^{\top} E_k\right)\notag\\
&=
E_k^{\top}\text{adj}(\fM)(r,\zeta)\fX(t,r)\left(\fM(t,\zeta)\right)^{\top} E_j\notag\\
&=\left(F^{(k,j)}_{\gz}(t,r)\right)^{\top}=F^{(k,j)}_{\gz}(t,r);\notag
\end{align}
\begin{align}
{F}^{(j,j)}_{\gz}(r,t)
&=E_j^{\top}\fM(r,\zeta) \fX(r,t)\text{adj}(\fM)(t,\zeta) E_j\notag\\
 &=E_j^{\top}\fM(r,\zeta) \fX(r,t)\fI\left(\fM(t,\zeta)\right)^{\top}\fI^{\top} E_j\notag\\
&=
\left(E_j^{\top}\fM(r,\zeta)\right) \fI^{\top}\fX(t,r)\left(\fM(t,\zeta)\right)^{\top}\fI^{\top} E_j\notag\\
&=
\left(\cI E_j\right)^{\top} \text{adj}(\fM)(r,\zeta)\fX(t,r)\left(\fM(t,\zeta)\right)^{\top}\fI^{\top} E_j\notag\\
&=
-E_k\text{adj}(\fM)(r,\zeta)\fX(t,r)\left(\fM(t,\zeta)\right)^{\top}E_k\notag\\
&=-\left(F^{(k,k)}_{\gz}(t,r)\right)^{\top}=-F^{(k,k)}_{\gz}(t,r)\notag
\end{align}
Finally, (\ref{Fs}) follows as in the proof of Theorem \ref{thm3}; and, (\ref{opdiff}) follows from the
fact that each $F_{\gz}^{(j,k)}(r,t)$ is majorized by the expression in (\ref{kerest}) whose 
$L^2$ norm on the square grows to order $O(\gz^{-1}e^{\gz\Theta_a(b)}).$
\end{proof}

We proceed to show that weak and strong CA hold in
\begin{thm}\label{thm5}
The conclusions of Theorem \ref{thm3} hold for the solution components $\vec{Y},\vec{Y}_0$ and $\Phi$ in the case that $0<N^2<\gs^2$ ($\gs$ fixed)
hold except for (\ref{diffest}) replaced by 
\begin{equation}\notag
\|\eta-\eta_0\|_{\infty}\leq \frac{C_1}{\gz^2}\|\eta_0\|_{\infty},\,\|u-u_0\|_{\infty}\leq \frac{C_2}{\gz}\|\eta_0\|_{\infty}
\end{equation}
\end{thm}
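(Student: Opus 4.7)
The plan is to parallel the proof of Theorem \ref{thm3} but now invoke part ii) of Theorem \ref{thm1} in place of part i), since by Proposition \ref{loexpcaseprop} the operator $\cF_{\gz}$ is no longer exactly symmetric but differs from a symmetric $\cF_{\gz}^{(s)}$ of uniformly bounded $L^2$ norm by an operator of norm $O(\gz^{-1})$. First I would verify that $\cL_{\gz}$ under boundary conditions $\fS_{\cL}(\vec{\theta})$ admits a greatest eigenvalue $\gn_0$ bounded independently of $\gz$, so that $\langle \cL_{\gz}[y],y\rangle \leq (|\gn_0|-\gz^2)\|y\|^2$ on its domain. Combining this bound with $\|\cF_{\gz}^{(s)}\|=O(1)$ and $\gep=O(\gz^{-1})$ in (\ref{mainineq2}) yields $\gamma(\gz)=\gz^2-O(1)$, and hence the weak-CA modulus $\fz(\gz)=O(\gz^{-2})$.

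For the two-point boundary structure I would use Proposition \ref{loexpcase} to build fundamental matrices $\fN_{j,k}$ for the residual equation with $\cP$ and $\Theta_{r_k}$ as in (\ref{newerP}) and (\ref{newerTheta}), confirm $\cW_{j,k}\neq 0$ via the factorization (\ref{fN}), and invoke Proposition \ref{cW} to obtain boundary agreement $\Pi_{j,k}$ between $\vec{Y}$ and $\vec{Y}_0$. The representation of the Green's matrix as in (\ref{Mformula}) then carries over, with the exponential majorization (\ref{kerest}) remaining valid under the updated phase $\Theta_{r_k}(r,\gs)=\int_{r_k}^r \cH(t,\gs)\,dt$, since $\cH>0$ on $[a,b]$ under the standing hypothesis $0<N^2<\gs^2$; in particular the ratio of these exponentials to $\Delta(\gz)$ remains $O(1)$ uniformly on the square.

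The strong-CA estimates then follow by tracking the $\gz$-dependence of $\cM(\cdot,\gz)\vec{\cG}$ carefully. The essential difference from the adiabatic setting is that $\vec{\cG}=(r^2/c^2,-N^2/(\gs^2 g))^{\top}$ now has a nonzero second component, and $\cP^{-1}\vec{\cG}$ inherits an explicit factor of $\gz$ in its second entry from the $\gz^{-1}$ on the second diagonal of $\cP$. Running this through the product $\cP(r)\fM(r,\gz)\fX(r,t)\mathrm{adj}(\fM)(t,\gz)\cP^{-1}(t)\vec{\cG}(t)$ and accounting for cancellations with the second diagonal entry of $\cP(r)$, one finds $\|u_p\|_{\infty}=O(\gz)\|\Phi\|$ while $\|\eta_p\|_{\infty}=O(1)\|\Phi\|$, which combined with $\|\Phi\|=O(\gz^{-2})\|\eta_0\|$ yields the announced orders. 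A $\cC$-approximable basis is then produced as in Theorem \ref{thm3} via the analogue of Proposition \ref{lastprop1}.

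The main obstacle I expect is keeping the order bookkeeping for each row of $\cM\vec{\cG}$ honest: the cross contributions between the two components of $\cP^{-1}\vec{\cG}$ need to be checked carefully against the exponential bounds to guarantee the sharp orders $\gz^{-1},\gz^{-2}$ rather than merely the weaker $O(1),O(\gz^{-1})$ that would follow from a naive application. A secondary technical point is verifying uniformity of the Theorem \ref{thm1} hypotheses in $\gz$ when Proposition \ref{loexpcaseprop} is invoked, specifically that the $O(\gz^{-1})$ bound on $\|\cF_{\gz}-\cF_{\gz}^{(s)}\|$ is genuinely an operator-norm estimate valid on the full dense domain employed.
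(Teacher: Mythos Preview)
Your proposal is correct and mirrors the paper's own argument: both apply part ii) of Theorem \ref{thm1} via Proposition \ref{loexpcaseprop} to obtain $\gamma=\gz^2(1+O(\gz^{-1}))$ and hence $\fz=O(\gz^{-2})$, then use the row-by-row kernel estimates $|E_1^{\top}\cC M_{\gz}\vec{\cG}|=O(\gz)$ and $|E_2^{\top}\cC M_{\gz}\vec{\cG}|=O(1)$ to deduce strong-CA with modulus $\cZ=O(\gz^{-1})$. The paper's proof is in fact more compressed than yours, leaving implicit the verification of $\cW_{j,k}\neq 0$ and the basis construction that you correctly flag as requiring the analogue of Proposition \ref{lastprop1} (namely Proposition \ref{cWprop}).
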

\begin{proof}
Since $\|J^{(s)}\|_{\infty}$ $=O(\gz)$ we conclude that $\gamma$ $=\gz^2(1+O(\gz^{-1}))$ of Theorem \ref{thm1} 
and the results thereby hold to obtain $\fz(\gz)$ $=O(\gz^{-2}).$

The estimates on the $J^{(j,k)}$ in turn result in
$$|E_1^{\top}\cC(r)M_{\gz}(r,t)\vec{\cG}(t)|
=O(\gz);  |E_2^{\top}\cC(r)M_{\gz}(r,t)\vec{\cG}(t)|=O(1)
$$ so that strong CA holds with modulus $\cZ(\gz)$ $=O(\gz^{-1}).$
\end{proof}
\subsection{Fixed  Degree and High Frequency}
We now set $\cP$ and $\Theta$ as in  (\ref{nextcP})
and proceed to construct operators $\cF,\cF^{(s)}$ with the following 
\begin{prop}
There are fundamental solutions $\fN(\cdot,\gs)$ of the residual equation with
those properties of (\ref{os}) and (\ref{nextcP}).
\end{prop}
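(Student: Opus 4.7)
The plan is to reduce this to a direct application of the asymptotic machinery already developed for the residual equation in Appendix A (specifically Proposition \ref{LWest}), and then verify that the non-adiabatic correction terms do not disturb the leading-order oscillatory structure. Concretely, I begin with $\vec{Y}'=\cA\vec{Y}$ and decompose
\[
\cA \;=\; \gs^{2}\cA_{(2)} \;+\; \cA_{(0)} \;+\; \gs^{-2}\cA_{(-2)},
\]
where $\cA_{(2)}$ carries only the entry $-r^2/c^2$ in position $(1,2)$, $\cA_{(0)}$ collects the $\gs$-independent part of $\cA$ (together with the ``$1$'' in position $(2,1)$), and $\cA_{(-2)}$ collects the $N^2/\gs^2$ piece in position $(2,1)$. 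The crucial observation is that, for $\gs\to\infty$, the matrix $\cA_{(-2)}$ is an $O(\gs^{-2})$ perturbation of the adiabatic case, so its contribution is absorbed into the $O(\gs^{-1})$ error bound claimed in (\ref{fN}).

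Next, I would carry out the gauge transformation $\vec{Y}=\cP\vec{Z}$ with $\cP$ as in (\ref{nextcP}), giving $\vec{Z}'=\widetilde{\cA}\vec{Z}$ with $\widetilde{\cA}=\cP^{-1}\cA\cP-\cP^{-1}\cP'$. A direct computation shows that the top-order term of $\widetilde{\cA}$ is
\[
\frac{\gs}{c(r)}\begin{bmatrix}0&-1\\1&0\end{bmatrix}+O(1),
\]
whose fundamental solution matrix (to leading order) consists of the sine/cosine pairs appearing in (\ref{os}) with phase $\gs\Theta_{r_k}(r)=\gs\int_{r_k}^{r}dt/c(t)$. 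This is exactly the structure of the adiabatic high-frequency case, the only new feature being the $\gs^{-2}N^2$ term which modifies the $(2,1)$ entry of $\widetilde{\cA}$ by $O(\gs^{-1})$ after conjugation.

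Having put the system in the required normal form, I would then cite Proposition \ref{LWest} of Appendix A (invoked earlier in the adiabatic fixed-degree/high-frequency subsection with $\gl=0$), which supplies fundamental matrices satisfying $\fN_{j,k}=\cP(I+O(\gs^{-1}))\fM_{j,k}$ under the prescribed boundary conditions indexed by $j,k$. The $O(\gs^{-1})$ error term correctly absorbs both the $\cA_{(0)}$ correction and the $\gs^{-2}N^2$ non-adiabatic term, so no modification of the hypothesis of Proposition \ref{LWest} is required.

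The main obstacle is bookkeeping rather than analysis: one must verify that on conjugation by $\cP$ the $N^2/\gs^2$ entry does not produce an unbounded contribution (it does not, because the $(2,1)$ position is amplified by $\gs$ under $\cP^{-1}(\cdot)\cP$, yielding a net $O(\gs^{-1})$ perturbation of the normalized system) and that the determinants $\Delta_{j,k}(\gs)$ of the leading-order blocks retain the same form (\ref{dets}) as in the adiabatic case. Once these two checks are in place, the existence of $\fN(\cdot,\gs)$ with the stated asymptotics follows verbatim from the Appendix A construction, completing the proof.
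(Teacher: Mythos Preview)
Your proposal is correct and takes essentially the same approach as the paper: both arguments rest on Proposition~\ref{LWest} and the observation that the $N^2$-dependent terms enter only at order $O(\gs^{-1})$ or smaller, so the leading-order oscillatory structure (\ref{os}), (\ref{nextcP}) is unaffected. The paper's proof is a one-line citation of Proposition~\ref{LWest} together with the remark immediately following it (``the leading-order estimates are independent of $N$''), whereas you unpack this explicitly by decomposing $\cA$ in powers of $\gs$ and tracking the $N^2/\gs^2$ term through the conjugation by $\cP$; your extra bookkeeping is sound but not strictly needed, since Proposition~\ref{LWest} already treats the full $A_{LW}$ with $N^2$ present.
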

\begin{proof}
The result follows from Proposition \ref{LWest} (see Appendix A) whereby the leading-order estimates are independent of $N.$
\end{proof}
We are ready to state
\begin{prop}\label{fJests} The operator $\cF_{\gs}$ as in (\ref{calF}) under boundary condtions $\Pi(\gs)$ as in Proposition \ref{prop2} satisfies 
$\|\cF_{\gs}-\cF_{\gs}^{(s)}\|=O(\gs^{-1})$
where $\cF_{\gs}^{(s)}$ is symmetric and where 
$\|\cF_{\gs}^{(s)}\| =O(1)$
as $\gs\rightarrow\infty.$
\end{prop}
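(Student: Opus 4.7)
The plan is to follow the outline of Proposition \ref{loexpcaseprop}, but with the trigonometric fundamental matrices $\fM$ from (\ref{os}) and $\cP$ from (\ref{nextcP}) in place of the hyperbolic ones used there. First, via (\ref{F}) and the representation (\ref{Mformula}), I expand
\begin{equation}\notag
F_{\gs}(r,t) = \frac{1}{\Delta(\gs)}\sum_{j,k=1}^{2} c_{j}(r)\,g_{k}(t)\,F^{(j,k)}_{\gs}(r,t)\,\bigl(1 + O(\gs^{-1})\bigr),
\end{equation}
where $c_{j}(r) \df (\vec{\cC}^{\top}_{\gs}\cP)_{j}(r)$, $g_{k}(t) \df (\cP^{-1}\vec{\cG}_{\gs})_{k}(t)$, and $F^{(j,k)}_{\gs}$ is defined as in (\ref{Jjk}). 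A direct computation using (\ref{nextcP}) and the explicit forms of $\vec{\cC}_{\gs}$, $\vec{\cG}_{\gs}$ yields analogues of the $A_{j,k}$ in (\ref{Ajk}); in particular one finds $c_{1}g_{1}, c_{2}g_{2} = O(1)$, $c_{1}g_{2}=O(\gs^{-1})$, and $c_{2}g_{1}=O(\gs)$ in $\gs$.

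I would then take $\cF_{\gs}^{(s)}$ to be the operator whose kernel is the $(j,k)$-sum above with the $(1+O(\gs^{-1}))$ correction dropped, and verify its symmetry by the same chain of identities used in Proposition \ref{loexpcaseprop}: (\ref{symmprop}) together with $\fX(r,t)\fI=\fI^{\top}\fX(t,r)$ gives $F^{(j,k)}_{\gs}(r,t)=F^{(k,j)}_{\gs}(t,r)$ for $j\neq k$ and $F^{(j,j)}_{\gs}(r,t)=-F^{(j,j)}_{\gs}(t,r)$; pairing the $(1,2)$ with $(2,1)$ contributions and the $(1,1)$ with $(2,2)$ contributions (as the signs of the $A_{j,k}$ are arranged to do) produces a kernel invariant under $(r,t)\leftrightarrow(t,r)$. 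The norm estimate $\|\cF_{\gs}-\cF_{\gs}^{(s)}\|=O(\gs^{-1})$ then follows from bookkeeping the omitted correction: each $(I+O(\gs^{-1}))$ factor in $\fN=\cP(I+O(\gs^{-1}))\fM$ multiplies a uniformly bounded trigonometric entry, while $|\Delta(\gs)|\geq 1/2$ on the parameter intervals chosen in Proposition \ref{prop2}.

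The central obstacle is the bound $\|\cF_{\gs}^{(s)}\|=O(1)$ in view of the $O(\gs)$ prefactor carried by $c_{2}(r)g_{1}(t)$ in the $(2,1)$ contribution. In the exponential setting of Proposition \ref{loexpcaseprop} the corresponding pointwise growth was offset by concentration of $F^{(j,k)}_{\gz}$ near a corner of $[a,b]^{2}$ on which the Hilbert--Schmidt measure shrinks exponentially, leaving an $L^{2}$-kernel norm of $O(1)$. The trigonometric kernels do not concentrate in this way, so I would aim either to exhibit a structural cancellation in the symmetrized $(1,2)+(2,1)$ combination, expressing the troublesome coefficient as a derivative so that an integration by parts in $r$ or $t$ against the rapidly oscillating $F^{(2,1)}_{\gs}$ trades the explicit $\gs$ for a $\gs^{-1}$, or else to invoke van der Corput / stationary-phase estimates on the bilinear form $\langle\cF_{\gs}^{(s)}y,y\rangle$ exploiting the phase $\gs\Theta_{r_{k}}$ in (\ref{os}). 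Once this bound is secured, the estimates $\|\cF_{\gs}-\cF_{\gs}^{(s)}\|=O(\gs^{-1})$ and $\|\cF_{\gs}^{(s)}\|=O(1)$ are both in hand, and the proposition follows.
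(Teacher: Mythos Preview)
Your decomposition of $F_{\gs}$ and the symmetry verification are exactly what the paper does: it records $F_{\gs}=F_{\gs}^{(s)}(1+O(\gs^{-1}))$ with
\[
F_{\gs}^{(s)}(r,t)=\frac{\gk}{\Delta(\gs)}\sum_{j,k=1}^{2}\gs^{\,j-k}A_{j,k}(r,t)\,F^{(j,k)}(r,t),
\]
the $A_{j,k}$ taken as in (\ref{Ajk}) but with $\cH$ replaced by $c/r^{2}$, and then declares the symmetry ``clear'' via the identities you quote. The paper's proof stops there; it supplies no separate argument for the bound $\|\cF_{\gs}^{(s)}\|=O(1)$, so the place where you pause is precisely the place where the paper's proof ends.

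Your worry about the $(2,1)$ contribution is well founded, but neither of your proposed fixes will recover $O(1)$. Integration by parts or a van der Corput estimate helps only against non-resonant inputs; test the operator against $y(t)=\cos(\gs\Theta_{a}(t))$, for which $\|y\|\asymp 1$ while the Volterra factor $\int_{a}^{r}f(t)\cos(\gs\Theta_{a}(t))\,y(t)\,dt\to\tfrac{1}{2}\int_{a}^{r}f$ does not oscillate, so the image has norm of exact order $\gs$. The exponential mechanism you invoke from Proposition~\ref{loexpcaseprop} --- the $L^{2}$ mass of $F^{(j,k)}_{\gz}/\Delta(\gz)$ shrinking like $\gz^{-1}$ --- has no trigonometric counterpart, since here $|\Delta(\gs)|\asymp 1$ and each $F^{(j,k)}_{\gs}$ is uniformly bounded with no concentration. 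The same bookkeeping undercuts your $\|\cF_{\gs}-\cF_{\gs}^{(s)}\|=O(\gs^{-1})$: the $O(\gs^{-1})$ correction coming from $\fN=\cP(I+O(\gs^{-1}))\fM$ gets multiplied by the $O(\gs)$ factor $c_{2}(r)g_{1}(t)$, leaving a difference kernel that is only $O(1)$ pointwise.

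The practical resolution is that the downstream application does not need $O(1)$: the theorem immediately following this proposition carries Theorem~\ref{limitwca} over ``mutatis mutandis,'' and that proof only uses $|\langle\cF_{\gs}y,y\rangle_{\gb}|\leq K\gs\|y\|_{\gb}^{2}$. Your computation already delivers $\|\cF_{\gs}^{(s)}\|=O(\gs)$ and $\|\cF_{\gs}-\cF_{\gs}^{(s)}\|=O(1)$ directly from the pointwise bounds, and those suffice once $\gt_{\gb_{\gs}}$ of Lemma~\ref{FSLem} dominates. So drop the attempt to force $O(1)$ and record the $O(\gs)$ bound your decomposition gives; the gap lies in the stated exponent, not in your method.
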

\begin{proof}
We compute
$F=F_{\gs}^{(s)}(1+O(\gs^{-1}))$ where
$$F^{(s)}_{\gs}(r,t)=\frac{\gk}{\Delta(\gs)}\sum_{j,k=1}^2\gs^{j-k}A_{j,k}(r,t) F^{(j,k)}(r,t)
$$
for $F^{(j,k)}$ as in (\ref{Jjk}) and for
$A_{j,k}$ as in (\ref{Ajk}) but with $\cH$ replaced by $c/r^2$ whereby the symmetric property of $\cF_{\gs}^{(s)}$ is clear. 
\end{proof}
\begin{thm}
The results of Theorem \ref{limitwca} hold also for the case $N^2>0.$
\end{thm}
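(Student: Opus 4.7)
The plan is to parallel the proof of Theorem \ref{limitwca} with the key substitution of part (ii) of Theorem \ref{thm1} for part (i), since $\cF_{\gs}$ is now only approximately symmetric. The boundary scheme is unchanged: set $\gb_{\gs}=a+z\gs^{-\ga}$ as in (\ref{beta}), impose $\Pi_{j,k;\gb_{\gs}}$ on $\vec{\Phi}$ via the subspaces $\cD_{\gb_{\gs}}$ of Lemma \ref{FSLem}, and apply boundary agreement $\Pi(\gs)$ (case i)) or $\Pi_{\gb_{\gs}}(\gs)$ (case ii)) to $\vec{Y},\vec{Y}_0$, admissibility being ensured by Proposition \ref{cW} together with the fundamental matrix of Proposition \ref{LWest} (now with $N^2>0$).

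First I would verify the hypothesis (\ref{mainineq2}) of Theorem \ref{thm1}(ii). For $y\in\cD_{\gb_{\gs}}$, Lemma \ref{FSLem} gives $\langle\tilde{\cL}[y],y\rangle_{\gb_{\gs}}\leq -\gt_{\gb_{\gs}}\|y\|^2$ with $\gt_{\gb_{\gs}}$ of order $\gs^{2\ga}/z^2$, while the zeroth-order remainder $(\gL_{\ell}-\gk h)y$ separating $\cL_{\gs}$ from $\tilde{\cL}$ contributes only an $O(1)\|y\|^2$ correction uniformly in $\gb_{\gs}$. With $\gep\df\|\cF_{\gs}-\cF_{\gs}^{(s)}\|=O(\gs^{-1})$ and $\|\cF_{\gs}^{(s)}\|=O(1)$ furnished by Proposition \ref{fJests}, one obtains
\begin{equation*}
\langle \cL_{\gs}[y]-\cF_{\gs}^{(s)}[y]+\gep y,\,y\rangle_{\gb_{\gs}}\leq\bigl(-\gt_{\gb_{\gs}}+O(1)\bigr)\|y\|^2<-\gamma\|y\|^2
\end{equation*}
with $\gamma$ of order $\gs^{2\ga}/z^2$ for all sufficiently large $\gs$ and small $z>0$. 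Theorem \ref{thm1}(ii) then yields a unique $\Phi$ solving (\ref{main}) on $\cI_{\gb_{\gs}}$ with weak-CA modulus $\fz(\gs)=O(z^2/\gs^{2\ga})$, establishing parts i) and ii) in the weak sense.

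For the strong-CA estimates and part iv), I would transfer the argument of Proposition \ref{prop2} to this setting: the representation developed in the proof of Proposition \ref{fJests} bounds each $F^{(j,k)}_{\gs}$ uniformly on the square, the $A_{j,k}$ are explicit (see (\ref{Ajk})), the $O(\gs^{j-k})$ scaling is tracked blockwise, and $|\Delta(\gs)|\geq 1/2$ piecewise in $\gs$. The estimate $\|\cC_{\gs}\vec{Y}_p\|_{\infty;\gb_{\gs}}\leq\sqrt{\gb_{\gs}-a}\,\|\cC_{\gs}\cM(\cdot,\gs)\vec{\cG}\|_{\infty}\|\Phi\|_{\gb_{\gs}}$ then combines with $\fz(\gs)$ to give $\cZ(\gs)=O(z/\gs^{3\ga})$ for $\ga=1/2$ and $\ga=1$ respectively, reproducing (\ref{diffest}); part v) follows at once by tracking the $z$-dependence through these constants. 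Part iii) is obtained verbatim from the adiabatic case via the full-system asymptotics of Propositions \ref{lrgsig} and \ref{lastprop}, since each column of $X\fW^{-1}$ (or $X\fV^{-1}$) lies in some $\cD_{\gb}$ of Lemma \ref{FSLem}.

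The main obstacle is driving the $O(\gs^{j-k})$ scaling in the representation of $F_{\gs}^{(s)}$ through the strong-CA estimate so that the final componentwise bounds mirror (\ref{diffest}). The $E_2$-entry on the left of $\cC_{\gs}\cM\vec{\cG}$ produces a factor of $\gs$ relative to the $E_1$-entry, which must be absorbed by the decay supplied by $\fz(\gs)$ when extracting the first component of $\vec{Y}-\vec{Y}_0$; the symmetric identities established in the proof of Proposition \ref{fJests} and the block structure (\ref{Ajk}) are what make this bookkeeping tractable.
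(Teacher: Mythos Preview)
Your approach is essentially the paper's: carry over the proof of Theorem \ref{limitwca} verbatim, replacing the adiabatic-specific symmetry of $\cF_{\gs}$ by Proposition \ref{fJests} and invoking Theorem \ref{thm1}(ii) in place of (i). The paper's own proof is a two-line ``mutatis mutandis'' remark citing exactly Proposition \ref{fJests} together with the componentwise kernel bounds, so you have correctly identified the one substantive change.

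One correction to your final paragraph: you have the relative sizes of the $E_1$- and $E_2$-entries reversed. The paper records
\[
|E_1^{\top}\cC_{\gs}(r)M(r,t)\vec{\cG}(t)|=O(1),\qquad |E_2^{\top}\cC_{\gs}(r)M(r,t)\vec{\cG}(t)|=O(\gs^{-1}),
\]
so the $E_2$-entry is \emph{smaller} by a factor of $\gs$, not larger. The $\gs^2$ in the second diagonal slot of $\cC_{\gs}$ is more than compensated by the $\gs^{-1}$ coming from $\cP$ (cf.\ (\ref{nextcP})) and the structure of $\fM\fX\,\text{adj}(\fM)$. Consequently the ``obstacle'' you anticipate does not arise: both components of $\cC_{\gs}\vec{Y}_p$ are controlled by $O(1)\cdot\|\Phi\|$ without any residual power of $\gs$ to absorb, and the strong-CA estimates (\ref{diffest}) follow exactly as in the adiabatic case.
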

\begin{proof}
The proof is the same, mutatis-mutandis, as that of the cited theorem except for the applications of  Proposition \ref{fJests}
and the estimates
$|E_1^{\top}\cC_{\gs}(r)M(r,t)\vec{\cG}(t)|
=O(1);$ $ |E_2^{\top}\cC_{\gs}(r)M(r,t)\vec{\cG}(t)|=O(\gs^{-1}).
$
\end{proof}
\subsection{High Degree and Fixed Frequency in the Oscillatory Case}
We return to the parametrization $\gL_{\ell}\rightarrow \gz^2$ but now 
define $\Theta$ as in (\ref{newerTheta}), $\fN_{j,k}$ as in (\ref{formb1}) and
$\cH$ $\df\frac{1}{r}\sqrt{\frac{N^2}{\gs^2}-1}$
with $\cP$ as in (\ref{newerP}), replacing $\fM_{j,k}$ with 
\begin{equation}\notag\vec{\fM}_{1,k}(r,\gz) \df \begin{bmatrix}\sin(\zeta \Theta_{r_k}(r))\\ \cos(\zeta \Theta_{r_k}(r))\end{bmatrix},\,
 \vec{\fM}_{2,k}(r,\gz) \df \begin{bmatrix}\cos(\zeta \Theta_{r_k}(r))\\- \sin(\zeta \Theta_{r_k}(r))\end{bmatrix}
\end{equation} (associated with internal gravity waves, cf. \cite{svh})
to form bases as in (\ref{formb1}). Here, the corresponding $\gD_{j,k}$ can be bounded away from $0,$ 
as was arranged for those of  (\ref{dets}),  via boundary conditions of type
$\Pi(\gz).$ We can then proceed as in the exponential case to prove
\begin{thm}\label{thm7}  The results of Theorem \ref{thm5} hold for case $0<\gs^2<N^2(r)$ ($\gs$ fixed) on $[a,b]$
except for $\vec{Y},\vec{Y}_0$ having boundary agreement of the form $\Pi(\gz).$
\end{thm}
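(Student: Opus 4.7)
The plan is to mimic the proof of Theorem \ref{thm5} step by step, with the trigonometric basis $\vec{\fM}_{j,k}$ just defined replacing the hyperbolic one, while borrowing from Proposition \ref{prop2} the device of letting the boundary conditions depend on $\gz$ in order to control determinants that now oscillate rather than grow.

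First, I would construct the analogue of $\cF_{\gz}^{(s)}$: since the matrices $\fM_{j,k}$ satisfy the same symmetry relation (\ref{symmprop}) as in the exponential case (this is the content of the identity $\text{adj}(\fM) = \fI\fM\fI^{\top}$, which is purely algebraic and does not use the sign of $\cH^{2}$), the computation in Proposition \ref{loexpcaseprop} decomposing $F_{\gz}$ into pieces $F^{(j,k)}_{\gz}$ of opposite parity in $(r,t)$ carries over verbatim. This furnishes a symmetric operator $\cF_{\gz}^{(s)}$ together with the bound $\|\cF_{\gz}-\cF_{\gz}^{(s)}\|=O(\gz^{-1})$. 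In the oscillatory case the bound $F^{(s)}_{\gz}(r,t)=O(\gz)$ is in fact easier: the trigonometric entries of $\fM$ are bounded uniformly, so each $F^{(j,k)}_{\gz}$ is $O(1)$ and the factor $\gz$ comes only from the prefactor in (\ref{Jsbnd}), while $\|\cF_{\gz}-\cF_{\gz}^{(s)}\|$ likewise inherits a clean $O(\gz^{-1})$ from (\ref{kerest}) without an exponential factor.

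Second, and this is where the real work lies, I would handle the determinants. The $\Delta_{j,k}(\gz)$ here are trigonometric in $\gz\Theta_a(b)$ and can vanish, exactly as in (\ref{dets}). Following the argument preceding Proposition \ref{prop2}, I would partition large $\gz$ into intervals $I_m$ and choose indices $(j_m,k_m)$ so that $|\Delta_{j_m,k_m}(\gz)|\geq 1/2$ on $I_m$; the resulting piecewise-defined fundamental matrix $\fN(\cdot,\gz)$ satisfies boundary conditions of the form $\Pi(\gz)$. With this choice, Proposition \ref{cW} (together with Proposition \ref{loexpcase} adapted to the oscillatory $\Theta$) gives a unique decomposition $\vec{Y}=\vec{Y}_p+\vec{Y}_0$ with $\vec{Y}_p,\vec{Y}_0$ having boundary agreement $\Pi(\gz)$. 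The main obstacle is precisely to verify that this $\gz$-dependent selection can be made consistently with the hypotheses of Theorem \ref{thm1} part ii), namely that the resulting $\cF_{\gz}^{(s)}$ remains symmetric and $L^2$-bounded uniformly in $\gz$; this follows from the uniform bound $|\Delta(\gz)|\geq 1/2$ just arranged.

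Third, I would apply Theorem \ref{thm1} ii) with $\fL=\cL$, $\fF=\cF_{\gz}$, $\fF_s=\cF_{\gz}^{(s)}$ and $\gep=O(\gz^{-1})$. The eigenvalue argument for $\cL$ under $\fS_{\cL}(\vec{\theta})$ yields a negative leading coefficient $-\gz^{2}$, so
\[
\langle \cL[y]-\cF_{\gz}^{(s)}[y]+\gep y, y\rangle \leq \bigl(-\gz^{2}+|\gn_0|+\|\cF_{\gz}^{(s)}\|+\gep\bigr)\|y\|^{2}
\]
and $\gamma(\gz)=\gz^{2}(1+o(1))$, giving $\fz(\gz)=O(\gz^{-2})$. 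For strong-CA, the kernel estimates $|E_1^{\top}\cC(r)\cM_{\gz}(r,t)\vec{\cG}(t)|=O(\gz)$ and $|E_2^{\top}\cC(r)\cM_{\gz}(r,t)\vec{\cG}(t)|=O(1)$ are derived exactly as in Theorem \ref{thm5}, yielding $\cZ(\gz)=O(\gz^{-1})$ and the claimed supremum estimates
\[
\|\eta-\eta_0\|_{\infty}\leq \frac{C_1}{\gz^{2}}\|\eta_0\|_{\infty},\qquad \|u-u_0\|_{\infty}\leq \frac{C_2}{\gz}\|\eta_0\|_{\infty}.
\]

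Finally, existence of a $\cC$-approximable basis follows by letting the columns of the fundamental matrix correspond to four fixed choices of $\fS_{\cL}(\vec{\theta})$ and invoking Propositions \ref{lastprop1} (or the appropriate Appendix-B analogue) as in Theorem \ref{thm3}, the only change being that the associated two-point agreement between each $\vec{Y}_i$ and $\vec{Y}_{0,i}$ is now of the $\gz$-dependent type $\Pi(\gz)$ rather than of the fixed type $\Pi_{j,k}$.
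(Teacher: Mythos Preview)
Your proposal is correct and follows essentially the same approach as the paper: carry over the symmetric-kernel construction of Proposition \ref{loexpcaseprop} verbatim (the identity (\ref{symmprop}) being purely algebraic), replace the fixed $\Pi_{j,k}$ by a $\gz$-dependent $\Pi(\gz)$ so that $|\Delta(\gz)|$ stays bounded away from zero, and then run Theorem \ref{thm1} ii) and the kernel estimates of Theorem \ref{thm5} unchanged. The paper's own proof is a three-line sketch making exactly these points; your version simply spells them out, with the one caveat that in the oscillatory case the operator norm $\|\cF_{\gz}^{(s)}\|$ is $O(\gz)$ rather than uniformly bounded (and correspondingly $\gep=O(1)$ rather than $O(\gz^{-1})$), which is still amply sufficient for $\gamma(\gz)=\gz^{2}(1+o(1))$.
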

\begin{proof}
The results of Proposition \ref{loexpcaseprop} hold for $\vec{Y},\vec{Y}_0$ with boundary agreement of the form $\Pi(\gz)$ whereby
construction of the associated $F^{(s)}$ follows likewise with the coefficients $A_{j,k}$ of 
the kernel 
$
F^{(s)}
$
as in (\ref{Jsbnd})
being the same except that each of
$\frac{1}{\Delta(\gz)}$ and the $F^{(j,k)}(r,t)$ are uniformly bounded on the square for sufficiently large $\gz$.
\end{proof}
\section{Some Examples from Asymptotic Estimates: Sharper Weak-CA Estimates }\label{s4}

For a given fundamental solution $X(r,\zeta)$ to the system (\ref{otherfirst})-(\ref{otherthird}), depending on some (large) general parameter $\gz$ we denote by ${\fR}_j$ the corresponding rows of $X,$ with
${\fR}_{3}$ $=(\Phi_1,\hdots,\Phi_4 )$ and ${\fR}_{4 }$  $=(\Phi'_1,\hdots,\Phi_4')$ in particular, and set
\begin{equation}\notag
\cW(\gz)\df\left({\fR}_3^{\top}(a,\gz),{\fR}_3^{\top}(b,\gz),{\fR}_4^{\top}(a,\gz),{\fR}_4^{\top}(b,\gz)\right)^{\top}
\end{equation} 
With $\cW$ being invertible for large $\gz$ we construct solutions via 
$\vec{\cX}=X\cW^{-1}\vec{V}$ with boundary values given by
$$\vec{V}=\left(\Phi(a,\gz),\, \Phi(b,\gz),\,\Phi'(a,\gz),\,\Phi'(b,\gz)\right)^{\top}$$

In the following examples we will simplify much of the calculations by the following methods. We will impose the boundary conditions on $\vec{\Phi}$ in setting
 $\fW\df\det\cW$ and construct a solution
$\vec{X}=X\vec{\fC}$ for 
\begin{equation}\label{simpbc}\vec{V}(\gz)=(\fW(\gz),0,0,0)^{\top}\,{\rm and }\,\vec{\fC}(\gz)\df\cW^{-1}(\gz)\vec{V}(\gz).  \end{equation} 
We will also
resort to the following simple, but useful, estimate arising from a straightforward application of the Cauchy-Schwartz inequality:
\begin{prop}\label{simpprop}
Suppose $f(r,\gz),g(r,\gz)$ satisfy $\frac{\|g(\cdot,\gz)\|}{\|f(\cdot,\gz)\|}$ $\rightarrow 0$ as $\gz\rightarrow \infty.$ Then,
$$\|f+g\|=\|f\|(1+o(1))\, \text{as} \,\gz\rightarrow\infty$$
\end{prop}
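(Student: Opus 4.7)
The plan is to prove this via the triangle inequality and its reverse, which together pinch $\|f+g\|/\|f\|$ between two quantities that both tend to $1$. This is essentially a one-line argument, so the proposal is really just a sketch of the bookkeeping.

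First, I would note that the hypothesis $\|g(\cdot,\gz)\|/\|f(\cdot,\gz)\| \to 0$ implicitly requires $\|f(\cdot,\gz)\| \neq 0$ for all sufficiently large $\gz$, so dividing by $\|f(\cdot,\gz)\|$ is legitimate in the limit regime we care about. Then I would apply the triangle inequality in the form
\[
\bigl|\,\|f+g\| - \|f\|\,\bigr| \leq \|g\|,
\]
which is standard in any normed space (in particular in $L^2([a,b])$, the norm used throughout the paper).

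Dividing through by $\|f\|$ yields
\[
\left|\frac{\|f+g\|}{\|f\|} - 1\right| \leq \frac{\|g\|}{\|f\|},
\]
and by hypothesis the right-hand side tends to $0$ as $\gz \to \infty$. Hence $\|f+g\|/\|f\| = 1 + o(1)$, which is exactly the claimed identity $\|f+g\| = \|f\|(1+o(1))$.

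There is no real obstacle here; the only subtlety worth mentioning is ensuring one is in a regime where $\|f(\cdot,\gz)\| > 0$, which follows immediately from the quotient hypothesis being well-defined. The argument is norm-agnostic, so it applies equally to the $L^2$ and supremum norms used elsewhere in the paper, and hence the proposition can be invoked in either setting in subsequent examples.
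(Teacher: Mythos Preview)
Your argument is correct. The paper does not supply a detailed proof; it merely remarks that the estimate follows from ``a straightforward application of the Cauchy--Schwartz inequality,'' presumably by expanding $\|f+g\|^2 = \|f\|^2 + 2\langle f,g\rangle + \|g\|^2$ and bounding the cross term. Your route via the reverse triangle inequality $\bigl|\,\|f+g\|-\|f\|\,\bigr|\le\|g\|$ is at least as clean and has the minor advantage of being norm-agnostic (no inner product needed), which, as you observe, lets the proposition be invoked for the supremum norm as well as the $L^2$ norm without further comment.
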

\subsection{High-Degree Asymptotics, Non--Adiabatic Equilibrium}
We first consider the exponential case as we apply the estimates Proposition \ref{firstasymp} (see Appendix A) for a fundamental solution $X$ with $\Theta(1)$ $=0$. We compute from (\ref{simpbc})
$$\vec{\fC}
=(I+o(1))\left(A_1\frac{e^{\gz\Theta(R)}R^{\gz}}{\gz^{3}},A_2\frac{1}{\gz^{3}},A_3 \frac{e^{\gz\Theta(R)}R^{\gz}}{\gz^{2}},A_4\frac{e^{-\gz\Theta(R)}R^{\gz}}{\gz^{2}}\right)^{\top}$$
for constants $A_j$ given by the following:
$$A_1=\gs^4\kappa^2\frac{\sqrt{\cH(1)\cH(R)\rho(1)\rho(R)}}{g(1)g(R)\sqrt{R}}\left(1-\frac{1}{R\cH(R)}\right);
$$
$$A_2=\frac{2-\gs^4\kappa^2\rho^{3/2}(R)}{g^2(R)};A_3=\frac{\gs^2\kappa\sqrt{\rho(R)}}{g(R)\sqrt{R\cH(R)}}\left(\cH(R)-\frac{1}{R}\right);
$$
$$A_4=-\gs^2\kappa\frac{\sqrt{\cH(R)\rho(R)}}{g(R)\sqrt{R}}\left(1+\frac{1}{R\cH(R)}\right)
$$
In order to apply Proposition \ref{simpprop} we need to appraise integrals of the type addressed in the following:
\begin{lem}
For smooth functions $f,h$ with $g>0$ on $[1,R]$ the following hold (as $\gz\rightarrow \infty$) for $H(r)\df $ $\int_1^rh(t)\,dt:$ 
$$\int_1^Re^{\gz H(r)}f(r)\,dr=\frac{f(R)}{\gz h(R)}(1+o(1))\,\,
if \,f(R)\neq 0;$$ 
$$\int_1^Re^{-\gz H(r)}f(r)\,dr=\frac{f(1)}{\gz h(1)}(1+o(1))\,\,
\,if f(1)\neq 0.$$
\end{lem}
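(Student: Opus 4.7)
The plan is a textbook Laplace-type calculation via integration by parts, using the positivity of $h$ on $[1,R]$ (the ``$g>0$'' in the statement being a typo for $h>0$). For the first formula, write
$$ e^{\zeta H(r)} = \frac{1}{\zeta h(r)}\,\frac{d}{dr}\,e^{\zeta H(r)} $$
and integrate by parts to obtain
$$ \int_1^R e^{\zeta H(r)} f(r)\,dr = \left[\frac{f(r)\,e^{\zeta H(r)}}{\zeta h(r)}\right]_1^R - \frac{1}{\zeta}\int_1^R \left(\frac{f}{h}\right)'(r)\,e^{\zeta H(r)}\,dr. $$
Since $h>0$ on $[1,R]$ forces $H$ to be strictly increasing with maximum at $r=R$, the upper boundary term $\frac{f(R)}{\zeta h(R)}e^{\zeta H(R)}$ dominates the lower boundary term $-\frac{f(1)}{\zeta h(1)}$ by an exponential factor.

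To handle the remainder integral, I would iterate: a second integration by parts yields a contribution of order $\zeta^{-2} e^{\zeta H(R)}$, which is $o(\zeta^{-1} e^{\zeta H(R)})$. Thus, after normalizing, the stated asymptotic $\frac{f(R)}{\zeta h(R)}(1+o(1))$ follows whenever $f(R)\neq 0$, in the sense intended by the paper's normalization of $\vec{\fC}$ (which already carries the $e^{\zeta\Theta(R)}R^{\zeta}$ factors). Proposition \ref{simpprop} is then tailor-made to combine such estimates, with $f$ being the main term and $g$ the integrated remainder.

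The second formula is entirely symmetric: write $e^{-\zeta H(r)} = -(\zeta h(r))^{-1}\frac{d}{dr}e^{-\zeta H(r)}$ and integrate by parts to get
$$ \int_1^R e^{-\zeta H(r)} f(r)\,dr = \frac{f(1)}{\zeta h(1)} - \frac{f(R)\,e^{-\zeta H(R)}}{\zeta h(R)} + \frac{1}{\zeta}\int_1^R \left(\frac{f}{h}\right)'(r)\,e^{-\zeta H(r)}\,dr. $$
Here $-H$ is maximized at $r=1$, so the upper boundary term is exponentially small and the lower boundary term $\frac{f(1)}{\zeta h(1)}$ gives the leading order; the remainder integral is $O(\zeta^{-2})$ by a second integration by parts, so the result follows when $f(1)\neq 0$.

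No genuine obstacle is anticipated; the one minor point requiring care is controlling the remainder integrals, which is a routine consequence either of a second integration by parts or of a Watson's-lemma localization near the maximum of $\pm H$, using smoothness of $f/h$ and the uniform lower bound $h \geq h_{\min}>0$ on $[1,R]$.
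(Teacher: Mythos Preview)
The paper states this lemma without proof, treating it as a standard Laplace-type estimate; your integration-by-parts argument is exactly the canonical way to establish it and is correct. You have also correctly spotted that the first displayed formula in the lemma is missing the factor $e^{\zeta H(R)}$ on the right-hand side (the integral grows like $\zeta^{-1}e^{\zeta H(R)}$, not $\zeta^{-1}$), and your remark that this factor is absorbed into the normalization carried by $\vec{\fC}$ is the right reading of how the paper actually uses the result.
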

We thus compare terms of the product ${\fR}_3\vec{\fC}$ in accord with Proposition \ref{simpprop}, to find (modulo $o(1)$)
$$\gz^{8}R^{-2\gz}e^{-2\gz\cH(R)}\|\Phi\|^2\equiv\|-A_1r^{-\gz}+\gs^2\gk A_3e^{-\gz\Theta}\sqrt{\rho}/(g\sqrt{\cH})\|^2
$$
$$\equiv\frac{\kappa^4\gs^8(\cH(1)-1)^2(1+3\cH(1)+\cH^2(1))(1-R\cH(R))^2\rho(1)\rho(R)}{2R^3\gz g^2(1)g^2(R)\cH^2(1)\cH(R)(1+\cH(1))}
$$
We now estimate $\vec{\cC}\cdot\vec{Y}:$ 
We find $\vec{Y} =(I+o(1))\fA\vec{A}$ with 
\begin{equation}\label{Ynaught}
\fA=\begin{bmatrix}u_{1}&u_{2}\\
\eta_{1}&\eta_{2}\end{bmatrix},
\vec{A}\df(A_3 \frac{e^{\gz\Theta(R)}R^{\gz}}{\gz^{2}},A_4\frac{e^{-\gz\Theta(R)}R^{\gz}}{\gz^{2}})^{\top}
\end{equation}
where, for $j=1,2,$
$u_{j}(\cdot,\gz)\df\frac{e^{(-1)^j\gz\Theta}}{\sqrt{\cH\rho}},$ $
\eta_{j}(\cdot,\gz)\df\frac{\sqrt{\cH}e^{(-1)^j\gz\Theta}}{\gz\sqrt{\rho}}
$
so that
$$\|\vec{\cC}\cdot\vec{Y}\|^2=\kappa^2A_3^2 \frac{e^{2\gz\Theta(R)}R^{2\gz}}{\gz^{4}}\|\frac{ N^2\rho}{g}u_1\|^2(1+o(1))
$$
 to find (using $N^2/\gs^2=1-r\cH^2$)
$\|\Phi\|=\fz_0(\gz)\|\vec{\cC}\cdot \vec{Y}\|(1+o(1))$
for $$\fz_0^2(\gz)
=F(\cH(1))/\gz^4\, \text{with}\,
F(x)\df  \frac{(1-x)(1+3x+x^2)}{(1+x)^2}$$
It is easy to show that $F(x)$  is strictly decreasing on $[0,1)$ and thus
$F(x)\leq F(0) =1,$ so that $\fz_0 <\gz^{-2}$ by the condition on $N/\gs$ providing an estimate of $\fz=$ $1/\gamma$ sharper than that following from Theorem \ref{thm5}. 

We address the oscillatory case $N/\gs >1$ and follow the procedure above. From Proposition \ref{loosccase} (Appendix A) we 
have for $X=(\vec{\cX}_1,\vec{\cX}_2,Im\vec{\cX}_3,Re\vec{\cX}_3)$
 $$\vec{\fC}
=(I+o(1))\left(\frac{R^{\gz}B_1(\gz)}{\gz^{3}},\frac{B_2}{\gz^{3}}, \frac{R^{\gz}B_3(\gz)}{\gz^{2}},\frac{R^{\gz}B_4(\gz)}{\gz^{2}}\right)^{\top}$$
for $B_j$ given by $B_2=-A_2$

$$B_1(\gz)=\gs^4\kappa^2\frac{\sqrt{\cH(1)\cH(R)\rho(1)\rho(R)}}{g(1)g(R)\sqrt{R}}\left(\sin(\gz\cH(R))+\frac{\cos(\gz\cH(R))}{R\cH(R)}\right)
$$
$$B_3(\gz)=\gs^2\kappa\frac{\sqrt{\cH(R)\rho(R)}}{g(R)\sqrt{R}}\left(\cos(\gz\cH(R))-\frac{\sin(\gz\cH(R))}{R\cH(R)}\right)
$$
$$B_4(\gz)=\gs^2\kappa\frac{\sqrt{\cH(R)\rho(R)}}{g(R)\sqrt{R}}\left(\sin(\gz\cH(R))+\frac{\cos(\gz\cH(R))}{R\cH(R)}\right)
$$
We now estimate the resulting $\Phi=$ $\fR_3\vec{V}$ and $\vec{\cC}\cdot \vec{Y}$ using Proposition \ref{simpprop}  along with simple trigonometric identities and the 
following for $f,G$ as above:
\begin{equation}\notag\|f\sin^2(\gz G(\cdot))\|^2,\|f\cos^2(\gz G(\cdot))\|^2= \frac{1}{2}\|f\|^2(1+o(1))
\end{equation}
\begin{equation}\notag
\int_1^Rf(r)\sin(\gz G(r)) dr, \int_1^Rf(r)\cos(\gz G(r)) dr,\int_1^Rf(r)r^{\pm\gz}dr=o(1) 
\end{equation}
as $\gz\rightarrow \infty.$ We thus have the norm estimate
$$\|\Phi\|^2=\frac{B_3^2+B_4^2}{2\gz^{8}}\gk^2R^{2\gz}\| \sqrt{\rho}/(g\sqrt{\cH})\|^2(1+o(1))
$$

For comparison, we compute $\vec{Y}$ $=(I+O(\gs^{-1}))\fA\vec{A}$
with $\fA$ as above, but for 
$u_1(\cdot,\gz)=\frac{\cos(\gz\Theta)}{\sqrt{\cH\rho}},$ $u_2(\cdot,\gz)$ $=\frac{\sin(\gz\Theta)}{\sqrt{\cH\rho}}$
$\eta_1(\cdot,\gz)=\frac{\sqrt{\cH}\sin(\gz\Theta)}{\gz\sqrt{\rho}}$ and $\eta_2(\cdot,\gz)=-\frac{\sqrt{\cH}\cos(\gz\Theta)}{\gz\sqrt{\rho}}$
and for $\vec{A}$ $=R^{\gz}\gz^{-2}(B_3,B_4)^{\top}$.
Then 
$$\|\vec{\cC}\cdot\vec{Y}\|^2=\frac{B_3^2+B_4^2}{2R\gz^{8}}\gk^2R^{2\gz}\| N\sqrt{\rho}/(g\sqrt{\cH})\|^2(1+o(1))
$$
whereby the modulus $\fz(\gz)$ satisfies the estimate as above, but with
$\fz_0$ $\leq \max_{[a,b]}\gs^2/N^2$ $<1,$ showing that the inferred estimate of Theorem \ref{thm7} is not sharp.
\subsection{High-Degree Asymptotics, Adiabatic Equilibrium}
We follow (\ref{simpbc})  to compute
$$\vec{\fC}
=\frac{A}{\gz^2}(I+o(1))\left(\frac{-1}{2\gz},\,1,\,\frac{1}{2\gz},\,-1\right)^{\top}$$
where for $A=k^2\gs^2F(R)R^{-2}$ and $F(r)\df$ $\int_1^rt\sqrt{\rho(t)}/c^2(t)dt.$
We apply the following estimates for a smooth function $f,g$ such that $f(1)=f'(1)$ $=g(R)=g'(R)=0$
$$\int_1^Rr^{-2\gz}f(r)dt = \frac{f''(1)}{8\gz^3}(1+o(1))\,
{\rm if} f''(1)\neq 0;$$
$$\int_1^Rr^{2\gz}g(r)dt = \frac{R^{2\gz+3}g''(R)}{8\gz^3}(1+o(1))\,
{\rm if} g''(R)\neq 0$$
as $\gz\rightarrow \infty.$
From these estimates and Proposition \ref{simpprop} we find
\begin{align}\|\Phi\|^2 &=\frac{A^2}{4\gz^8}\|r^{\gz-1/2}(F(\cdot)-F(R))\|^2(1+o(1))\notag\\\notag
&=\frac{A^2R^{2\gz+2}(F'(R))^2}{16\gz^{11}}(1+o(1))
\end{align}

\begin{align}\notag\|\vec{\cC}\cdot\vec{Y}\|^2&=\frac{A^2}{\gz^{6}}\|hr^{\gz-1/2}/\sqrt{\rho}\|^2(1+o(1))\\
&=\frac{A^2R^{2\gz+2}(F'(R))^2}{2\gz^{7}}(1+o(1))
\notag
\end{align}
whereby the modulus $\fz(\gz)$ satisfies $\fz(\gz)=\frac{1}{2\sqrt{2}\gz^2}(1+o(1)).$  Hence
the estimate as in Proposition \ref{frmrthm2} is not sharp in this case.

\subsection{High-Frequency Asymptotics}\label{HFEg}
We note that the estimates of Proposition \ref{lrgsig} (see Appendix B) can be applied to the system over intervals 
$[a,\gb]$ $\subseteq$ $[a,b]$ for $a<\gb \leq b$ to the corresponding determinant 
$\fW_{\gb}(\gs)$  with implied bounds uniform in $\gb.$ We use such estimates
for the case that  $\Theta(a)=0$ and 
take $o(1)$ to hold on an unbounded interval whereby $|\fW_{\gb_{\gs}}(\gs)|$ $>\frac{c}{\gs^2}$ for all sufficiently large $\gs$ $>0.$ 
The estimates that are developed here can be developed in similar manner for alternate choices of bases. We compute norm estimates of $\Phi=\fR_3\vec{\fC}$ and of $\cC_{\gs}\cdot\vec{Y}$ following the procedures above.

For $\Phi_{0,j}$ as in Proposition \ref{LWest}, we set
\begin{align}\notag
\fs(r,\gb)\df&\Phi_{0,1}(\gb)\Phi_{0,2}(r)-\Phi_{0,1}(r)\Phi_{0,2}(\gb);
\\\label{fsfw}
 \fw(r,\gb)\df&\Phi_{0,1}(\gb)\Phi'_{0,2}(r)-\Phi_{0,1}'(r)\Phi_{0,2}(\gb)\end{align}
and for  $S_{\gb,\gs}\df \sin(\gs\Theta(\gb)),$ $C_{\gb,\gs}\df \cos(\gs\Theta(\gb))$,
$f$ $=\frac{\sqrt{\rho}}{r\sqrt{c}},$
we find
$$\vec{\fC}
=(I+o(1))\left(A_1\frac{S_{\gb,\gs}}{\gs^{2}},A_2\frac{S_{\gb,\gs}}{\gs^{2}}
, \frac{A_3C_{\gb,\gs}-A_4}{\gs},A_3\frac{C_{\gb,\gs}}{\gs}\right)^{\top}$$
for
$\notag A_1=$ $\kappa^2f(a)f(\gb)\Phi_2(\gb),$
$A_2=$ $-\kappa^2f(a)f(\gb)\Phi_1(\gb),$
$A_3=$ \newline$\kappa f(\gb)\fw(\gb,\gb)$ and
$A_4=\kappa f(a)\fw(a,\gb)$
 by which we find
$$\Phi=-f(a)f(\gb)\fs(\cdot,\gb)\frac{\kappa^2S_{\gb,\gs}}{\gs^2}(1+o(1)).
 $$
We then estimate $\vec{Y}=(I+O(\gs^{-1}))\fA\vec{A}$ as in (\ref{Ynaught}) but for
for $\vec{A}=$ $\gs^{-1}(A_3C_{\gb,\gs}-A_4,A_3 C_{\gb,\gs})^{\top}$
and for 
$$u_1=\frac{r\sin(\gs\Theta)}{\gs\sqrt{c\rho}},u_2=\frac{r\cos(\gs\Theta)}{\gs\sqrt{c\rho}},\eta_1=\frac{c}{\gs r^2}u_2,\eta_2
=-\frac{c}{\gs r^2}u_1
$$
Setting $B_1(\gb,\gs)=$ $f(\gb)\fw(a,\gb) C_{\gb,\gs}-f(a)\fw(\gb,\gb)$ and
$B_2(\gb,\gs)$ $=$ $-f(\gb)\fw(a,\gb) S_{\gb,\gs},$ we have
$$\vec{\cC}_{\gs}\cdot \vec{Y}=\frac{\kappa^2r^2f}{\gs c}(B_1\cos(\gs\Theta)+B_2\sin(\gs\Theta))(1+o(1))$$
We may then choose $\gb=\gb_{\gs}$ $=a+z\gs^{-1}$
 in which case it is not difficult to show that
$
 B_1(\gb_{\gs},\gs)/S_{\gb_{\gs},\gs} =O(z\gs^{-1})
$
whereby
$\|\vec{\cC}_{\gs}\cdot \vec{Y}_0\|^2=\frac{\kappa^4}{2\gs^4}\|r^2f/c \|^2B_2^2(\gb_{\gs},\gs)(1+o(1)).$
The above results combine to give $\|\Phi(\cdot,\gs)\|_{\gb_{\gs}}=\fz_0(\gs)\|\vec{\cC}_{\gs}\cdot\vec{Y}(\cdot,\gs)\|_{\gb_{\gs}}$ 
for
$$\fz(\gs) =\frac{\sqrt{2}f(a)\|\fs(\cdot,a)\|_{\gb_{\gs}}}{\gs|\fw(a,a)|\|r^2f/c\|_{\gb_{\gs}}} (1+o(1))=O(\gs^{-2})
$$

\section{Appendix A}\label{s5}
\subsection{Full-System Asymptotics}
We provide asymptotic estimates for fundamental solutions for various systems involved in the article using standard methods
such as those of \cite{cl,t,w}.
In order to accommodate special cases, we introduce parameters $\gl,\gm$ to the system (\ref{full}), whereby we replace
$\cG, \cC$ by $\gl \cG, \gm\cC,$ respectively.
In each case we begin the process by applying shear transformations $\vec{W}= \cD \vec{X}$ which yield
$$\cA(r,\gz)\df\cD^{-1}(\gz) A(r,\gz) \cD(\gz) =\gz \cA_0(r) +\cA_1(r)+ \gz^{-1}\cA_2(r),$$
whereby $\cA_0$ is diagonalizable.
From the diagonalization process a transformation $\vec{Y}= P\vec{W}$ further yields an equation of the form
\begin{equation}
\vec{Y}' = (\gz \cB_0(r) +\cB_1(r)+ \gz^{-1}\cB_{2}(r) +O(\gz^{-2}))\vec{Y}
\label{Beqn}
\end{equation}
where $\cB_0=P^{-1}\cA_0P$ is diagonal.
For our purposes we will need only to retain leading-order estimates, with methods and theorems of \cite{cl, t, w} in hand, as the methods applied will depend initially on the form of $\cB_0.$ 
\subsection{High-Degree Asymptotics, Non--Adiabatic Equilibrium}
We may apply Theorem 2.1, Chapter 6 \cite{cl}
since the diagonal elements of $Q'_0(r)=\cA_0(r)$ are always distinct in value on $[a,b].$  Set  $\cB_1=\cA_1-P^{-1}P'$ and
\begin{equation}\label{formalsol}
P_1\cA_0=\cA_0P_1+\cB_1 \,(\text{off the diagonal});
 Q_1'= \cB_1\,(\text{on the diagonal})
\end{equation}
\begin{equation}\label{P}\cD=\begin{bmatrix}
1&0&0&0\\
0&\frac{1}{\gz}&0&0\\
0&0&\frac{1}{\gz}&0\\
0&0&0&1\end{bmatrix};
P=\begin{bmatrix}
0&0&\frac{-1}{\cH}&\frac{1}{\cH}\\0&0&1&1\\-r&r&0&0\\1&1&0&0
\end{bmatrix};
\end{equation}
$$
Q_0=\begin{bmatrix}-\ln r&0&0&0\\
0&\ln r&0&0\\
0&0&-\Theta&0\\
0&0&0&\Theta\end{bmatrix};
Q_1=\begin{bmatrix}\ln r^{-\frac{3}{2}}&0&0&0\\
0&\ln r^{-\frac{3}{2}}&0&0\\
0&0&\frac{-\ln F }{2}&0\\
0&0&0&\frac{-\ln F }{2}\end{bmatrix}
$$
with $F(r)\df r\rho(r)/\cH(r)$ and $\Theta$ as in (\ref{cP}). 
 
\begin{prop}\label{firstasymp}The ES system (\ref{otherfirst})-(\ref{otherthird}) with $\gL_{\ell}$ replaced by the continuous parameter $\gz$ has a fundamental solution satisfying $X(r,\gz)=(I +O(\gz^{-1}))\cX(r,\gz)$ as $\gz\rightarrow \infty$ for $\cX=$
\begin{equation}\label{firstmtx}
\left(
\begin{array}{cccc}
 \frac{\gl r^{\frac{3}{2}-\zeta }}{\zeta  g} & -\frac{\gl r^{\zeta +\frac{3}{2}}}{\zeta  g} & -\frac{e^{-\zeta \Theta}}{\sqrt{\rho\cH}} & \frac{ e^{\zeta  \Theta}}{\sqrt{\rho\cH}} \\
 -\frac{\gl r^{\frac{1}{2}-\zeta }}{\zeta ^2 g} & -\frac{\gl r^{\zeta +\frac{1}{2}}}{\zeta ^2 g} & \frac{ e^{-\zeta  \Theta}\sqrt{\cH}}{\zeta \sqrt{\rho}} & \frac{e^{\zeta  \Theta}\sqrt{\cH}}{\zeta\sqrt{\rho} }
\\
 -\frac{r^{-\zeta -\frac{1}{2}}}{\zeta } & \frac{r^{\zeta -\frac{1}{2}}}{\zeta } & \frac{\mu  \gk \gs^2 \sqrt{ \rho}  e^{-\zeta  \Theta}}{\zeta ^2 g \sqrt{\cH}} & \frac{-\mu  \gk \gs^2 \sqrt{\rho}  e^{\zeta  \Theta}}{\zeta ^2 g \sqrt{\cH}} \\
 r^{-\zeta -\frac{3}{2}} & r^{\zeta -\frac{3}{2}} & -\frac{\gk \mu  \gs^2 \sqrt{\rho\cH}  e^{-\zeta \Theta}}{\zeta  g} & -\frac{\gk \mu  \gs^2 \sqrt{\rho\cH}  e^{\zeta \Theta}}{\zeta  g} \\
\end{array}
\right)
\end{equation}
provided that $0<N^2(r)/\gs^2<1$ on $[a,b].$
\end{prop}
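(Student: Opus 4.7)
The plan is to complete the WKB-type reduction scheme already set up in the preceding paragraphs by carrying out three verifications: that the shear $\cD$ and similarity $P$ in (\ref{P}) transform the full system into the form (\ref{Beqn}) with a diagonal $\cB_0$ whose entries are pairwise distinct; that Theorem~2.1, Chapter~6 of \cite{cl} then produces a fundamental matrix of asymptotic WKB form; and that unwinding the two transformations reproduces the explicit matrix $\cX$ of (\ref{firstmtx}) entry by entry.

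For the first step, one substitutes $\gL_\ell \to \gz$ and forms $\cA = \cD^{-1}A\cD$, reading off the order-$\gz$ part $\cA_0$. Its characteristic polynomial factors as $(\gn^2 - r^{-2})(\gn^2 - \cH^2)$ with $\cH = r^{-1}\sqrt{1-N^2/\gs^2}$, so the eigenvalues are $\pm 1/r$ and $\pm \cH$. The hypothesis $0 < N^2(r)/\gs^2 < 1$ makes $\cH$ real and strictly between $0$ and $1/r$ on $[a,b]$, which forces all four eigenvalues to remain pairwise distinct uniformly. A direct check then shows that the columns of $P$ in (\ref{P}) are corresponding eigenvectors, giving $P^{-1}\cA_0 P = Q_0'(r)$ as displayed.

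With the non-coalescence hypothesis verified, Theorem~2.1 of \cite{cl} furnishes a fundamental matrix of (\ref{Beqn}) of the form $Y(r,\gz) = (I + O(\gz^{-1}))\exp(\gz Q_0 + Q_1)$, where $Q_1$ is determined by (\ref{formalsol}). Integrating the diagonal of $\cB_1 = P^{-1}\cA_1 P - P^{-1}P'$ produces $e^{Q_1} = \text{diag}(r^{-3/2}, r^{-3/2}, F^{-1/2}, F^{-1/2})$ with $F = r\rho/\cH$, matching the displayed $Q_1$. Pulling back via $X = \cD P Y$ then yields (\ref{firstmtx}), with each entry arising as a product of a row of $\cD P$ against a column of $\exp(\gz Q_0 + Q_1)$, and the parameters $\gl,\gm$ passing through unchanged from the original system.

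The main obstacle is purely bookkeeping: tracking signs, the $\gz$-powers absorbed by the shear $\cD$, and the distinction between $\sqrt{\cH/\rho}$ and $\sqrt{\rho/\cH}$ coming from $P$ and $e^{Q_1}$, so that each entry of (\ref{firstmtx}) matches with the correct $\gl$- or $\gm$-dependence. Nothing deeper than the standard asymptotic theorem is needed; the one place the argument could fail is at a point where two eigenvalues of $\cA_0$ coalesce, which the uniform hypothesis $0 < N^2/\gs^2 < 1$ rules out by simultaneously forbidding $N \equiv 0$ and the turning point $N^2 = \gs^2$ on $[a,b]$.
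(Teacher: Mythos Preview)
Your reduction to the Coddington--Levinson framework is correct up to the point where you write $X=\cD P\,Y$ with $Y=(I+O(\gz^{-1}))\exp(\gz Q_0+Q_1)$ and then claim that ``each entry [of $\cX$] aris[es] as a product of a row of $\cD P$ against a column of $\exp(\gz Q_0+Q_1)$.'' That last assertion is false for exactly half of the entries of (\ref{firstmtx}), and this is where the proof actually has content.

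Look at the block structure of $P$ in (\ref{P}): its upper-left and lower-right $2\times 2$ blocks are identically zero. Consequently $\cD P\exp(\gz Q_0+Q_1)$ has zeros in positions $(1,1),(1,2),(2,1),(2,2)$ and $(3,3),(3,4),(4,3),(4,4)$, whereas (\ref{firstmtx}) displays nonzero entries there---precisely the ones carrying the parameters $\gl$ and $\gm$ and the factor $1/g$. Those entries are one order lower in $\gz$ than the nonzero entries in the same column, so they are invisible at the level $\cD P\exp(\gz Q_0+Q_1)$ and must be extracted from the \emph{next} term of the formal expansion, namely from $\gz^{-1}\cD P P_1\exp(\gz Q_0+Q_1)$ where $P_1$ is the off-diagonal solution of (\ref{formalsol}). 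The paper carries this out: setting $\fM=PP_1$ in $2\times2$ block form, one observes that because $P$ is block-antidiagonal the leading contribution in the ``missing'' blocks comes from the diagonal blocks $\fm_{1,1},\fm_{2,2}$ of $\fM$, and these are determined by solving the pair of $2\times2$ Sylvester-type equations displayed just before (\ref{m22}). The solutions (\ref{m22}) then supply exactly the $\gl/g$ and $\gm\gk\gs^2/g$ factors seen in (\ref{firstmtx}).

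In short, your outline omits the one genuinely computational step: the first-order correction $P_1$ is not optional bookkeeping but is what produces the $\gl$- and $\gm$-dependent entries of $\cX$. Without it you obtain a weaker statement (a $\cX$ with eight zero entries) that is compatible with, but does not establish, the specific matrix in (\ref{firstmtx}).
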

\begin{proof}
The result will follow readily from Theorem 3.1, Chapter 6 of \cite{cl} once we identify leading-order terms
of $\cD PY$ where $Y$ is a formal solution of (\ref{Beqn}).
Write $\fM\df P\tilde{P}_1$ in the form  $\fM=(\fm_{j,k})_{_{1\leq j,k \leq 2}}$ for $2\times 2$ blocks $\fm_{j,k}.$ Then, by inspecting the corresponding block form for $P,$ we find that the leading-order estimates of
 $P(I+\gz^{-1}\tilde{P}_1)$ will be obvious once those of the on-diagonal blocks $\fm_{j,j}$ are ascertained.
To do so, we use (\ref{formalsol}) and note that
$P B_0 P_1= A_0 \fM,$ that
$P\cB_1=A_1P-\derr{}P,$ and that 
the associated on-diagonal blocks of $P(\cB_1-\tilde{\cB}_1)$ and of $\derr{}P$ are all zero. We solve
\begin{equation}\notag
\fm_{1,1} \begin{bmatrix}\frac{-1}{r}&0\\0&\frac{1}{r}\end{bmatrix}-\begin{bmatrix}0&1\\\cH^2&0\end{bmatrix}\fm_{1,1}=\gl
\begin{bmatrix}0&0\\\frac{rN^2}{\gs^2g}&\frac{-rN^2}{\gs^2g}\end{bmatrix}
\end{equation}
\begin{equation}\notag\fm_{2,2} \begin{bmatrix}-\cH&0\\0&\cH\end{bmatrix}-\begin{bmatrix}0&1\\\frac{1}{r^2}&0\end{bmatrix}\fm_{2,2}=\gm
\begin{bmatrix}0&0\\\frac{-\gk N^2\rho}{gr^2\cH}&\frac{\gk N^2\rho}{gr^2\cH}\end{bmatrix}
\end{equation}
to find, using $r^2\cH^2= 1 - N^2/\gs^2,$ 
\begin{equation}\label{m22} \fm_{1,1}=\gl\begin{bmatrix}\frac{r^3}{g}&\frac{-r^3}{g}\\
\frac{-r^2}{g}&\frac{-r^2}{g}
\end{bmatrix};
\fm_{2,2}=\gm\begin{bmatrix}\frac{\gk\gs^2\rho}{g\cH}&\frac{-\gk\gs^2\rho}{g\cH}\\
\frac{-\gk \gs^2\rho}{g}&\frac{-\gk \gs^2\rho}{g}
\end{bmatrix}
\end{equation}
The desired result thus obtains from those leading-order terms of 
$\cX= \cD(P +\gz^{-1}\fM) \exp(\gz Q_0 +Q_1).$
\end{proof}

The procedure for the oscillatory (g-mode) case is almost identical in the following, exept that we redefine $\cH$ by the real-valued
$$\cH(r)\df\frac{1}{r}\sqrt{N^2(r)/\gs^2-1}.$$
\begin{prop}\label{loosccase}
The system as in Proposition \ref{firstasymp} with $N^2(r)/\gs^2 >1$ on $[a,b]$ has a fundamental solution as in (\ref{firstmtx}) except for columns $\vec{\cX}_3,\vec{\cX}_4$ of $\cX$ replaced by
\begin{align}\notag
\vec{\cX}^{\,\top}_3&=\left(\frac{ie^{-i\zeta \Theta}}{\sqrt{\rho\cH}} , \frac{ e^{-i\zeta  \Theta}\sqrt{\cH}}{\zeta \sqrt{\rho}} , \frac{-i\gk \mu  \gs^2 \sqrt{\rho}  e^{-i\zeta \Theta}}{\zeta^2  g\sqrt{\cH}},\frac{-\gk \mu  \gs^2 \sqrt{\rho\cH}  e^{-i\zeta \Theta}}{\zeta  g} \right)
\\\notag
\vec{\cX}^{\,\top}_4&=\left( \frac{ -ie^{i\zeta  \Theta}}{\sqrt{\rho\cH}} , \frac{e^{i\zeta  \Theta}\sqrt{\cH}}{\zeta\sqrt{\rho} }, \frac{i\mu  \gk \gs^2 \sqrt{\rho}  e^{i\zeta  \Theta}}{\zeta ^2 g \sqrt{\cH}} , -\frac{\gk \mu  \gs^2 \sqrt{\rho\cH}  e^{i\zeta \Theta}}{\zeta  g}\right )
\end{align}
\end{prop}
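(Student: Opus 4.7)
The plan is to reprove Proposition \ref{firstasymp} almost verbatim, substituting the new real-valued $\cH(r) = \frac{1}{r}\sqrt{N^2/\gs^2 - 1}$ and tracking the sign change in the identity $r^2 \cH^2 = N^2/\gs^2 - 1$ (versus $r^2\cH^2 = 1 - N^2/\gs^2$ in the exponential case). This sign change is what converts the real exponentials $e^{\pm \zeta \Theta}$ in the third and fourth columns of (\ref{firstmtx}) into the oscillatory factors $e^{\mp i \zeta \Theta}$ of $\vec{\cX}_3, \vec{\cX}_4$.

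Concretely, I would keep the same shear $\cD$ as in (\ref{P}) but replace the diagonalizing matrix $P$ by its analogue in which the $\pm 1/\cH$ entries become $\mp i/\cH$, so as to diagonalize the second $2\times 2$ block of $\cA_0,$ whose eigenvalues are now $\pm i \cH$. The distinct-eigenvalue hypothesis of Theorem 3.1, Chapter 6 of \cite{cl} is easily verified on $[a,b]$: the first block contributes real eigenvalues $\pm 1/r$, the second contributes purely imaginary $\pm i \cH$, and these four values are pairwise distinct at every $r\in[a,b]$. Accordingly, $Q_0$ is the same as in the proof of Proposition \ref{firstasymp} except with last two diagonal entries $-\Theta, \Theta$ replaced by $-i\Theta, i\Theta$, producing the oscillatory exponentials.

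Next, I would compute the correction $\fM = P\tilde P_1$ using exactly the block-by-block scheme of (\ref{formalsol}). The off-diagonal blocks $\fm_{1,2}, \fm_{2,1}$ do not interact with the modified block, so they are unchanged. The equation for $\fm_{1,1}$ is also formally the same (its right-hand side involves the first block's eigenvectors), so $\fm_{1,1}$ is as in (\ref{m22}). The equation for $\fm_{2,2}$ is modified: replacing the $\cH^2$ entry by $-\cH^2$ and the eigenvalues $\pm \cH$ by $\pm i\cH$ flips a sign, and solving the resulting Sylvester-type block equation using $r^2\cH^2 = N^2/\gs^2 - 1$ yields an $\fm_{2,2}$ whose entries produce exactly the $i$-factors in the third and fourth components of $\vec{\cX}_3,\vec{\cX}_4$. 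The computation of $Q_1$ is essentially unchanged, since $\tfrac{d}{dr}\ln F$ depends only on real-valued quantities.

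Assembling $\cX = \cD(P + \zeta^{-1}\fM)\exp(\zeta Q_0 + Q_1)$ and invoking Theorem 3.1, Chapter 6 of \cite{cl} delivers the existence of a fundamental solution $X = (I + O(\zeta^{-1}))\cX$ with the first two columns identical to those of (\ref{firstmtx}) and the last two columns as claimed. The main obstacle I anticipate is simply bookkeeping: keeping track of which signs and factors of $i$ land in which positions of the eigenvector matrix, so that the resulting $\vec{\cX}_3, \vec{\cX}_4$ match the explicit formulas of the proposition. Since the underlying system has real coefficients, a consistency check is available: real and imaginary parts of $\vec{\cX}_3$ (or equivalently $\vec{\cX}_4 = \overline{\vec{\cX}_3}$) should yield two linearly independent real solutions agreeing with the $\sin/\cos$ basis employed in the oscillatory passages of Section \ref{s3}.
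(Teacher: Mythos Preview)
Your proposal is correct and follows essentially the same route as the paper: redefine $\cH$, replace the relevant entries of $P$ by $\pm i/\cH$ to diagonalize the now purely imaginary block of $\cA_0$, observe that $P^{-1}P'$ and hence $Q_1$ are unchanged, recompute $\fm_{2,2}$ (the paper gives it explicitly), and assemble $\cX$ via the same leading-order formula. One minor remark: the off-diagonal blocks $\fm_{1,2},\fm_{2,1}$ may well change under the new $P$, but this is irrelevant since (as in the proof of Proposition~\ref{firstasymp}) only the on-diagonal blocks $\fm_{j,j}$ contribute to the leading-order entries of $\cD(P+\zeta^{-1}\fM)\exp(\zeta Q_0+Q_1)$.
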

\begin{proof}
Upon changing the definition of $\cH,$ the procedures as those of (\ref{P}) and (\ref{m22}) give
$$ 
P=\begin{bmatrix}
0&0&\frac{i}{\cH}&\frac{-i}{\cH}\\0&0&1&1\\-r&r&0&0\\1&1&0&0
\end{bmatrix},\, 
\fm_{2,2}=\gm\begin{bmatrix}\frac{-i\gk\gs^2\rho}{g\cH}&\frac{i\gk\gs^2\rho}{g\cH}\\
\frac{-\gk \gs^2\rho}{g}&\frac{-\gk \gs^2\rho}{g}
\end{bmatrix}
$$
whereas $P^{-1}P'$ and, hence, $Q_1$ remain the same under the new definition of $\cH$ . The result follows from the given substitutions.
\end{proof}
\subsection{High-Degree Asymptotics, Adiabatic Equilibrium}

For $\Theta$ as in (\ref{cP}) and for $\cD$ as above we arrive at an equation of the form (\ref{Beqn}) with $\cB_0 =Q'_0$ for
$$
P=\begin{bmatrix}
r&0&-r&0\\-1&0&-1&0\\0&r&0&-r\\0&-1&0&-1
\end{bmatrix};
Q_0=\begin{bmatrix}-\ln r&0&0&0\\
0&-\ln r&0&0\\
0&0&\ln r&0\\
0&0&0&\ln r\end{bmatrix},
$$ noting that $Q_0'$ does not have distinct diagonal elements but that it lends itself to an alternate method:
From \cite{t} (see Section 5), we may find matrices $\cQ_1(r)$ and $\cQ_2(r)$ so that the transformation 
\begin{equation}\label{sep}
\vec{X}=\cD P(I+\gz^{-1}\cQ_1)(I+\gz^{-2}\cQ_2)\vec{Y}
\end{equation}
yields an equation of the form (\ref{Beqn}) but
with $\cB_1$ and $\cB_2$  which in $2\times 2$ block form has $0$ blocks off the diagonal.
In this process we compute a matrix $\cQ_1$ whereby $\cB_1$ happens to be diagonal: These are given by
$$
\cQ_1=\frac{1}{4}\begin{bmatrix}0&0&rf&0\\
0&0&0&-1\\
-rf&0&0&0\\
0&1&0&0\end{bmatrix};
\cB_1=-\frac{1}{2}
\begin{bmatrix}
f&0&0&0\\
0&3/r&0&0\\
0&0&f&0\\
0&0&0&3/r\end{bmatrix}$$
for $f(r)$ $\df\derr{}\ln(r\rho(r)).$ Thereafter, with a combination of singular and non-singular perturbation methods (involving non-positive powers of $\gz$), sufficient numbers of terms of a formal solution can be computed directly - at, however, the cost of digressing into procedures beyond the scope of this article. In the present case, we take advantage of the separated form in the following 
\begin{prop}\label{secondasymp}
The system as in Proposition \ref{firstasymp} with $N(r)= 0$ on $[a,b]$ has a fundamental solution $X$ satisfying $X=(I+O(\gz^{-1}))\cX$ as $\gz\rightarrow\infty$ for $\cX=$  
\begin{equation}\label{AdEqSol}
\left(
\begin{array}{cccc}
 -\frac{\lambda  f r^{\frac{1}{2}-\zeta }}{\zeta  \sqrt{\rho }} & -\frac{r^{\frac{1}{2}-\zeta }}{\sqrt{\rho}} & \frac{\lambda  f r^{\zeta +\frac{1}{2}}}{ \zeta  \sqrt{\rho }} & \frac{r^{\zeta +\frac{1}{2}}}{\sqrt{\rho }} \\
 \frac{\lambda  f r^{-\zeta -\frac{1}{2}}}{\zeta ^2 \sqrt{\rho }} & \frac{r^{-\zeta -\frac{1}{2}}}{\zeta  \sqrt{\rho }} & \frac{\lambda  f r^{\zeta -\frac{1}{2}}}{\zeta ^2 \sqrt{\rho }} & \frac{r^{\zeta -\frac{1}{2}}}{\zeta  \sqrt{\rho }} \\
 -\frac{r^{-\zeta -\frac{1}{2}}}{\zeta } & -\frac{\gk \mu  \gs^2 f r^{-\zeta -\frac{1}{2}}}{\zeta ^2} & \frac{r^{\zeta -\frac{1}{2}}}{\zeta } & \frac{\gk \mu  \gs^2 f r^{\zeta -\frac{1}{2}}}{\zeta ^2} \\
 r^{-\zeta -\frac{3}{2}} & \frac{\gk \mu  \gs^2 f r^{-\zeta -\frac{3}{2}}}{ \zeta } & r^{\zeta -\frac{3}{2}} & \frac{\gk \mu  \gs^2 f r^{\zeta -\frac{3}{2}}}{ \zeta } \\
\end{array}
\right)
\end{equation}
where $f(r)$ satisfies $f'$ $=\frac{1}{2}r\sqrt{\rho}/c^2.$ 
\end{prop}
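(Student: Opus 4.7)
Because $Q_0'$ has repeated eigenvalues $\pm 1/r$, Levinson's theorem (Theorem 2.1, Chapter 6 of \cite{cl}) cannot be invoked directly, and I continue the block-diagonalization procedure of \cite{t} beyond the step already carried out in the text preceding the statement. Having chosen $\cQ_1$ so that $\cB_1$ is diagonal (and hence trivially block-diagonal with respect to the $\pm 1/r$ splitting of $Q_0'$), I solve the Sylvester-type equation $[\cB_0,\cQ_2] = \cB_2 - \cB_2^{\text{bd}}$ for the off-block entries of $\cQ_2$ that make $\cB_2$ block-diagonal. After the change of variable (\ref{sep}) the transformed system decouples into two $2\times 2$ subsystems with leading matrix $\mp \gz r^{-1} I$ and subleading diagonal entries $-\tfrac{1}{2}(\ln(r\rho))'$ and $-3/(2r)$.

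Within each $2\times 2$ block the scalar leading part is removed by the substitution $\vec{w} = r^{\mp \gz} \vec{y}$, leaving a bounded $2\times 2$ system whose $O(1)$ eigenvalues are the two distinct entries recorded above. I then apply Levinson's theorem to each block to obtain scalar formal solutions whose leading behavior, after integrating the diagonal of $\cB_1$, is $r^{\pm\gz-1/2}/\sqrt{\rho}$ for the first component and $r^{\pm\gz - 3/2}$ for the second. Transforming back through $\cD P(I + \gz^{-1} \cQ_1)(I + \gz^{-2}\cQ_2)$, the factor $P$ sends a ``$(u,\eta)$-type'' block solution to the upper two rows of $\vec{X}$ and a ``$(\Phi,\Phi')$-type'' block solution to the lower two rows, matching the block-dominance pattern of (\ref{AdEqSol}). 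The off-block entries of $\cQ_1$ and $\cQ_2$ carry the coupling parameters $\gl$ (introduced through $\gl\vec{\cG}_{\gs}$) and $\gm$ (through $\gm\vec{\cC}_{\gs}$), producing the subleading cross-coupling terms with prefactors $\gl/\gz$, $\gl/\gz^2$, $\gk\gm\gs^2/\gz$ and $\gk\gm\gs^2/\gz^2$ visible in (\ref{AdEqSol}).

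The main technical obstacle will be to identify the function $f$ of the statement---defined by $f'(r) = \tfrac{1}{2} r\sqrt{\rho}/c^2$---with the specific off-block antiderivative that arises in $\cQ_2$. The form of this $f$ is consistent with the coupling structure in the adiabatic case: $\vec{\cG}_{\gs}$ has upper entry $r^2/c^2$ and the $(r\rho)^{-1/2}$ weight inherited from the diagonalization of $\cB_1$ (since $\exp(-\tfrac{1}{2}\!\int(\ln r\rho)'\,dr)=(r\rho)^{-1/2}$) combines with $r^2/c^2$ to produce the integrand $\tfrac{1}{2} r\sqrt{\rho}/c^2$ once the $P$- and $\cD$-normalizations are taken into account; the symmetric computation on the $\vec{\cC}_{\gs}$ side accounts for the $\gk \gm \gs^2 f$ entries on rows $3$ and $4$. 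Once $f$ is matched in this way, verification of (\ref{AdEqSol}) reduces to tracking the $\gz^{-k}$ contributions in the product $\cD P(I + \gz^{-1}\cQ_1)(I + \gz^{-2}\cQ_2)\vec{z}$ applied to each formal column, and the existence of genuine fundamental solutions asymptotic to these formal series follows from the standard results of \cite{t} applied to each decoupled block.
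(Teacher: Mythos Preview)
Your plan has a genuine gap in the mechanism by which the antiderivative $f$ (with $f'=\tfrac12 r\sqrt{\rho}/c^2$) is produced. The matrices $\cQ_1,\cQ_2$ in Turrittin's procedure are determined pointwise by commutator (Sylvester) equations $[\cB_0,\cQ_j]=(\text{off-block part})$; they are algebraic combinations of the coefficients and never contain an antiderivative. Moreover, a simple $\gz$-bookkeeping shows that the contribution of $\gz^{-2}\cQ_2$ to the cross entries of $\cD P(I+\gz^{-1}\cQ_1)(I+\gz^{-2}\cQ_2)\vec Y$ is of order $\gz^{-2}$, whereas the $\lambda$- and $\mu$-dependent cross entries in (\ref{AdEqSol}) (e.g.\ row~1 of column~1, or row~4 of column~2) are of order $\gz^{-1}$. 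So ``the off-block antiderivative that arises in $\cQ_2$'' is not where $f$ lives, and Levinson within each block gives only the $(1+o(1))$ leading term, which carries no $\lambda,\mu$ at all.

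What actually happens is that after conjugation by $P$ the $\lambda,\mu$ couplings (which sit at order $\gz^{-1}$ in $\cD^{-1}A\cD$) land partly in the \emph{on-block} positions of $\cB_2$: one checks $(P^{-1}\cA_2P)_{12}=\lambda r^2/(2c^2)$ and $(P^{-1}\cA_2P)_{21}=\mu\kappa\gs^2\rho/(2c^2)$. Inside block~1, with $\vec w=r^{\gz}\vec y$, the first-order correction to (say) the $(0,r^{-3/2})$ solution solves $(\sqrt{r\rho}\,w_1)'=\gz^{-1}\lambda\, r\sqrt{\rho}/(2c^2)+O(\gz^{-2})$, and it is \emph{this integration}---not $\cQ_2$---that produces $f=\int f'$. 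This is exactly the step the paper performs, though the paper organizes it differently: it uses the block-diagonalization only to justify the leading-order ($\lambda=\mu=0$) asymptotics $\fM_0,\fO_0$, and then obtains the $\lambda,\mu$ cross terms by one round of successive approximation (variation of parameters) against $\fM_0$ and $\fO_0$, from which $u_1=\lambda u_{0,1}\gz^{-1}\!\int f'\,(1+O(\gz^{-1}))$ etc.\ and hence (\ref{AdEqSol}) with $f=\int f'$. Your outline can be repaired by replacing ``Levinson within the block'' and ``$f$ from $\cQ_2$'' with this integration of the on-block $\gz^{-1}$ coupling; as written, the source of $f$ is misidentified and the argument does not close.
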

This result is also the subject of another article by the author which is obtained by direct calculations - singular and non-singular perturbation - via symbolic computation \cite{wi}.
\begin{proof}
From the discussion above it is clear that  (\ref{Beqn}) has a formal solution $\cY$  satisfying
$$\cY\exp(-\gz Q_0 -Q_1)= I+ O(\gz^{-1}),$$ 
where $Q_1'=\cB_1:$ By Theorem 6 of \cite{t}, moreover, there is a fundamental solution $Y=$
$(\vec{Y}_1,\vec{Y}_2,\vec{Y}_3,\vec{Y}_4)$ 
satisfying this estimate. 
Since the estimate is independent of both $\gl$ and $\gm,$ we may apply the estimate $\cD\cY$ to  (\ref{full}) in both the separated case $\gm=\gl=0,$ whereby (\ref{AdEqSol}) now holds, and the case $\gm=\gl=1:$ We are thus able to complete the later case via successive approximations to compute

\begin{equation}\notag
\fM_0\df\begin{bmatrix}u_{0,1}&u_{0,2}\\\eta_{0,1}&\eta_{0,2}
\end{bmatrix}=(I+O(\gz^{-1}))
\begin{bmatrix}
-\frac{r^{\frac{1}{2}-\zeta }}{\sqrt{\rho}} &  \frac{r^{\zeta +\frac{1}{2}}}{\sqrt{\rho }} \\
 \frac{r^{-\zeta -\frac{1}{2}}}{\zeta  \sqrt{\rho }} &   \frac{r^{\zeta -\frac{1}{2}}}{\zeta  \sqrt{\rho }} 
\end{bmatrix}
\end{equation}
\begin{equation}\label{fO0}
\fO_0\df\begin{bmatrix} \Phi_{0,1}&\Phi_{0,2}\\\Phi_{0,1}'&\Phi_{0,2}'
\end{bmatrix}=(I+O(\gz^{-1}))
\begin{bmatrix}
 -\frac{r^{-\zeta -\frac{1}{2}}}{\zeta }&\frac{r^{\zeta -\frac{1}{2}}}{\zeta }\\ r^{-\zeta -\frac{3}{2}} &  r^{\zeta -\frac{3}{2}}
\end{bmatrix}
\end{equation}
$$
\fM_0^{-1}\cG_{\gs}\Phi_{0,1}\, , \, -r^{-2\gz}\fM_0^{-1}\cG_{\gs}\Phi_{0,2}=\frac{1}{\gz}(I+O(\gz^{-1}))
\begin{bmatrix}f'\\-r^{-2\gz}f'
\end{bmatrix}
$$
We may use the following estimates
$$\int_r^bt^{-2\gz}f'(t)\,dt =r^{-2\gz}O(\gz^{-1});\,\int_a^rt^{2\gz}f'(t)\,dt =r^{2\gz}O(\gz^{-1})
$$
to find
\begin{align}\notag
u_1(r,\gz)&=\gl u_{0,1}(r,\gz)\frac{1}{\gz}\int_b^rf'(t)\,dt(1+O(\gz^{-1}))\\\notag&+\gl u_{0,2}(r,\gz)\frac{1}{\gz}\int_r^bt^{-2\gz}f'(t)\,dt(1+O(\gz^{-1}))\\\notag
&=\gl u_{0,1}(r,\gz)\frac{1}{\gz}\int_b^rf'(t)\,dt(1+O(\gz^{-1}))
\end{align}
Likewise, we have
$$\eta_1(r,\gz)=\gl \eta_{0,1}(r,\gz)\frac{1}{\gz}\int_b^rf'(t,\gz)\,dt(1+O(\gz^{-1}))
$$ which completes the asymptotic estimates of $\vec{Y}_1.$

We further appraise a solution $\vec{Y}_3$ with
\begin{align}\notag
u_3(r,\gz)=&-\gl u_{0,1}(r,\gz)\frac{1}{\gz}\int_a^rt^{2\gz}f'(t)\,dt(1+O(\gz^{-1}))\notag\\&+\gl u_{0,2}(r,\gz)\frac{1}{\gz}\int_a^rf'(t)\,dt(1+O(\gz^{-1}))\notag\\\notag
=&\gl u_{0,2}(r,\gz)\frac{1}{\gz}\int_a^rf'(t)\,dt(1+O(\gz^{-1})\notag
\end{align}
and, in turn,
\begin{equation}\eta_3(r,\gz)=\gl \eta_{0,2}(r,\gz)\frac{1}{\gz}\int_a^rf'(t)\,dt(1+O(\gz^{-1}))\notag
\end{equation}

We appraise remaining components of solutions $\vec{Y}_2$ and $\vec{Y}_4:$ Using
$$
\fO_0^{-1}\begin{bmatrix}0\\ \gs^2\gk\gm h r^{-2}\eta_{0,1}\end{bmatrix}=\frac{1}{\gz}(I+O(\gz^{-1}))
\begin{bmatrix}f'\\f'r^{-2\gz}
\end{bmatrix}
$$
$$
\fO_0^{-1}\begin{bmatrix}0\\ \gs^2\gk\gm h r^{-2}\eta_{0,2}\end{bmatrix}=\frac{1}{\gz}(I+O(\gz^{-1}))
\begin{bmatrix}f'r^{2\gz}\\f'
\end{bmatrix},
$$
we find for derivative orders $k=0,1$
\begin{align}\notag
\Phi_2^{(k)}(r,\gz)&= \gs^2\gk\gm\Phi^{(k)}_{0,1}(r,\gz)\frac{1}{\gz}\int_b^rf'(t)\,dt(1+O(\gz^{-1}))\\\notag
\Phi_4^{(k)}(r,\gz)&= \gs^2\gk\gm\Phi^{(k)}_{0,2}(r,\gz)\frac{1}{\gz}\int_a^rf'(t)\,dt(1+O(\gz^{-1})).
\end{align}
Our solutions $\vec{Y}_j$ may be modified to involve various antiderivatives of $f';$ indeed, we may suppose, for instance, $f(r)=$ $ \int_a^rf'(t)dt,$ giving (\ref{AdEqSol}).
\end{proof}

\subsection{High-Frequency Asymptotics, Fixed Degree} 

We find in the LW formulation of the system, from the
variable change $\eta=$ $\gs^{-2}(y_{_{LW}}-\Phi),$ that the leading-order matrices take interesting forms which can be exploited. The change of variables results in 
$\vec{X}'=A_{LW}\vec{X}$ for 
\begin{equation}\notag A_{LW}\df
\begin{bmatrix}
\frac{g}{c^2}&\frac{\gL}{\gs^2}-\frac{r^2}{c^2}&\gl\frac{\gL^2}{\gs^2}&0\\
\frac{\gs^2-N^2}{r^2}&\frac{N^2}{g}&0&\gl\\
0&0&0&1\\
\gm\frac{\kappa N^2\rho}{r^2g}&\gm\frac{\kappa \rho}{c^2}&\frac{\gL}{r^2}&\frac{-2}{r}
\end{bmatrix}
\end{equation}
We then  transform the resulting system as in (\ref{sep}) with
$$\cD=\begin{bmatrix}
\frac{1}{\gs}&0&0&0\\0&1&0&0\\0&0&1&0\\0&0&0&1
\end{bmatrix};
P=\begin{bmatrix}
0&0&\frac{-ir^2}{c}&\frac{ir^2}{c}\\
0&0&1&1\\
0&1&0&0\\
1&0&0&0
\end{bmatrix};
$$
$$
\cQ_1=\begin{bmatrix}
0&0&\frac{i\gm\gk \rho}{c}&\frac{i\gm\gk \rho}{c}\\
0&0&0&0\\
\frac{i\gl c}{2}&0&0&0\\
\frac{-i\gl c}{2}&0&0&0
\end{bmatrix};
\cQ_2=\begin{bmatrix}
0&0&0&0\\
0&0&-\gm \gk\rho&-\gm \gk\rho\\
\frac{\gl g}{2}&0&0&0\\
\frac{\gl g}{2}&0&0&0
\end{bmatrix}
$$
to obtain an equation of the form (\ref{Beqn}) for $\gz=\gs:$ We retain low-order terms (in $\frac{1}{\gs}$) on the block diagonal in setting
$$\gs\cB_0(r)+\cB_1(r)=\tilde{\cB}(r,\gs) + \, higher\, order$$ 
to find that the equation $\vec{Y}'=\tilde{\cB}\vec{Y}$ has an exact fundamental solution $\cY$ where 
$$\tilde{\cB}=\begin{bmatrix}
\frac{-2}{r}&\frac{\gL}{r^2}&0&0\\1&0&0&0\\0&0&\frac{-i\gs}{c}&0\\0&0&0&\frac{i\gs}{c}
\end{bmatrix};
\cY=\begin{bmatrix}
\derr{\Phi_{0,1}}&\derr{\Phi_{0,2}}&0&0\\\Phi_{0,1}&\Phi_{0,2}&0&0\\0&0&e^{-i\Theta}&0\\0&0&0&e^{i\Theta}
\end{bmatrix}
$$
for $\Phi_{0,1}(r)=r^{-\ell-1},$ $\Phi_{0,2}(r)=r^{\ell}.$ 
We are ready to state
\begin{prop}\label{LWest} The system 
$\vec{X}'=$ $A_{LW}\vec{X}$ 
has a fundamental solution $X_{LW}$ satisfying $X_{LW}=(I+O(\gs^{-1}))\cX$ as $\gs\rightarrow\infty$ for $\cX=$  
 $$
\left(  
\begin{array}{cccc}
 \frac{\gl  r^2\Phi_{0,1}'}{\gs^2} & \frac{\gl  r^2 \Phi_{0,2}'}{\gs ^2} & -\frac{i r e^{-i \gs  \Theta }}{\gs \sqrt{c\rho }} & \frac{i r e^{i \gs  \Theta }}{\gs  \sqrt{c\rho}} \\
 \frac{\gl  g \Phi_{0,1}'}{\gs ^2} & \frac{\gl  g \Phi_{0,2}'}{\gs^2} & \frac{\sqrt{c} e^{-i \gs  \Theta }}{r \sqrt{\rho }} & \frac{\sqrt{c} e^{i \gs  \Theta }}{r \sqrt{\rho }} \\
\Phi_{0,1} &\Phi_{0,2} & -\frac{\gk \gm  \sqrt{c\rho } e^{-i \gs \Theta }}{\gs^2 r} & -\frac{\gk \gm  \sqrt{c\rho } e^{i \gs\Theta }}{\gs^2 r} \\
\Phi_{0,1}' & \Phi_{0,2}' & \frac{i \gk \gm  \sqrt{\rho } e^{-i \gs \Theta }}{\gs r \sqrt{c}} & -\frac{i \gk \gm  \sqrt{\rho} e^{i \gs \Theta }}{\gs  r \sqrt{c}} \\
\end{array}
\right),$$
for $\Phi_{0,j}$ as above, and for $\Theta$ as in (\ref{nextcP}). 
\end{prop}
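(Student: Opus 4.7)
My plan is to mirror the structure of the proof of Proposition \ref{secondasymp}: first reduce to the block-separated formal system via the transformations $\cD$, $P$, $\cQ_1$, $\cQ_2$ already written down in the preceding discussion, then invoke the asymptotic existence theorem to turn the formal solution into an actual one, and finally unwind the change of variables to read off the leading-order entries of $\cX$. Writing $\vec{X}=\cD P(I+\gs^{-1}\cQ_1)(I+\gs^{-2}\cQ_2)\vec{Y}$, the resulting equation for $\vec{Y}$ has the form $(\ref{Beqn})$ with $\gz=\gs$ whose $\gs$-leading and $\gs$-subleading parts combine to the block-diagonal matrix $\tilde{\cB}$. Since $\tilde{\cB}$ is exactly block-diagonal, the displayed $\cY$ is an exact fundamental solution (the first block is the Euler equation $(r^2\Phi')'-\gL\Phi=0$ with basis $\Phi_{0,1}=r^{-\ell-1}, \Phi_{0,2}=r^{\ell}$, and the second block is the scalar wave factor $e^{\pm i\gs\Theta}$).

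Next I would apply Theorem 6 of \cite{t} (as cited in Proposition \ref{secondasymp}) to the equation $\vec{Y}'=\tilde{\cB}\vec{Y}+O(\gs^{-1})\vec{Y}$ to produce an actual fundamental solution $Y$ of the transformed system satisfying $Y=(I+O(\gs^{-1}))\cY$. The hypotheses of that theorem are available here because the two $2\times 2$ diagonal blocks have spectrally separated leading symbols (the pair $\{-i\gs/c,i\gs/c\}$ is distinct from the polynomial pair $\{0,-2/r\}$, up to the usual shift encoded in $Q_1$), so after absorbing the diagonal part of the first correction $\cB_1$ into a factor $\exp(Q_1)$ the standard argument runs unchanged.

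Finally, setting $X_{LW}=\cD P(I+\gs^{-1}\cQ_1)(I+\gs^{-2}\cQ_2)Y$ I would expand the product, keeping only those terms that survive after the overall $(I+O(\gs^{-1}))$ prefactor. Acting with $\cD P$ on the $\cY$ block structure immediately produces columns~$3$ and~$4$: the factor $\cD$ scales the first row by $\gs^{-1}$, while $P$ maps the diagonal exponentials into the asserted trigonometric combinations with prefactors $r/\sqrt{c\rho}$ and $\sqrt{c}/(r\sqrt{\rho})$. For columns~$1$ and~$2$ the $\cD P$-action alone produces only the $(3,j),(4,j)$ rows $\Phi_{0,j},\Phi_{0,j}'$; the nonzero entries in rows~$1,2$ carrying the factor $\gl$ come exclusively from the off-diagonal coupling block of $\gs^{-1}\cQ_1$, contributing $\gl r^2\Phi_{0,j}'/\gs^2$ and $\gl g\Phi_{0,j}'/\gs^2$ (the latter after incorporating the $\gs^{-2}\cQ_2$ contribution whose $g/2$ entry combines with the $c/2$ entry of $\cQ_1$ through $\Phi_{0,j}''$ via the Euler equation). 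Symmetrically, the $\gm$-dependent entries in rows~$3,4$ of columns~$3,4$ come from $\gs^{-1}\cQ_1$ applied to the exponentials.

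The main obstacle I anticipate is bookkeeping of the asymptotic order column-by-column: because different columns of $\cX$ are scaled differently in $\gs$ (for instance the first row of column~$1$ is $O(\gs^{-2})$ while the first row of column~$3$ is $O(\gs^{-1})$), the uniform factor $(I+O(\gs^{-1}))$ in the statement really encodes per-column estimates, and one must check that no omitted higher-order term in $\cQ_1,\cQ_2$, or in $Y-\cY$, contaminates a column at an order competitive with its leading entry. Carrying out the multiplication in block form (so that contributions from $\cQ_1$ and $\cQ_2$ can be tracked on and off the block diagonal separately) and relying, where needed, on the identity $\Phi_{0,j}''=-(2/r)\Phi_{0,j}'+(\gL/r^2)\Phi_{0,j}$ to simplify second derivatives should resolve this without further difficulty.
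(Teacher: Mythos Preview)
Your proposal is correct and follows essentially the same approach as the paper: the paper's proof consists of the single sentence that standard arguments from \cite{t} reduce the claim to reading off the leading-order terms of the product $\cD P(I+\gs^{-1}\cQ_1+\gs^{-2}\cQ_2)\cY$, which is exactly the computation you outline in detail. One small correction to your bookkeeping: the $(2,j)$ entries $\gl g\Phi_{0,j}'/\gs^2$ come purely from the $\gs^{-2}\cQ_2$ term (the $\gs^{-1}\cQ_1$ contribution to row~$2$ cancels since $i\gl c/2 + (-i\gl c/2)=0$), so no use of the Euler identity for $\Phi_{0,j}''$ is needed there.
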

\begin{proof}
Standard arguments \cite{t} show that we only need leading-order terms of the formal solution as discussed above:
These are given by the product $\cD P(I+\gs^{-1}\cQ_1+\gs^{-2}\cQ_2)\cY.$
\end{proof}
We note that the leading-order estimates are independent of $N$ and hence apply for both the adiabatic and non-adiabatic cases.
Then, from the associated variable change in the case $\gm=\gl=0,$ we have 
\begin{cor}\label{LWccor}
The residual equation $\cA_{\gs}\vec{Y}=$ $\vec{Y}'$ has a fundamental solution satisfying $Y=$ $(I+O(\gs^{-1}))\cY_0$ for
constants $\vartheta_j$ where
\begin{equation}\label{lrgsigmacor}
\cY_0=
\begin{bmatrix}\frac{r\sin(\gs \Theta+\vartheta_1)}{\gs\sqrt{c\rho}}&\frac{-r\cos(\gs\Theta+\vartheta_2)}{\gs\sqrt{c\rho}}
\\
\frac{\sqrt{c}\cos(\gs\Theta+\vartheta_1)}{ r\sqrt{\rho}}&\frac{\sqrt{c} \sin(\gs\Theta+\vartheta_2)}{r\sqrt{\rho} }
\end{bmatrix} \,\, \text {and}\,\, \cos(\vartheta_1-\vartheta_2)\neq 0
\end{equation}
\end{cor}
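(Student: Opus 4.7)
The plan is to derive this directly from Proposition \ref{LWest} by specializing to the decoupled case $\gm=\gl=0$ and translating the LW variables back to the original ES ones. Under $\Phi\equiv 0$, the LW variable satisfies $y_{_{LW}}=\gs^{2}\eta$, so any fundamental pair for the $(u,y_{_{LW}})$ upper block of the LW system gives a fundamental pair for $\vec{Y}'=\cA_{\gs}\vec{Y}$ after an appropriate rescaling of the $\eta$-component.

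First, I would set $\gm=\gl=0$ in the matrices $P,\cQ_{1},\cQ_{2}$ from the proof of Proposition \ref{LWest}. Under this specialization both $\cQ_{1}$ and $\cQ_{2}$ vanish, and the block structure of $P$ shows that columns three and four of the asymptotic fundamental matrix $\cX$ of Proposition \ref{LWest} have identically zero entries in the $(\Phi,\Phi')$ rows. Their top two rows therefore provide a complex-valued fundamental pair for the residual $(u,y_{_{LW}})$ system, with the uniform error $I+O(\gs^{-1})$ carried over from the proposition.

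Next I would pass to the $(u,\eta)$ variables via $\eta=y_{_{LW}}/\gs^{2}$ and form real linear combinations $A_{j}e^{-i\gs\Theta}+B_{j}e^{i\gs\Theta}$ of the two complex solutions, absorbing the arising scalar factors into a renormalization of each basis vector. Writing the pairs $(A_{j},B_{j})$ in amplitude--phase form with phases $\vartheta_{1},\vartheta_{2}$ produces the sinusoidal entries displayed in $\cY_{0}$, with the prefactors $r/(\gs\sqrt{c\rho})$ and $\sqrt{c}/(r\sqrt{\rho})$ inherited directly from Proposition \ref{LWest}; since the leading-order estimates there are independent of $N$, the same expression is valid in both the adiabatic and non-adiabatic settings.

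The final check is linear independence. The Wronskian of the complex pair from Proposition \ref{LWest} is a nonzero constant, and after taking real combinations the leading-order determinant of $\cY_{0}$ factors as a nonvanishing $r$-dependent amplitude times $\cos(\vartheta_{1}-\vartheta_{2})$, which yields the stated condition. The main obstacle is essentially bookkeeping: one must carefully track the $\gs$-powers when moving between $y_{_{LW}}$ and $\eta$ and when renormalizing the real basis so that the displayed $\cY_{0}$ falls out exactly, but no new asymptotic work beyond Proposition \ref{LWest} is needed.
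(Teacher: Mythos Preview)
Your proposal is correct and follows exactly the route the paper indicates: the paper merely states that the corollary follows ``from the associated variable change in the case $\gm=\gl=0$,'' and you have spelled out precisely this---specializing Proposition~\ref{LWest}, using $\eta=\gs^{-2}(y_{_{LW}}+\Phi)$ with the $\Phi$-components vanishing, taking real combinations with phases $\vartheta_j$, and reading off the $\cos(\vartheta_1-\vartheta_2)\neq 0$ condition from the determinant of $\cY_0$. No additional ideas are needed.
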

\begin{cor}\label{altfundcor}
The system (\ref{otherfirst})-(\ref{otherthird}) has 
fundamental solutions satisfying (the block form)
$$X=\begin{bmatrix}O(\gs^{-2})&(I+O(\gs^{-1}))\cY_0\\(I+O(\gs^{-1}))\fO_0&O(\gs^{-2})\end{bmatrix}$$
for $\fO_0$ as in (\ref{fO0}) and $\cY_0$ as in (\ref{lrgsigmacor}).
\end{cor}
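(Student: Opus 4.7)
The plan is to extract a fundamental solution for the ES system directly from the Ledoux--Walraven fundamental matrix $X_{LW}$ of Proposition \ref{LWest} by inverting the LW change of variables. The observation that makes this work is that $\cX$ in Proposition \ref{LWest} already exhibits a natural $2\times 2$ block partition: columns $1,2$ are driven by the pair $\Phi_{0,1},\Phi_{0,2}$ in rows $3,4$ while carrying a prefactor $\gl/\gs^2$ in rows $1,2$, whereas columns $3,4$ are driven by the oscillatory factors $e^{\pm i\gs\Theta}$ in rows $1,2$ while carrying a prefactor $\gk\gm/\gs^2$ in rows $3,4$. With $\gl=\gm=1$, the four columns thus split into ``$\Phi$-dominated'' and ``$\vec{Y}$-dominated'' pairs, matching the block structure of the corollary.

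First I would identify columns $1,2$ of $\cX$ as giving the left half of $X$: the bottom block $(\Phi,\Phi')^{\top}$ is exactly the $\fO_0$ of (\ref{fO0}) to leading order, while the top block $(u,\eta)^{\top}$ carries the prefactor $\gs^{-2}$, both up to $(I+O(\gs^{-1}))$ corrections. For the right half I would take the two complex oscillatory columns of $\cX$ and form real-valued linear combinations, extracting sine and cosine components matched to the phases $\vartheta_1,\vartheta_2$ of Corollary \ref{LWccor}; the non-vanishing of $\cos(\vartheta_1-\vartheta_2)$ guarantees that the resulting pair is linearly independent, and the top block becomes $(I+O(\gs^{-1}))\cY_0$ by direct inspection of the entries in rows $1,2$ of $\cX$. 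Linear combinations preserve the $O(\gs^{-2})$ prefactor in rows $3,4$, so the bottom-right block retains that order.

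The last step is to verify that inverting the LW change of variables does not disturb this block structure. Since $\eta=\gs^{-2}(y_{_{LW}}+\Phi)$ while $u,\Phi,\Phi'$ are unaltered, the Jacobian is block-triangular with the only cross-coupling being a $\gs^{-2}$-scaled contribution from $\Phi$ to $\eta$. In the left columns this feeds an $O(1)$ quantity $\Phi_{0,j}$ through a factor $\gs^{-2}$ and preserves the $O(\gs^{-2})$ order of the top block, while in the right columns it feeds an $O(\gs^{-2})$ quantity through $\gs^{-2}$ and contributes at $O(\gs^{-4})$, safely absorbed into the $(I+O(\gs^{-1}))$ correction of $\cY_0$. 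The main obstacle is the bookkeeping here: one must track the $\gs^{-1}$ orders in each block of each column separately, verify that forming real linear combinations in the right columns does not enlarge the $O(\gs^{-2})$ estimate in the bottom-right block, and confirm that no hidden cancellation spoils linear independence of the four columns after the variable change.
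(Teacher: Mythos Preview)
Your approach is precisely the one the paper intends: the corollary is stated without proof as an immediate consequence of Proposition \ref{LWest}, and inverting the LW change of variables $\eta=\gs^{-2}(y_{_{LW}}+\Phi)$ on the fundamental matrix $X_{LW}$, then taking real combinations of the oscillatory columns, is the natural reading.

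One correction to your order-tracking, however. You assert that columns $3,4$ of $\cX$ carry a prefactor $\gk\gm/\gs^{2}$ in rows $3,4$, but in fact the $\Phi'$ entry (row $4$) of those columns is $\pm i\gk\gm\sqrt{\rho}\,e^{\mp i\gs\Theta}/(\gs r\sqrt{c})$, which is $O(\gs^{-1})$, not $O(\gs^{-2})$. So your claim that ``linear combinations preserve the $O(\gs^{-2})$ prefactor in rows $3,4$'' fails for row $4$: the bottom-right block comes out with an $O(\gs^{-1})$ entry in its second row. This does not disturb the identification of the dominant off-diagonal blocks $\cY_0$ and $\fO_0$, nor the application in Proposition \ref{lastprop} (where only the leading-order determinant matters), but you should either reconcile it with the stated $O(\gs^{-2})$ by an explicit column normalization consistent with the $\cY_0$ of (\ref{lrgsigmacor}), or note that the block estimate as written requires this caveat. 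The obstacle you flagged---careful bookkeeping of the $\gs$-orders block by block---is real, and this is exactly where it bites.
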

Finally, we end the section with another immediate result in
\begin{cor}\label{CAcor}
No non-zero solution is $\cC$-approximable over a fixed interval $\cI$ as $\gs\rightarrow \infty.$
\end{cor}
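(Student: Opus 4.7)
The plan is to argue by contradiction using the fundamental-solution decomposition of Corollary \ref{altfundcor}. Writing any putative $\cC$-approximable solution as $\vec{X} = X\vec{c}$ with $\vec{c} = (\vec{c}_L,\vec{c}_U)^\top$ split according to the two block columns of $X$, the block structure gives $\vec{Y} = O(\gs^{-2})\vec{c}_L + (I+O(\gs^{-1}))\cY_0\vec{c}_U$ while $\vec{\Phi} = (I+O(\gs^{-1}))\fO_0\vec{c}_L + O(\gs^{-2})\vec{c}_U$. I would then exhaust the two mutually exclusive non-trivial cases $\vec{c}_L \neq 0$ and $\vec{c}_L = 0,\ \vec{c}_U \neq 0$, and in each derive a failure of either weak-CA or strong-CA.

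In the case $\vec{c}_L \neq 0$, the $\Phi$-component is bounded below by a constant multiple of $\|\vec{c}_L\|$ up to an $O(\gs^{-2})$ correction. A direct computation of $\vec{\cC}_\gs \cdot \vec{Y}_0 = \frac{\gk N^2 \rho}{g} u_0 + \gs^2 \gk h \eta_0$ using the oscillatory residual basis from Corollary \ref{LWccor} (after accounting for the LW change-of-variable factor $\eta = (y_{LW}+\Phi)/\gs^2$) shows that this quantity is uniformly of bounded order in $\gs$, regardless of the linear combination used to produce $\vec{Y}_0$. Consequently $\fz = \|\Phi\|/\|\vec{\cC}_\gs\cdot\vec{Y}_0\|$ is bounded below away from zero along every subsequence $\gs_n \to \infty$, violating weak-CA.

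In the remaining case, weak-CA is immediate since $\|\Phi\| = O(\gs^{-2})$, so the task is to rule out strong-CA for every admissible boundary agreement $\Pi_{j,k}$. For each such type, Proposition \ref{cW} determines $\vec{Y}_0 = c_1'\vec{Y}_{0,1} + c_2'\vec{Y}_{0,2}$ by inverting a $2\times 2$ matrix whose determinant $\cW_{j,k}(\gs)$, computed from the oscillatory basis of Corollary \ref{LWccor}, factors asymptotically as a trigonometric expression in $\gs L$ with $L = \int_a^b dt/c$ times a negative power of $\gs$. The key observation is that every such $\cW_{j,k}(\gs)$ vanishes on a sequence $\gs_n \to \infty$, so the matrix inverse carries a factor $1/\cos(\gs L + \phi_{j,k})$ (or analogous $\sin$), and hence $\vec{c}' - \vec{c}_U$ is unbounded along $\gs_n$. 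On this resonant sequence $\|\cC_\gs(\vec{Y}-\vec{Y}_0)\|_\infty$ is of the same order as $\|\cC_\gs\vec{Y}_0\|_\infty$, so $\cZ$ cannot tend to zero and strong-CA fails.

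The principal obstacle I anticipate is establishing the resonance failure uniformly across all admissible two-point boundary agreements, including possibly $\gs$-dependent choices: one must verify that every determinant built from endpoint evaluations of the residual oscillatory basis reduces to a trigonometric polynomial in $\gs L$ whose zeros are dense at infinity, so that no choice of BC can escape the resonance mechanism driving the unbounded correction to the residual coefficient vector.
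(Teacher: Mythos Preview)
Your approach differs substantially from the paper's, and the route you propose has a genuine gap in Case~2.

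The paper does not split into cases on the coefficient vector. Instead it exploits the Ledoux--Walraven change of variable directly: since $y_{_{LW}}=\sigma^2\eta-\Phi$, the explicit asymptotics of Proposition~\ref{LWest} show that $\sigma^2\eta$ and $\Phi$ share the \emph{same} power-law part $A_1\Phi_{0,1}+A_2\Phi_{0,2}$ to leading order, their difference being purely oscillatory. From this identity the paper argues that if $\|\eta-\eta_0\|_\infty=o(1)\|\eta\|_\infty$ (a consequence of strong-CA), then $\|\sigma^2\eta_0-\Phi\|=o(\sigma^2\|\eta_0\|)$, forcing $\|\Phi\|/\|\eta_0\|\to\infty$ and hence the failure of weak-CA. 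The mechanism is thus an incompatibility between the two CA conditions derived from a single algebraic relation, not a case-by-case obstruction.

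Your Case~2 argument rests on the claim that for every admissible boundary agreement the determinant $\cW_{j,k}(\sigma)$ has zeros accumulating at infinity, so that ``no choice of BC can escape the resonance mechanism.'' But the paper itself refutes this: the construction preceding Proposition~\ref{prop2} (see equations~(\ref{os})--(\ref{dets})) produces a $\sigma$-dependent choice $\Pi(\sigma)$, assembled by switching among the $\fM_{j,k}$ on overlapping $\sigma$-intervals, for which $|\Delta(\sigma)|\ge\tfrac12$ uniformly. With that choice the correction $\vec{c}\,'-\vec{c}_U$ stays $O(\sigma^{-1})$ and your resonance blow-up never occurs; strong-CA then holds along the whole sequence, and since $\|\Phi\|=O(\sigma^{-2})$ in your Case~2, weak-CA appears to hold as well. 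So the obstacle you flag at the end is not a technicality but a real failure of the strategy: the resonance mechanism is avoidable on a fixed interval, and something else---namely the LW identity linking $\Phi$ to $\sigma^2\eta$---must supply the contradiction.

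A secondary issue: the dichotomy $\vec{c}_L\neq0$ versus $\vec{c}_L=0$ is not well posed for a $\sigma$-family $\vec{X}(\cdot,\sigma)$, since $\vec{c}(\sigma)$ is expressed in a $\sigma$-dependent basis and may drift between the two regimes. And in Case~1 the assertion that $\|\vec{\cC}_\sigma\cdot\vec{Y}_0\|$ is ``uniformly of bounded order'' needs justification: $\vec{Y}_0$ is fixed by boundary agreement with $\vec{Y}$, which in turn depends on $\vec{c}_U$, so without a normalisation you cannot bound the denominator independently of the numerator.
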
 
\begin{proof}
After a change of variables, it is clear that, modulo factors of $o(1)$ (as $\gs\rightarrow\infty$), that components of solutions 
$\vec{X}$ satisfy
\begin{align}\notag
\gs^2\eta
&\equiv A_1\Phi_{0,1}+A_2\Phi_{0,2}+A_3\frac{\sqrt{c}}{r\sqrt{\rho}}\cos(\gs\Theta)-A_4\frac{\sqrt{c}}{r\sqrt{\rho}}\sin(\gs\Theta)\\\notag
\Phi
&\equiv A_1\Phi_{0,1}+A_2\Phi_{0,2}-A_3\frac{\kappa\sqrt{c\rho}}{\gs^2r}\cos(\gs\Theta)+A_4\frac{\kappa\sqrt{c\rho}}{\gs^2r}\sin(\gs\Theta)
\end{align}
with coefficients $A_i$ depending only on $\gs.$ Comparing to the basis arising from the $\gl=0$ case, it is not difficult to show that
if $\|\eta-\eta_0\|_{\infty}$ $=o(1)\|\eta\|_{\infty}$ on $[a,b];$ and,
it follows that
$\|\gs^2\eta_0-\Phi\|=\gs^2\|\eta_0\|o(1),$
whereby $\|\Phi\|/\|\eta_0\|$ is unbounded as $\gs\rightarrow\infty.$
\end{proof}
\section{Appendix B}\label{s6}
\subsection{Admissible Multi-Point Boundary Conditions}
Given a fundamental solution $X(r,\zeta)$ to the system (\ref{otherfirst})-(\ref{otherthird}), depending on some large (general) parameter $\gz,$ we will denote by 
 $\vec{\fR}_{3}$ $=(\Phi_1,\hdots,\Phi_4 )$ and $\vec{\fR}_{4 }$  $=(\Phi'_1,\hdots,\Phi_4')$ the corresponding row vectors and let
\begin{equation}\notag
\fW_{a,b}(\gz)\df\left(\vec{\fR}_3^{\top}(a,\gz),\vec{\fR}_3^{\top}(b,\gz),\vec{\fR}_4^{\top}(a,\gz),\vec{\fR}_4^{\top}(b,\gz)\right)
\end{equation}
Then, elementary arguments confirm that if  $\det\fW_{a,b}(\gz)\neq 0,$ then given any $A_{j,k},$ a solution $\vec{X}$ exists such that the components $\vec{\Phi}$ satisfy
 \begin{equation}\notag\begin{bmatrix}\Phi(a,\gz)&\Phi(b,\gz)\\\Phi'(a,\gz)&\Phi'(b,\gz)
\end{bmatrix}
=
\begin{bmatrix}A_{1,1}&A_{1,2}\\A_{2,1}&A_{2,2}
\end{bmatrix}
\end{equation}
It then follows that specific boundary conditions on solution components $\vec{\Phi}$ more specific than those of 
$\fS_{\cL}(\vec{\theta})$ are admissible.
We will also consider multi-point boundary conditions (in the high-frequency case) via the following:
Denoting $\vec{r}= (r_1,r_2,r_3,r_4)$ for  $r_i\in [a,b]$ we set
\begin{equation}\label{vs}
\fV_{\vec{r}}(\gs)\df\left(\vec{\fR}_1^{\top}(r_1,\gs),\vec{\fR}_2^{\top}(r_2,\gs),\vec{\fR}_3^{\top}(r_3,\gs),\vec{\fR}_4^{\top}(r_4,\gs)\right).
\end{equation}
We thereby verify admissibility of boundary conditions of the form $\Pi(\gs)$ imposed separately on components $\vec{Y}$ and $\vec{\Phi}.$
Finally, for ease of exposition, we will refer to the following which can be verified by elementary means:
\begin{lem}\label{proddiff} Let $r>0$ and $\ga_1,\ga_2$ be fixed real numbers with variable $\epsilon>0.$ Then,
$$q(\epsilon)\df r^{\ga_1}(r+\epsilon)^{\ga_2}-(r+\epsilon)^{\ga_1}r^{\ga_2} =(\ga_2-\ga_1)r^{\ga_1+\ga_2-1}\epsilon+O(\epsilon^2)
$$
(as $\epsilon\rightarrow 0^+$). In particular, if $\ga_1\neq\ga_2,$ $|q(\epsilon)| >0$ for all $\epsilon$ sufficiently small. 
\end{lem}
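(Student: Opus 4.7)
The plan is to prove Lemma \ref{proddiff} by a direct Taylor expansion in $\epsilon$ about $\epsilon=0$, since both factors $(r+\epsilon)^{\ga_1}$ and $(r+\epsilon)^{\ga_2}$ are smooth in $\epsilon$ on a neighborhood of $0$ (using $r>0$).

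First I would expand each power individually. Writing $(r+\epsilon)^{\ga_i} = r^{\ga_i}(1+\epsilon/r)^{\ga_i}$ and applying the binomial/Taylor series yields
\begin{equation}\notag
(r+\epsilon)^{\ga_i} = r^{\ga_i} + \ga_i r^{\ga_i-1}\epsilon + O(\epsilon^2),
\end{equation}
uniformly on compact $\epsilon$-intervals around $0$ (the implied constant depending on $r$ and $\ga_i$). Multiplying the expansion for $i=2$ by $r^{\ga_1}$ and the expansion for $i=1$ by $r^{\ga_2}$, I obtain
\begin{equation}\notag
r^{\ga_1}(r+\epsilon)^{\ga_2} = r^{\ga_1+\ga_2}+\ga_2 r^{\ga_1+\ga_2-1}\epsilon + O(\epsilon^2),
\end{equation}
\begin{equation}\notag
(r+\epsilon)^{\ga_1}r^{\ga_2} = r^{\ga_1+\ga_2}+\ga_1 r^{\ga_1+\ga_2-1}\epsilon + O(\epsilon^2).
\end{equation}
Subtracting eliminates the common zeroth-order term $r^{\ga_1+\ga_2}$ and yields precisely the claimed expansion $q(\epsilon) = (\ga_2-\ga_1)r^{\ga_1+\ga_2-1}\epsilon + O(\epsilon^2)$.

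For the final assertion, when $\ga_1\neq \ga_2$ the leading coefficient $(\ga_2-\ga_1)r^{\ga_1+\ga_2-1}$ is nonzero (using $r>0$ to guarantee the power of $r$ makes sense and is positive). Hence dividing by $\epsilon$ gives $q(\epsilon)/\epsilon \to (\ga_2-\ga_1)r^{\ga_1+\ga_2-1}\neq 0$ as $\epsilon\to 0^+$, which forces $|q(\epsilon)|>0$ for all sufficiently small $\epsilon>0$.

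There is no genuine obstacle here; the only minor point to be attentive to is ensuring the remainder bound $O(\epsilon^2)$ is indeed uniform for small $\epsilon$ (say $|\epsilon|\leq r/2$), which follows at once from Taylor's theorem applied to the smooth function $\epsilon \mapsto (r+\epsilon)^{\ga_i}$ on this interval. Thus the proof reduces to two lines of binomial expansion followed by subtraction.
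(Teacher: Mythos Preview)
Your proof is correct and is exactly the kind of elementary verification the paper has in mind; the paper does not actually supply a proof of this lemma but simply introduces it as a result ``which can be verified by elementary means,'' and your two-line Taylor expansion is precisely such a verification.
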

\subsection{Non--Adiabatic Equilibrium, High Degree}
We begin with the exponential case and proceed to examine fundamental solutions from Appendix A on intervals $[1,R]$ for arbitrary (finite) $R>1.$ 
Such cases do not result in loss of generality, considering scaling arguments (or certain homologous transforms. cf. \cite{acdk}): 
\begin{prop}\label{cWprop}  Fundamental solutions as in Proposition \ref{firstasymp}
satisfy\begin{equation}\label{detfW}
 \det{\fW_{1,R}(\gz)}\neq 0 \,\,\text{\rm for all sufficiently large}\, \gz.
\end{equation}  
\end{prop}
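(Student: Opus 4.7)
The plan is to extract the leading asymptotic term of $\det\fW_{1,R}(\gz)$ as $\gz \to \infty$ from the explicit formula in Proposition \ref{firstasymp} and to check that its coefficient does not vanish under the hypothesis $0 < N^2(r)/\gs^2 < 1$ on $[1,R]$.

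First, I would use $X = (I + O(\gz^{-1}))\cX$ to replace $\fW_{1,R}$ by the leading-order matrix $\fW_{1,R}^{(0)}$ obtained by evaluating rows $3$ and $4$ of $\cX$ in (\ref{firstmtx}) at $r = 1$ and $r = R$. Because $\Theta(1) = 0$, the $r = 1$ columns are purely algebraic in $\gz$, while the $r = R$ columns carry the four exponential factors $R^{\pm\gz}$ and $e^{\pm\gz\Theta(R)}$. A row-rescaling by $\gz^{-1}$ in rows $1, 2$ and $\gz^{-2}$ in rows $3, 4$ then confines the exponentially large entries to the four positions $(2,2), (2,4), (4,2), (4,4)$.

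Next, I would Leibniz-expand this rescaled determinant and retain only those permutations $\sigma$ for which $\{\sigma(2), \sigma(4)\} = \{2, 4\}$, since these alone capture both exponentials. Exactly four such $\sigma$ contribute, and I expect their signed sum to telescope into a factored expression of schematic form
$$
\det\fW_{1,R}(\gz) \sim K\, \gz^{-4}\, (C_1 - D_1)(D_2 R - C_2)\, R^{\gz - 3/2}\, e^{\gz \Theta(R)},
$$
with $K$ a nonzero numerical constant and $C_j, D_j$ the values of the coefficients $\gm\gk\gs^2\sqrt{\rho}/(g\sqrt{\cH})$ and $\gm\gk\gs^2\sqrt{\rho\cH}/g$ at $r_1 = 1$ and $r_2 = R$, read off from row $3$ of (\ref{firstmtx}).

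Finally, I would verify that $(C_1 - D_1)(D_2 R - C_2) \neq 0$. Elementary simplification reduces the two factors to positive multiples of $(1 - \cH(1))/\sqrt{\cH(1)}$ and $(R\cH(R) - 1)/\sqrt{\cH(R)}$, respectively. Under the stated hypothesis one has $r\cH(r) = \sqrt{1 - N^2(r)/\gs^2} \in (0, 1)$ for every $r \in [1, R]$, so the first factor is strictly positive and the second strictly negative; hence their product is nonzero, and (\ref{detfW}) follows once the relative $O(\gz^{-1})$ correction in $X$ versus $\cX$ is absorbed. The main obstacle I anticipate is the Leibniz bookkeeping: several permutations contribute at the same exponential order and their signed contributions partially cancel, so signs and combinatorial factors must be tracked carefully before the clean factorization emerges. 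Once that identity is in hand, the final sign check on the explicit algebraic factors is routine.
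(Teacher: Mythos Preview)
Your approach is correct and arrives at the same conclusion as the paper, but by a different computational route. The paper reorders the rows of $\fW_{1,R}$ so that the two $r=1$ rows and the two $r=R$ rows are grouped, writes the result as a $2\times 2$ block matrix $(M_{jk})$, and uses the Schur-complement identity $\det(\cM)=\det(M_{11})\det(M_{21})\det(M_{21}^{-1}M_{22}-M_{11}^{-1}M_{12})$ to reduce to a $2\times 2$ determinant whose dominant entry carries the factor $R^{\gz}e^{\gz\Theta(R)}$; the leading coefficient then comes out as a nonzero multiple of $\fa(1)\fa(R)(\cH(1)-1)(\cH(R)-1)$. Your direct Leibniz expansion is more elementary and transparent: the four permutations with $\{\sigma(2),\sigma(4)\}=\{2,4\}$ really do collapse to $(C_1-D_1)(D_2R-C_2)R^{\gz-3/2}e^{\gz\Theta(R)}/\gz^4$, whose dominant piece $(C_1-D_1)D_2\,R^{\gz-1/2}e^{\gz\Theta(R)}/\gz^4$ is already visibly nonzero since $D_2>0$ and $1-\cH(1)>0$. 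The block-matrix approach scales better if one later wants higher-order corrections or the analogous oscillatory case, while your Leibniz route makes the origin of each factor explicit and avoids inverting $M_{11},M_{21}$. One small point worth making precise in your write-up: the passage from $X$ to $\cX$ is safe entrywise because the asymptotic expansion in Proposition~\ref{firstasymp} is of the form $X=\cD P(I+\gz^{-1}P_1+O(\gz^{-2}))\exp(\gz Q_0+Q_1)$, so each displayed entry of $\cX$ really does carry a relative $O(\gz^{-1})$ error; a naive reading of $X=(I+O(\gz^{-1}))\cX$ as a left multiplication would not immediately give this.
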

\begin{proof}
After row interchange, we find that $|\fW(\gz)|=$ $\det(\cM)(1+o(1))$ where  in block form $\cM\df$ $(M_{jk})$
with invertible blocks $M_{jk}$ given by
$$
M_{11}=\begin{bmatrix}\frac{-1}{\gz}&\frac{1}{\gz}\\1&1\end{bmatrix}, 
M_{22}=\gs^2\gk\begin{bmatrix}\frac{e^{-\gz\Theta(R)}\sqrt{\rho(R)}}{\gz^2g(R)\sqrt{\cH(R)}}&\frac{-e^{\gz\Theta[R]}\sqrt{\rho(R)}}{\gz^2g(R)\sqrt{\cH(R)}}\\
\frac{-e^{-\gz\Theta(R)}\sqrt{\rho(R)\cH(R)}}{\gz g(R)}&\frac{-e^{\gz\Theta(R)}\sqrt{\rho(R)\cH(R)}}{\gz g(R)}
\end{bmatrix},
$$
$$M_{12}=\gs^2\gk\begin{bmatrix}\frac{\sqrt{\rho(1)}}{\gz^2g(1)\sqrt{\cH(1)}}&\frac{-\sqrt{\rho(1)}}{\gz^2g(1)\sqrt{\cH(1)}}\\\frac{-\sqrt{\rho(1)\cH(1)}}{\gz g(1)}&\frac{-\sqrt{\rho(1)\cH(1)}}{\gz g(1)}\end{bmatrix},
M_{21}=\begin{bmatrix}\frac{-R^{-\gz-1/2}}{\gz}&\frac{R^{\gz-1/2}}{\gz}\\R^{-\gz-3/2}&R^{\gz-3/2}
\end{bmatrix}
$$
Here
$\det(\cM)=\det( M_{11} M_{21})\det\left(M_{21}^{-1}M_{22}-M_{11}^{-1}M_{12}\right)
$
where we find that $M_{21}^{-1}M_{22}-M_{11}^{-1}M_{12}=$ $M(I+o(1))$ with
$$\cM=\frac{-k\gs^2}{2\gz}\begin{bmatrix}R^{\gz}e^{-\gz\Theta(R)}a_{11}&R^{\gz}e^{\gz\Theta(R)}a_{12}\\a_{21}&a_{22}
\end{bmatrix}
\text{for}
$$
\begin{align}
a_{11}=\fa(R)(\cH(R)+1),\,& a_{1,2}=\fa(R)(\cH(R)-1),\notag\\a_{21}=\fa(1)(\cH(1)-1),\,&a_{22}=\fa(1)(\cH(1)+1)\notag
\end{align}
and $\fa\df\frac{\sqrt{\rho}}{ g\sqrt{\cH}}.$
It is then clear that $\det \cM$ $\neq 0$ for all sufficiently large $\gz,$ and the conclusion follows.
\end{proof}

We now address the oscillatory case in
\begin{prop} 
Fundamental solutions as in Proposition \ref{loosccase} satisfy (\ref{detfW}).
\end{prop}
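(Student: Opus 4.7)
The plan is to follow the proof of Proposition \ref{cWprop} \emph{mutatis mutandis}, with the oscillatory leading-order fundamental solution from Proposition \ref{loosccase} in place of the exponential one. First, form $\fW_{1,R}(\gz)$ from the rows $\vec{\fR}_3^{\top}$, $\vec{\fR}_4^{\top}$ evaluated at $r=1,R$, and carry out the same row interchange used in the proof of Proposition \ref{cWprop} to present $\fW$ in $2\times 2$ block form $\cM=(M_{j,k})_{j,k=1,2}$. Since the first two columns of $\cX$ are unchanged between Propositions \ref{firstasymp} and \ref{loosccase}, the blocks $M_{11}$ and $M_{21}$ are identical to those in Proposition \ref{cWprop}; the blocks $M_{12}^{osc}$ and $M_{22}^{osc}$ now carry $e^{\mp i\gz\Theta(R)}$ in place of $e^{\mp\gz\Theta(R)}$, together with the imaginary factors $\pm i/\cH$ inherited from the oscillatory $P$-matrix produced in the proof of Proposition \ref{loosccase}.

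Next, apply the Schur-type determinant identity $\det\cM=\det(M_{11}M_{21})\det(M_{21}^{-1}M_{22}^{osc}-M_{11}^{-1}M_{12}^{osc})$ and evaluate the inner $2\times 2$ matrix, retaining only the dominant $R^{\gz}$ terms exactly as in Proposition \ref{cWprop}. This reduces to a leading-order matrix of the form
\[
\frac{-\gk\gs^2}{2\gz}\begin{bmatrix} R^{\gz} e^{-i\gz\Theta(R)}\tilde{a}_{11} & R^{\gz} e^{i\gz\Theta(R)}\tilde{a}_{12} \\ \tilde{a}_{21} & \tilde{a}_{22}\end{bmatrix},
\]
whose constants $\tilde{a}_{j,k}$ are explicitly computable from $\rho,g,\cH$ evaluated at $r=1,R$. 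After factoring out common real prefactors, the non-vanishing of $\det\fW_{1,R}(\gz)$ for all large $\gz$ reduces to the non-vanishing of
$\tilde{a}_{11}\tilde{a}_{22}\,e^{-i\gz\Theta(R)}-\tilde{a}_{12}\tilde{a}_{21}\,e^{i\gz\Theta(R)}$.

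The main obstacle is this final step. In the exponential case of Proposition \ref{cWprop}, $e^{\gz\Theta(R)}$ grew without bound and one term dominated by a triangle inequality; here, both complex exponentials have unit modulus. A direct computation from the oscillatory $P$-matrix shows that the products $\tilde{a}_{11}\tilde{a}_{22}$ and $\tilde{a}_{12}\tilde{a}_{21}$ turn out to be complex conjugates, so the leading-order expression collapses to $2i\,\mathrm{Im}(z\,e^{-i\gz\Theta(R)})$ with $z=\tilde{a}_{11}\tilde{a}_{22}$ --- genuinely oscillating through zero at a discrete sequence of $\gz$'s. Closing the argument therefore requires using the $O(\gz^{-1})$ correction of Proposition \ref{loosccase} \emph{additively} rather than multiplicatively: in each narrow $\gz$-window where the leading-order imaginary part is small, one must show that the next-order contribution is of strictly different asymptotic order in $\gz$, so that the full determinant cannot vanish. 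Equivalently, one may replace $\cX$ with a $\gz$-dependent linear combination of its columns --- the freedom underlying the $\Pi(\gz)$ boundary conditions used elsewhere in the paper --- in order to break the conjugation symmetry before the determinant is taken. Either route will require explicit higher-order asymptotic information about the fundamental solution, beyond the leading-order form supplied by Proposition \ref{loosccase}, and this is the point at which the present proof plan departs substantively from that of Proposition \ref{cWprop}.
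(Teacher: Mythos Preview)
Your setup and your diagnosis of the obstruction are both correct and match the paper's: after the same block reduction as in Proposition~\ref{cWprop}, the leading-order determinant is a nonzero multiple of a single real sinusoid in $\gz$ and therefore vanishes along an arithmetic progression of $\gz$'s. Where you diverge from the paper is in your last paragraph: you assert that closing the argument ``will require explicit higher-order asymptotic information \ldots beyond the leading-order form supplied by Proposition~\ref{loosccase}.'' The paper does \emph{not} go to higher order.

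The paper works in real form from the start (columns $\cos(\gz\Theta),\sin(\gz\Theta)$ rather than $e^{\mp i\gz\Theta}$), leaves the antiderivative $\Theta$ indefinite, and writes the leading order of $\det(M_{11}^{-1})\det((M_{jk}))$ explicitly as
\[
\frac{\gs^4\gk^2 R^{\gz-1/2}}{2\gz^3}\Big[(A(R)B(1)+A(1)B(R))\cos\big(\gz(\Theta(R)-\Theta(1))\big)+(B(1)B(R)-A(R)A(1))\sin\big(\gz(\Theta(R)-\Theta(1))\big)\Big],
\]
with $A=\sqrt{\rho}/(rg\sqrt{\cH})$ and $B=\sqrt{\rho\cH}/g$ strictly positive. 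The key device is then a symmetry observation: under the interchange $r_1\leftrightarrow r_2$ the cosine coefficient $A(R)B(1)+A(1)B(R)$ is symmetric and the sine coefficient is symmetric, while the sine itself flips sign, so the original and the interchanged leading-order expressions cannot vanish simultaneously (the cosine coefficient being strictly positive). The paper then invokes a ``judicious choice of $\Theta$ (via arbitrary constant) depending on $\gz$'' --- i.e.\ exactly your option~(b), a $\gz$-dependent column operation on the oscillatory pair --- to select, for each large $\gz$, a presentation whose leading order is the nonvanishing one. No next-order expansion is used.

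So your option~(b) is indeed the paper's route; what you are missing is the concrete mechanism that makes it work at leading order --- the endpoint-interchange symmetry combined with the strict positivity of the cosine coefficient --- and hence your conclusion that higher-order terms are unavoidable is too pessimistic. Your option~(a) (additive use of the $O(\gz^{-1})$ correction) is not needed and is not what the paper does.
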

\begin{proof}
For $M_{11}$ and $M_{21}$ as in Proposition \ref{cWprop} 
and for $r_1=1,r_2=R$
$$
M_{j2}=\gs^2\gk\begin{bmatrix}\frac{A(r_j)\cos(\gz\Theta(r_j))}{\gz^2}&\frac{-A(r_j)\sin(\gz\Theta(r_j))}{\gz^2}\\
\frac{-B(r_j)\sin(\gz\Theta(r_j))}{\gz}&\frac{-B(r_j)\cos(\gz\Theta(r_j))}{\gz}\end{bmatrix}
$$
for $A=\frac{\sqrt{\rho}}{rg}\frac{1}{\sqrt{\cH}}$ and  $B=\frac{\sqrt{\rho}}{g}\sqrt{\cH},$ leaving  $\Theta$ indefinite. 
We likewise estimate $\det(M_{11}^{-1})\det((M_{jk}))=$
 $$ \frac{\gs^4\gk^2R^{\gz-1/2}}{2\gz^3}\big[\left(A(R)B(1)+A(1)B(R)\right)\cos(\gz(\Theta(R)-\Theta(1)))$$
$$+\left(B(1)B(R)-A(R)A(1)\right)\sin(\gz(\Theta(R)-\Theta(1)))\big](1+o(1))
$$ 
By symmetry arguments, we find that if the given leading-order expression vanishes, it does not  vanish upon interchange $r_1\leftrightarrow r_2.$ 
Therefore, the conclusion follows by judicious choice of $\Theta$ (via arbitrary constant)
depending on $\gz,$ with the corresponding column operations being clear.  
\end{proof}
\subsection{Adiabatic Equilibrium, High Degree}
We return to the $N= 0$ case with 
\begin{prop}\label{lastprop1}
The fundamental solutions as in Proposition \ref{secondasymp} can be chosen to satisfy (\ref{detfW}). 
\end{prop}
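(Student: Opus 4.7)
The plan is to compute the leading-order asymptotics of $\det\fW_{1,R}(\gz)$ directly from the explicit fundamental matrix $\cX$ of (\ref{AdEqSol}), and then to exhibit the leading coefficient as a nonzero multiple of $(f(R)-f(1))^{2}/\gz^{4}$. Nonvanishing of this factor will then follow from strict monotonicity of $f$ on $[1,R]$.

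First I would tabulate the four rows of $\fW_{1,R}$, where the $j$-th row is $(\Phi_{j}(1),\Phi_{j}(R),\Phi'_{j}(1),\Phi'_{j}(R))$ read off from the third and fourth rows of $\cX$ with $\gl=\gm=1$. The key structural observation is that, in both the $\Phi$-row and the $\Phi'$-row of $\cX$, column $2$ equals column $1$ multiplied by the scalar $\gk\gs^{2}f(r)/\gz$, and column $4$ equals column $3$ multiplied by the same factor. Evaluated at the single point $r=1$ this coupling is exact, so the pairs of rows $R_{1},R_{2}$ and $R_{3},R_{4}$ of $\fW_{1,R}$ are exactly collinear in the first and third columns; the non-collinearity in columns $2,4$ is witnessed solely by $f(R)-f(1)$.

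Next I would apply the row operations
\[
R_{2}\longrightarrow R_{2}-\frac{\gk\gs^{2}f(1)}{\gz}R_{1},\qquad R_{4}\longrightarrow R_{4}-\frac{\gk\gs^{2}f(1)}{\gz}R_{3},
\]
which annihilate the entries of $R_{2},R_{4}$ in columns $1$ and $3$ at leading order, while producing in columns $2$ and $4$ entries that all carry a common factor $\gk\gs^{2}(f(R)-f(1))/\gz^{\ga}$ with $\ga\in\{1,2\}$. A Laplace expansion along the resulting two sparse rows factors $\det\fW_{1,R}(\gz)$ into a product of two $2\times 2$ determinants; straightforward arithmetic (in which the $R^{\pm\gz}$ factors cancel across the decaying/growing pairs) yields
\[
\det\fW_{1,R}(\gz) \;=\; \frac{4\gk^{2}\gs^{4}\bigl(f(R)-f(1)\bigr)^{2}}{R^{2}\gz^{4}}\,(1+o(1)).
\]
Since $f'(r)=\tfrac{1}{2}r\sqrt{\rho(r)}/c^{2}(r)>0$ on $[1,R]$, strict monotonicity gives $f(R)\neq f(1)$, so (\ref{detfW}) holds for all sufficiently large $\gz$.

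The step I expect to require the most care is verifying that the $O(\gz^{-1})$ corrections implicit in $X=(I+O(\gz^{-1}))\cX$ translate, after the row operations and the Laplace expansion, into a single clean $(1+o(1))$ factor on the $\gz^{-4}$ leading term rather than contaminating it at the same order. This should reduce to the observation that the subtracted multiples used in the row operations are of order $\gz^{-1}$ times the leading $\cX$-entries, so their uncertainty contributes only relative $O(\gz^{-1})$ corrections to each $2\times 2$ minor --- comfortably smaller than the $\gz^{-4}$ scale of the final answer.
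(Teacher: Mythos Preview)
Your proposal is correct and follows essentially the same route as the paper's proof. The only cosmetic difference is that the paper exploits the freedom in the antiderivative constant to normalize $f(1)=0$ up front (the ``can be chosen'' in the statement), which makes your row operations $R_{2}\to R_{2}-\tfrac{\gk\gs^{2}f(1)}{\gz}R_{1}$ and $R_{4}\to R_{4}-\tfrac{\gk\gs^{2}f(1)}{\gz}R_{3}$ trivial and leads directly to the factorization $\det\cM(1)\cdot\det(\gk\gm\gs^{2}\gz^{-1}\cM(R))$; your version keeps $f$ general and makes the nonvanishing of the leading coefficient explicit via monotonicity of $f$, which the paper leaves implicit in its ``constant multiple''.
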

\begin{proof}
After rescaling to the domain $r\in [1,R],$ we may set $F(1)=0$ so that after elementary row or column operations $\fW_{1,R}(\gz)$ is a constant multiple of 
$$\det(\cM(1))\det(\kappa\gm\gs^2\gz^{-1}\cM(R))(1+O(\gz^{-1}))$$ for 
$$\cM(r)\df\begin{bmatrix}r^{-\gz-1/2}\gz^{-1}&r^{\gz-1/2}\gz^{-1}\\
-r^{-\gz-3/2}&r^{\gz-3/2}
\end{bmatrix}
$$
whose determinant does not vanish on $[1,R].$
\end{proof}
\subsection{High-frequency, Fixed  Degree} We now consider the case of $\gs\rightarrow\infty$ with $\ell$ fixed, independent of 
$N\geq 0,$ for $r$ on intervals $\cI_{\gb}\df$ $[a,\gb]$ $\subset$ $[a,b]$ as in Section \ref{HFEg}.
\begin{prop}\label{lrgsig}
Fundamental solutions of Proposition \ref{LWest} 
satisfy $\det\fW_{a,\gb_{\gs}}(\gs)$ $\neq 0$
for appropriate choices of $\gb_{\gs}>a,$ depending on $\gs$ 
sufficiently large: Indeed, it suffices that $\gb_{\gs}=$ $a+z\gs^{-1}$ for sufficiently small $z>0.$
\end{prop}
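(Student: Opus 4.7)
The plan is to exploit the block structure of the fundamental solution furnished by Proposition \ref{LWest}: the four basis solutions split naturally into two \emph{polynomial} solutions, whose $\Phi$-components are $r^{-\ell-1}$ and $r^\ell$ independent of $\gs$, and two \emph{oscillatory} solutions, whose $\Phi$ and $\Phi'$ components are of orders $\gs^{-2}$ and $\gs^{-1}$ respectively, modulated by phases $e^{\pm i \gs \Theta(r)}$. Taking $\Theta(a)=0$, one has $\gs\Theta(\gb_{\gs}) = z/c(a) + O(z^2/\gs)$, so the phase increment between $a$ and $\gb_{\gs}$ is of order $z$. I would interchange columns $2$ and $3$ of $\fW_{a,\gb_{\gs}}(\gs)$ (flipping the sign of the determinant), then apply the column operations $\mathrm{Col}_3 \mapsto \mathrm{Col}_3 - \mathrm{Col}_1$, $\mathrm{Col}_4 \mapsto \mathrm{Col}_4 - \mathrm{Col}_2$ so that the right-hand block consists of endpoint differences, and the row operations $\mathrm{Row}_3 \mapsto \mathrm{Row}_3 + \mathrm{Row}_4$, $\mathrm{Row}_4 \mapsto \mathrm{Row}_3 - \mathrm{Row}_4$ (introducing a factor $-2$) on the oscillatory rows to diagonalize the oscillatory block at $r=a$. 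The result is a $2\times 2$ block matrix $\left(\begin{smallmatrix} A & C \\ B & D \end{smallmatrix}\right)$ with $A=P(a)$ the polynomial Wronskian, $B=\mathrm{diag}(2\phi(a),2\psi(a))$ in terms of oscillatory amplitudes $\phi(a)=O(\gs^{-2})$ and $\psi(a)=O(\gs^{-1})$, and $C,D$ small blocks of endpoint differences.

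Applying the Schur complement formula gives $\det \tilde\fW = \det P(a)\cdot \det(D-BA^{-1}C)$, where a direct calculation yields $\det P(a) = (2\ell+1)/a^2$, a nonzero $\gs$-independent constant. Since the polynomial solutions row-wise satisfy $\vec{\Phi}' = M_{\mathrm{osc}} \vec{\Phi}$ with $M_{\mathrm{osc}} = \left(\begin{smallmatrix} 0 & 1 \\ \gL/r^2 & -2/r \end{smallmatrix}\right)$, one obtains $A^{-1}C = (\gb_{\gs}-a) R + O((\gb_{\gs}-a)^2)$ with $R = M_{\mathrm{osc}}^{\top}(a)$. Expanding $D$ in the small parameters $\alpha := \gs\Theta(\gb_{\gs})$ and $\gb_{\gs}-a = z/\gs$, one finds that the naive $O(z/\gs^2)$ contribution to $(D - BA^{-1}C)_{21}$ cancels exactly: this cancellation is precisely the WKB relation $\psi(a) = -i\gs\phi(a)/c(a)$ combined with $\alpha = (\gb_{\gs}-a)\gs/c(a) + O((\gb_{\gs}-a)^2 \gs)$.

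The main obstacle is carrying the expansion far enough to recover the first non-vanishing determinant after this cancellation. A careful order-of-magnitude bookkeeping reveals that the four entries of $D-BA^{-1}C$ have respective leading orders $(-\phi(a)\alpha^2,\, -2i\psi(a)\alpha,\, i\phi(a)\alpha^3/3,\, -\psi(a)\alpha^2)$, where the third arises only after the cancellation noted above; a direct multiplication then gives
$$\det(D-BA^{-1}C) = \tfrac{1}{3}\phi(a)\psi(a)\alpha^4\bigl(1 + O(\gs^{-1})\bigr),$$
which is a nonzero multiple of $z^4/\gs^3$ for small $z>0$. Hence $|\det \fW_{a,\gb_{\gs}}(\gs)|$ is bounded below by a positive constant multiple of $z^4/\gs^3$ for all sufficiently small $z>0$ and sufficiently large $\gs$, completing the proof.
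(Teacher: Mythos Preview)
Your approach is genuinely different from the paper's and the core idea is reasonable, but there is a real gap in the error control that you do not address.

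The paper blocks $\fW_{a,\gb}$ by (polynomial vs.\ oscillatory) columns and $(\Phi(a),\Phi(\gb))$ vs.\ $(\Phi'(a),\Phi'(\gb))$ rows, obtaining
$\det\fW_{a,\gb}=\gs^{-2}\det(M_{11})\det(M_{21})\det\bigl(\gs^{-1}M_{21}^{-1}M_{22}-M_{11}^{-1}M_{12}\bigr)(1+O(\gs^{-1}))$.
No endpoint differences appear, so the $(1+O(\gs^{-1}))$ correction from Proposition~\ref{LWest} carries through multiplicatively; the $2\times 2$ blocks $M_{11},M_{21}$ are handled by Lemma~\ref{proddiff}, and only coarse information (certain entries bounded away from zero, one entry tending to zero) is needed for the Schur-type factor.

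Your route instead forms endpoint \emph{differences} in columns and relies on a delicate cancellation: the naive $O(z/\gs^{2})$ contribution to $(D-BA^{-1}C)_{21}$ vanishes by the leading-order relation $\psi(a)=-i\gs\phi(a)/c(a)$, and you then claim the next term $i\phi(a)\alpha^{3}/3=O(z^{3}/\gs^{2})$ survives with relative error $O(\gs^{-1})$. But Proposition~\ref{LWest} gives only $X_{LW}=(I+O(\gs^{-1}))\cX$, and for the polynomial columns the entries $\Phi_{0,j}(r),\Phi_{0,j}'(r)$ of $\cX$ are $O(1)$, so the \emph{absolute} error in each entry of $A$ and of the undifferenced $C$-entries is $O(\gs^{-1})$. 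Propagated through $B_{22}(A^{-1}C)_{21}$ with $B_{22}=2\psi(a)=O(\gs^{-1})$, this would contribute $O(\gs^{-2})$ to $(BA^{-1}C)_{21}$, which \emph{dominates} your claimed $O(z^{3}/\gs^{2})$ term for small $z$. The cross-term $(1,2)\cdot(2,1)$ in the determinant would then be $O(z/\gs)\cdot O(\gs^{-2})=O(z/\gs^{3})$, swamping your asserted $\tfrac{1}{3}\phi\psi\alpha^{4}=O(z^{4}/\gs^{3})$.

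This gap is not unfixable: since the error function in the polynomial $\Phi$-entries has $r$-derivative equal (via the ODE) to the error in the corresponding $\Phi'$-entry, itself $O(\gs^{-1})$, the error in the \emph{difference} $\Phi_{0,j}(\gb)-\Phi_{0,j}(a)$ is actually $O((\gb-a)\gs^{-1})=O(z/\gs^{2})$, not $O(\gs^{-1})$. A similar argument controls the oscillatory differences. Carrying this through, the errors become subdominant and your leading term survives. But you nowhere supply this argument; the bare assertion of a $(1+O(\gs^{-1}))$ factor on the final determinant is unjustified precisely because of the cancellation you exploit. The paper's decomposition sidesteps this issue entirely by never subtracting nearly-equal quantities.
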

\begin{proof}
We set $\Theta(a)=0$ and apply clear column operations to
compute 
$\det \fW_{a,\gb} =$
$$\gs^{-2}\det(M_{11})\det(M_{21})
\det(\gs^{-1}M_{21}^{-1}M_{22}-M_{11}^{-1}M_{12})(1+O(\gs^{-1}))$$ 
as $\gs\rightarrow \infty$ where
$$
M_{11}(\gb)=\begin{bmatrix}\Phi_{0,1}(a)&\Phi_{0,2}(a)\\\Phi_{0,1}(\gb)&\Phi_{0,2}(\gb)\end{bmatrix},
M_{21}(\gb)=\begin{bmatrix}\Phi_{0,1}'(a)&\Phi_{0,2}'(a)\\ \Phi_{0,1}'(\gb)&\Phi_{0,2}'(\gb)\end{bmatrix},
$$
$$M_{12}(\gb;\gs)=\gk\begin{bmatrix} -c(a)f(a) &0\\ -c(\gb)f(\gb)\cos(\gs\Theta(\gb))&c(\gb)f(\gb) \sin(\gs\Theta(\gb))
\end{bmatrix}
$$
$$
M_{22}(\gb;\gs)=\gk\gm\begin{bmatrix}0&f(a)\\ f(\gb)\sin(\gs\Theta(\gb))&f(\gb)\cos(\gs\Theta(\gb))
\end{bmatrix}
$$ for $f(r)\df \frac{\sqrt{\rho(r)}}{r\sqrt{c(r)}},$ as the invertibility of both $M_{11},M_{21}$ is justified by Lemma \ref{proddiff}.
Moreover, with $\gep=\gb-a$ the following hold (mod $\gep^2$ as $\gep\rightarrow 0^+$):
$\det(M_{11}) \equiv-(2\ell +1)a^{-2}\gep;$  $\det(M_{21})\equiv 2\Lambda_{\ell}^2a^{-3}\gep;
$
$$\ad(M_{11})\begin{bmatrix}1\\1\end{bmatrix}=\begin{bmatrix}\Phi_{0,2}'(a)\\\Phi_{0,1}'(a)\end{bmatrix}\gep;
\ad(M_{21})\begin{bmatrix}1\\1\end{bmatrix}=\begin{bmatrix}\Phi_{0,2}''(a)\\\Phi_{0,1}''(a)\end{bmatrix}\gep.$$
 Hence, there are positive constants $K_i$ so that
$$\ad(M_{11}) M_{12}\equiv
K_1\begin{bmatrix}-\Phi_{0,2}^{'}(a)\gep&O(\gep)\\1-\cos(\gs\Theta(\gb))+O(\gep)&\sin(\gs\Theta(\gb))+O(\gep)\end{bmatrix} 
$$
$$\ad(M_{21}) M_{22}\equiv
K_2\begin{bmatrix}-\Phi_{0,2}'(a)\sin(\gs\Theta(\gb))&O(\gep)\\\sin(\gs\Theta(\gb))&\cos(\gs\Theta(\gb))-1\end{bmatrix}
$$

Now, choosing $\gb=\gb_{\gs},$ it follows that $\gs^{-1}M_{21}^{-1}M_{22}=O(1)$ (as $\gs\rightarrow \infty$): Morever,
there are positive constants $\delta_0,\delta_1,\delta_2$ so that 
$$\delta_0z<|\sin(\gs\Theta(\gb_{\gs}))|<\delta_1 z; \,|1-\cos(\gs\Theta(\gb_{\gs}))|<\delta_2 z^2$$ 
uniform for sufficiently large $\gs$ and small $z.$
Then, we have that for 
functions $\ga_i(z)$ and positive constant $\gd$
$$
M_{11}^{-1}M_{12}-\gs^{-1}M_{21}^{-1}M_{22}
= \begin{bmatrix}\ga_1(z)&O(1)\\\ga_3(z)\gs&\ga_2(z)\gs\end{bmatrix}+O(\gs^{-1})
$$  (as $\gs\rightarrow\infty$) 
 where $\ga_3(z)\rightarrow 0$ and  $|\ga_i(z)|$ $>\gd$ for $i=1,2$ as $z\rightarrow 0^+;$ hence, the desired result is now clear.\end{proof}

\begin{prop}\label{lastprop}
For sufficiently large $\gs$ the boundary conditions of the form $\Pi_{\gb_1}(\gs)$ and $\Pi_{j,k;\gb_2}$ 
on $\vec{Y}$ and $\vec{\Phi},$ respectively, are simultaneously admissible, each on any subintervals $\cI_{\gb_1},\cI_{\gb_2}$ $\subseteq$ $[a,b].$
\end{prop}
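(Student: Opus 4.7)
The plan is to parallel the proof of Proposition \ref{lrgsig}, this time leveraging the block-decoupled form of the fundamental solution $X$ recorded in Corollary \ref{altfundcor}. In that decomposition the columns $(\vec{X}_1,\vec{X}_2)$ carry the residual $\vec{\Phi}$-block $\fO_0$ at leading order while their $\vec{Y}$-components are only $O(\gs^{-2}),$ whereas $(\vec{X}_3,\vec{X}_4)$ carry the residual $\vec{Y}$-block $\cY_0$ with $\vec{\Phi}$-components of size $O(\gs^{-2}).$ This near-block-antidiagonal structure is what makes simultaneous admissibility tractable.

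First I would form the $4\times 4$ matrix $\fV(\gs),$ analogous to (\ref{vs}), whose four rows are $E_{j_1}^{\top}\vec{Y}(a,\gs),$ $E_{k_1}^{\top}\vec{Y}(\gb_1,\gs),$ $E_{j_2}^{\top}\vec{\Phi}(a,\gs),$ $E_{k_2}^{\top}\vec{\Phi}(\gb_2,\gs),$ where $(j_1,k_1)$ encodes the choice underlying $\Pi_{\gb_1}(\gs)$ and $(j_2,k_2)$ that underlying $\Pi_{j_2,k_2;\gb_2}.$ Corollary \ref{altfundcor} then presents $\fV$ in $2\times 2$ block form with diagonal blocks of size $O(\gs^{-2})$ and off-diagonal blocks $\cY_0^{(1)}+O(\gs^{-1})$ (top-right; selected rows of $\cY_0$ at $a$ and $\gb_1$) and $\fO_0^{(2)}+O(\gs^{-1})$ (bottom-left; selected rows of $\fO_0$ at $a$ and $\gb_2$). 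Swapping the top two rows past the bottom two (an even permutation, sign $+1$) and applying the block-triangular determinant formula, I expect
\begin{equation}\notag
\det\fV(\gs)=\det\cY_0^{(1)}(\gs)\cdot\det\fO_0^{(2)}(\gs)\bigl(1+O(\gs^{-1})\bigr)+O(\gs^{-4}),
\end{equation}
so the task reduces to bounding both minors away from zero.

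For $\det\fO_0^{(2)}:$ the entries of $\fO_0$ from (\ref{fO0}) are constants times the distinct powers $r^{-\ell-1},r^{\ell},r^{-\ell-2},r^{\ell-1},$ so Lemma \ref{proddiff}, applied with the exponent pair dictated by $(j_2,k_2),$ yields $|\det\fO_0^{(2)}|>0$ for any $a<\gb_2\leq b,$ just as in Proposition \ref{lastprop1}. For $\det\cY_0^{(1)}:$ the entries of $\cY_0$ from (\ref{lrgsigmacor}) reduce the minor, after factoring out the positive amplitudes in $r,c,\rho,$ to an expression of the form $\sin\bigl(\gs(\Theta(\gb_1)-\Theta(a))\bigr)$ (in the symmetric case $\vartheta_1=\vartheta_2$) or a linear combination of such sines and cosines. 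Taking $\gb_1=a+z\gs^{-1}$ drives $\gs(\Theta(\gb_1)-\Theta(a))\to z/c(a)$ as $\gs\to\infty,$ a fixed small positive quantity, which keeps the relevant trigonometric factor bounded away from zero uniformly in large $\gs;$ any residual degeneracy for a specific $(j_1,k_1)$ is removed by adjusting the free constants $\vartheta_j$ in (\ref{lrgsigmacor}) (or, equivalently, by permitting $\Pi(\gs)$ to vary with $\gs,$ as already allowed in the statement).

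The main obstacle is bookkeeping the sizes of the competing contributions: $\det\cY_0^{(1)}$ picks up the factor $\gb_1-a=z\gs^{-1}$ from the closeness of the two evaluation points, plus an extra $\gs^{-1}$ per selected $u$-row through the prefactor $r/(\gs\sqrt{c\rho})$ in the top row of $\cY_0,$ so in the worst index combination $|\det\cY_0^{(1)}|$ is only $O(z\gs^{-3}).$ The $O(\gs^{-4})$ correction from the small diagonal blocks is then lower-order by a factor $O((\gs z)^{-1}),$ so for each fixed small $z>0$ the leading term dominates for all sufficiently large $\gs,$ giving $\det\fV(\gs)\neq 0.$ Simultaneous admissibility of the two sets of boundary conditions then follows from the elementary unique-solvability argument recalled immediately after (\ref{vs}), completing the proof.
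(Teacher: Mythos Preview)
Your approach is essentially the paper's: exploit the block-antidiagonal structure of Corollary \ref{altfundcor} so that $\det\fV$ factors, to leading order, into a $\vec{\Phi}$-minor (handled by Lemma \ref{proddiff}) and a $\vec{Y}$-minor (handled by a trigonometric/phase argument). The paper carries this out by writing down the explicit leading term
\[
\det\fV_{\vec{r}}(\gs)=\gs^{-1}\,\frac{c(r_2)f(r_2)}{c(r_1)f(r_1)}\,\bigl|\fw(r_3,r_4)\cos\bigl(\gs\Theta_{r_1}(r_2)+\vartheta_1-\vartheta_2\bigr)\bigr|+O(\gs^{-2}),
\]
identifies your $\det\fO_0^{(2)}$ with $\fw(r_3,r_4)$, and keeps the cosine away from zero by the $\gs$-dependent choice of $\vartheta_j$ (equivalently, the index choice underlying $\Pi(\gs)$), exactly parallel to (\ref{os})--(\ref{dets}).

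One point to tighten: your detour through the special scaling $\gb_1=a+z\gs^{-1}$ is unnecessary and, as written, does not match the statement, which asserts admissibility for \emph{any} subinterval $\cI_{\gb_1}\subseteq[a,b]$. The mechanism you relegate to a fallback---varying $\vartheta_j$ (equivalently $\Pi(\gs)$) with $\gs$---is precisely what the paper uses as its primary argument, and it works for every fixed $\gb_1$. Once that is adopted, the leading $\vec{Y}$-minor is of order $\gs^{-1}$ with a nonvanishing coefficient against an $O(\gs^{-2})$ error, so your worst-case $O(z\gs^{-3})$ versus $O(\gs^{-4})$ bookkeeping is not needed.
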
 
\begin{proof} From Corollary \ref{altfundcor} is easy to show that $\fV_{\vec{r}}$ as in (\ref{vs}) satisfies
$$\det\fV_{\vec{r}}(\gs)=\gs^{-1}\frac{c(r_2)f(r_2)}{c(r_1)f(r_1)}|\fw(r_3,r_4)\cos(\gs\Theta_{r_1}(r_2)+\vartheta_1-\vartheta_2)|+O(\gs^{-2}).$$
for $\fw$ as in (\ref{fsfw}). It follows that $\det\fV_{\vec{r}}(\gs)\neq 0$ from Lemma \ref{proddiff} and from choices of $\vartheta_j$ resulting from arguments parallel to those of (\ref{os}) and (\ref{dets}).\end{proof}
\bibliographystyle{plain}

\end{document}